\documentclass[vecphys]{svmult}
\usepackage{graphicx}        
\usepackage[bottom]{footmisc}

\def\a{\alpha}
\def\b{\beta}
\def\g{\gamma}
\def\d{\delta}

\def\th{\vartheta}

\def\l{\lambda}

\def\r{\rho}
\def\s{\sigma}
\def\t{\tau}
\def\ph{\varphi}
\def\o{\omega}

\def\D{\Delta}

\def\O{\Omega}
\def\half{\textstyle{1\over2}}

\def\quarter{\textstyle{1\over4}}
\def\dim{\rm dim}
\def\rank{{\rm rank}}
\def\and{\hbox{ and }}
\def\And{\quad\hbox{and}\quad}
\def\red{{\rm red}}
\def\black{{\rm black}}
\def\norm#1{\|#1\|}
\def\bar#1{{\overline{#1}}}

\def\sn{\smallskip\noindent}
\def\mn{\medskip\noindent}

\def\pn{\noindent}

\def\implies{\Longrightarrow}

\def\Implies{\quad\Longrightarrow\quad}

\def\A{{\cal A}}
\def\B{\ifmmode{\cal B}\else {$\cal B$} \fi}
\def\C{\ifmmode{\cal C}\else {$\cal C$} \fi}
\def\D{\ifmmode{\cal D}\else {$\cal D$} \fi}
\def\E{\ifmmode{\cal E}\else {$\cal E$} \fi}
\def\F{\ifmmode{\cal F}\else {$\cal F$} \fi}
\def\H{{\cal H}}
\def\I{\ifmmode{\cal I}\else {$\cal I$} \fi}
\def\J{\ifmmode{\cal J}\else {$\cal J$} \fi}
\def\K{\ifmmode{\cal K}\else {$\cal K$} \fi}
\def\L{\ifmmode{\cal L}\else {$\cal L$} \fi}
\def\M{\ifmmode{\cal M}\else {$\cal M$} \fi}
\def\N{\ifmmode{\cal N}\else {$\cal N$} \fi}
\def\P{\ifmmode{\cal P}\else {$\cal P$} \fi}
\def\R{\ifmmode{\cal R}\else {$\cal R$} \fi}
\def\V{\ifmmode{\cal V}\else {$\cal V$} \fi}
\def\X{\ifmmode{\cal X}\else {$\cal X$} \fi}
\def\Y{\ifmmode{\cal Y}\else {$\cal Y$} \fi}
\def\Z{\ifmmode{\cal Z}\else {$\cal Z$} \fi}
\let\SS=\S
\def\paragraph{\SS}
\def\S{\ifmmode{\cal S}\else {$\cal S$} \fi}
\def\T{\ifmmode{\cal T}\else {$\cal T$} \fi}
\def\U{\ifmmode{\cal U}\else {$\cal U$} \fi}
\def\Z{\ifmmode{\cal Z}\else {$\cal Z$} \fi}

\def\EE{\hbox{\rm I\hskip -0.5em E}}
\def\HH{\ifmmode{I\hskip -0.3em H}
    \else{\hbox{$I\hskip -0.3em H$}}\fi}
\def\NN{\hbox{\rm I\hskip -0.2em N}}
\def\Nn{\ifmmode{I\hskip -0.3em N_0}
    \else{\hbox{$I\hskip -0.3em N_0$}}\fi} 
\def\PP{\hbox{\rm I\hskip -0.5em P}}
\def\RR{\hbox{\rm I\hskip -0.5em R}}
\def\eins{{\mathchoice {\rm 1\mskip-4mu l} {\rm 1\mskip-4mu l}
                      {\rm 1\mskip-4.5mu l} {\rm 1\mskip-5mu l}}} 
\def\11{\ifmmode{\eins} \else {$\eins$} \fi}

\def\one{\eins}
\def\done{\ifmmode{\cdot\eins} \else {$\cdot\eins$} \fi}

\def\ZZm{{\mathchoice
         {\hbox{$\sans\textstyle Z\kern-0.4em Z$}}%
         {\hbox{$\sans\textstyle Z\kern-0.4em Z$}}%
         {\hbox{$\bf\scriptstyle Z$}}%
         {\hbox{$\bf\scriptscriptstyle Z$}}%
        }}                                         
\def\ZZ{{\ifmmode{\ZZm} \else {$\ZZm$} \fi}}

\def\CCm{{\mathchoice                                     
   {\setbox0=\hbox{$\displaystyle\rm C$}\hbox{\hbox to0pt{\kern0.4\wd0
                                           \vrule height0.9\ht0\hss}\box0}}
   {\setbox0=\hbox{$\textstyle   \rm C$}\hbox{\hbox to0pt{\kern0.4\wd0
                                           \vrule height0.9\ht0\hss}\box0}}
   {\setbox0=\hbox{$\scriptstyle \rm C$}\hbox{\hbox to0pt{\kern0.4\wd0
                                           \vrule height0.9\ht0\hss}\box0}}
   {\setbox0=\hbox{$\scriptscriptstyle\rm C$}\hbox{\hbox to0pt{\kern0.4\wd0
                                           \vrule height0.9\ht0\hss}\box0}}}}
\def\CC{{\ifmmode{\CCm} \else {$\CCm$} \fi}}

\def\TTm{{\mathchoice
    {\setbox0=\hbox{$\displaystyle\rm T$}
        \hbox{\hbox to0pt{\kern0.3\wd0\vrule height0.9\ht0\hss}\box0}}
    {\setbox0=\hbox{$\textstyle\rm T$}
        \hbox{\hbox to0pt{\kern0.3\wd0\vrule height0.9\ht0\hss}\box0}}
    {\setbox0=\hbox{$\scriptstyle\rm T$}
        \hbox{\hbox to0pt{\kern0.3\wd0\vrule height0.9\ht0\hss}\box0}}
    {\setbox0=\hbox{$\scriptscriptstyle\rm T$}
        \hbox{\hbox to0pt{\kern0.3\wd0\vrule height0.9\ht0\hss}\box0}}
        }}
\def\TT{\ifmmode{\TTm} \else {$\TTm$} \fi}

\def\sk#1#2{\sin^2(\a_{#1}-\b_{#2})}

\def\set#1#2{\bigl\{\>#1\>\big|\>#2\>\bigr\}}

\def\bra#1{\langle#1|}
\def\ket#1{|#1\rangle}
\def\inp#1#2{\langle#1,#2\rangle}
\def\Inp#1#2{\left\langle#1,#2\right\rangle}

\def\set#1#2{\bigl\{\>#1\>\big|\>#2\>\bigr\}}

\def\alg{{\rm alg}\,}
\def\norm#1{\left\|\,#1\,\right\|}
\def\norms#1{\left\|\,#1\,\right\|^2}

\def\ten{\otimes}

\def\son#1{\sum_{#1=1}^n}
\def\sok#1{\sum_{#1=1}^k}
\def\sol#1{\sum_{#1=1}^l}
\def\som#1{\sum_{#1=1}^m}
\def\sonij{\sum_{i,j=1}^n}

\def\tr{{\rm tr}\,}
\def\sp{{\rm sp}}
\def\id{{\rm id}\,}
\def\for{{\rm for}}
\def\where{\quad\hbox{where}\quad}
\def\with{\quad\hbox{with}\quad}
\def\tuple#1_#2{#1_1,#1_2,\ldots,#1_{#2}}
\def\tup#1_#2{#1_1,\ldots,#1_{#2}}
\def\At{{\widetilde{\cal A}}}
\def\Ht{{\widetilde{\cal H}}}

\def\pst{{\widetilde\psi}}
\def\Astat{{\A^*_{+,1}}}
\def\Bstat{{\B^*_{+,1}}}

\def\afbeelding#1#2#3#4{
\begin{figure}
\centering
\includegraphics[height=#1]{#2}
\caption{#3}
\label{#4}
\end{figure}}

\def\correspondence#1#2{\smallskip%
           \hbox{\vtop{\hsize=0.45\hsize{\noindent#1}\vfill}
                 \vtop{\hsize=0.05\hsize{\noindent \hfil \^= \hfil}\vfill}
                 \vtop{\hsize=0.5\hsize{\noindent#2}\vfill}
                }\smallskip}

\def\Heis#1from#2to#3{#3\mathop{\longleftarrow}\limits^{#1}#2}
\def\Schr#1from#2to#3{#2^*\mathop{\longrightarrow}\limits^{#1^*}#3^*}
\def\mapsby#1{{\mathop{\longmapsto}\limits^{#1}}}

\font\caps=cmcsc10

\begin{document}

\title*{Quantum Probability and Quantum Information Theory}
\titlerunning{Quantum Probability}
\author{Hans Maassen\inst{}}
\institute{Radboud University, Toernooiveld 1, 6525 ED Nijmegen, Netherlands,
\texttt{maassen@math.ru.nl}}
\maketitle

\section{Introduction}
\label{sec:Intro}
From its very birth in the 1920's,
quantum theory has been characterized by a certain strangeness:
its seems to run counter to the intuitions that we humans have about the
world we live in.

According to these `realistic' intuitions all things have their
definite place and sharply determined qualities, such as speed,
color, and weight. Quantum theory, however, refuses to precisely
pinpoint them. With respect to this apparent shortcoming of the
theory different points of view can be taken. It could be suspected
that quantum theory is incomplete, that it gives a coarse
description of a reality that is actually more refined. This is the
viewpoint once taken by Einstein, and it still has adherents today.
It calls for a search for finer mathematical models of physical
reality, based on classical probability,
often referred to as `hidden variable models'. One such
attempt is Bohm's theory of non-relativistic quantum mechanics.

\noindent
However, the work of John Bell in the 60's and of Alain
Aspect in the 70's and 80's strongly favors the opposite point of view: their
work has made clear that such models with a classical probabilistic
structure are necessarily afflicted with a certain weakness: they
must at least allow {\it action at a distance}. This we regard as a
bad property for a theory which aims to describe a physical world
where no signals have been observed to travel faster than light.
Apart from that, the hidden variable theories which have been found
so far are highly artificial and cannot be tested against quantum
mechanics since they do not predict any new phenomena.

\noindent It is for these reasons that we decide to accept
quantum theory with its inherent strangeness,
and are prepared to modify probability theory accordingly.


\subsection{Quantum Probability}
\label{sec:2} So quantum mechanics does not predict the results of
physical experiments with certainty, but calculates probabilities
for their possible outcomes.

Now, the classical mathematical theory of probability obtained a unified
formulation in the 1930's, when Kolmogorov introduced his axioms,
defining the universal structure $(\O,\Sigma,\PP)$ of a probability
space.
For a long time this theory of probability
(dealing with probability distributions, stochastic processes,
Markov chains, martingales, etc.) remained completely separate
from the mathematical development of quantum mechanics (involving
vectors in a Hilbert space, hermitian operators, unitary
transformations, and such like).

In the 1970's and 1980's people around Accardi, Lewis, Davies,
K\"ummerer, building on ideas of von Neumann's and Segal's,
developed a unified framework,
a generalized, `non-commutative', probability theory, in which classical
probability theory and quantum mechanics can be discussed in unison.
It consists of ordinary Hilbert space quantum theory,
with the emphasis moved towards operators on Hilbert space,
and the algebras which they generate.
The main objective of this course is to sketch the outlines of this framework,
and show its usefulness for information theory.

\subsection{Quantum Information}
In Shannon's (classical) information theory, a single unit,
the {\it bit}, serves to quantify all forms of information,
be it in print, in computer memory, CD-ROM or strings of DNA. Such a
single unit suffices, because different forms of information can be
converted into each other by copying, according to fixed `exchange
rates'.
The physical states of quantum systems, however, cannot be copied
into such `classical' information, but {\it can} be converted into
each other. This leads to a new unit of information: the {\it
qubit}.

Quantum Information theory studies the handling of this new form of
information by information-carrying channels.
We shall treat the basic properties of these channels,
and some impossibilities as well as new possibilities
connected with quantum information.
The impossibility of copying makes quantum information an ideal
means to establish secrecy \cite{Bruss}.

\subsection{Quantum Computing}
It was Richard Feynman who first thought of actually {\it employing}
the strangeness of quantum mechanics to do things that
would be impossible in a classical world.

The idea was developed in the 1980's and 1990's by David Deutsch,
Peter Shor, and many others into a flourishing branch of science
called `quantum computing': how to make quantummechanical systems
perform calculations more efficiently than ordinary computers can
do. This research is still in a predominantly theoretical stage: the
quantum computers actually built are as yet extremely primitive and
can by no means compete with even the simplest pocket calculator,
but expectations are high \cite{Kempe}.

\subsection{This Course}
We start with an introduction to quantum probability.
In Section \ref{sec:Bell}
we demonstrate the `strangeness' of quantum phenomena
by very simple polarization experiments,
culminating in Bell's famous inequality, tested in Aspect's experiment.
Bell's inequality is a statement in classical probability that is
violated in quantum probability and in reality.

Taking polarizers as our starting point,
in Sections \ref{sec:MathMod} and \ref{sec:QuantProb} we build up the new
probability theory in terms of algebras of operators on a Hilbert
space.
In Section \ref{sec:Operations}
operations on these algebras will be characterized,
and some aspects will be discussed in which they differ from classical
physical operations.
They are subject to certain strange limitations:
the impossibility of copying, of coding information into bits,
of jointly measuring incompatible observables,
of observation without perturbing the object
(Cf. Section \ref{sec:QuantImp}.)
But they also open up surprising possibilities:
entangling remote systems, teleportation of this
entanglement, sending two bits in a single qubit.
(Cf. Section \ref{sec:QuantNov}.)
We shall leave
the further luring perspectives to other courses in this School:
highly efficient algorithms for sorting, Fourier transformation and
factoring very large numbers \cite{Kempe}.


\section{Why Classical Probability does not Suffice}\label{sec:Bell}
(This section is based on \cite{KuMa}.)


\subsection{An Experiment with Polarizers}
To start with, we consider a simple experiment. In a beam of light
of a fixed color we put a pair of polarizing filters, each of which
can be rotated around the axis formed by the beam. As is well known,
the light falling through both filters changes in intensity when the
filters are rotated relative to each other. Starting from the
orientation where the resulting intensity is maximal, and rotating
one of the filters through an angle $\a$, the light intensity
decreases with $\a$, vanishing for $\a=\half\pi$. If we call the
intensity of the beam before the filters $I_0$, after the first
$I_1$, and after the second $I_2$, then $I_1=\half I_0$, (we assume
the original beam to be unpolarized), and
\begin{equation}\label{CosSquare}
     I_2=I_1\cos^2\a.
\end{equation}

\afbeelding{3cm}{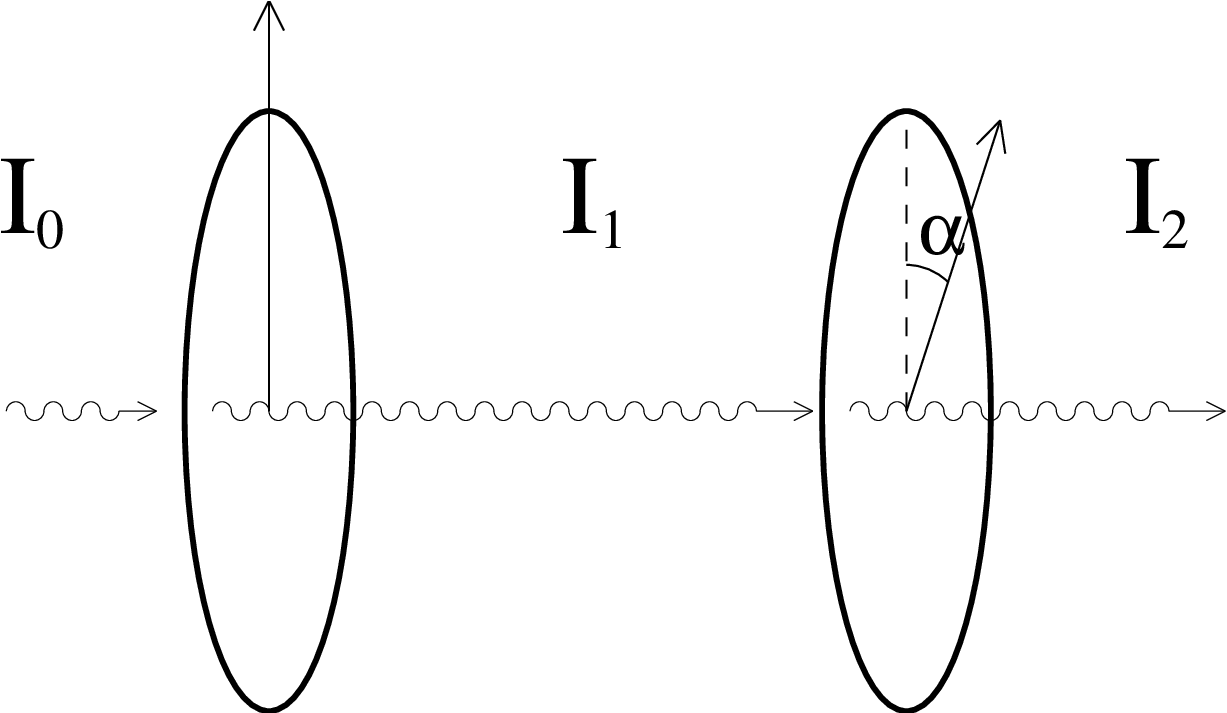}{Two polarizers in conjunction}{OptBench}

So far the phenomenon is described well by classical physics. During
the last century, however, it has been observed that for very low
intensities (monochromatic) light comes in small packages, which
were called {\it photons}, whose energy depends on the color, but
not on the total intensity.

\noindent So the intensity must be proportional to the {\it number}
of these photons, and formula (1) must be given a statistical
meaning: a photon passing through the first filter has a probability
$\cos^2\a$ to pass through the second. Formula (1) then only holds
on the average, for large numbers of photons.

\smallskip
Thinking along the lines of classical probability, we may associate
to a polarization filter in the direction $\a$ a random variable
$P_\a$, taking the value $P_\a(\o)=0$ if the photon $\o$ is absorbed
by the filter and $P_\a(\o)=1$ if it passes through. For two filters
in the directions $\a$ and $\b$ these random variables then should
be correlated as follows:
\begin{equation}\label{CosSquare}
\EE(P_\a P_\b)\ =\ \PP[P_\a=1\and P_\b=1]\ =\ \half\cos^2(\a-\b).
\end{equation}
Here we hit on a difficulty: the function on the right hand side is
not a possible correlation function! This can be seen as follows.
Take three polarizing filters, having polarization directions
$\a_1$, $\a_2$ and $\a_3$ respectively. Put them on the optical
bench in pairs. They should give rise to random variables $P_1$,
$P_2$ and $P_3$ satisfying
    $$\EE(P_i P_j)=\half\cos^2(\a_i-\a_j).$$

\begin{proposition} {\rm(Bell's 3 variable inequality)}
For any three 0-1-valued random variables $P_1$, $P_2$, and $P_3$ on a
probability space $(\O,\PP)$ the following inequality holds:
   $$\PP[P_1=1,P_3=0]\ \ \le\ \ \PP[P_1=1,P_2=0]+\PP[P_2=1,P_3=0].$$
\end{proposition}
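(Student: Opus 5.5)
The plan is to prove the inequality pointwise on $\O$ and then integrate. The set on the left, $\{P_1=1,\,P_3=0\}$, should be covered by the two events on the right. Since $\PP$ is monotone and subadditive, this set inclusion will give the inequality immediately.

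Concretely, fix $\o\in\O$ with $P_1(\o)=1$ and $P_3(\o)=0$, and split into two cases according to the value $P_2(\o)\in\{0,1\}$.
If $P_2(\o)=0$, then $\o\in\{P_1=1,\,P_2=0\}$.
If $P_2(\o)=1$, then $\o\in\{P_2=1,\,P_3=0\}$.
Hence
$$\{P_1=1,P_3=0\}\ \subset\ \{P_1=1,P_2=0\}\cup\{P_2=1,P_3=0\},$$
and applying $\PP$ together with $\PP[A\cup B]\le\PP[A]+\PP[B]$ yields the proposition.

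An equivalent route, closer to the expectation language used just before the statement, is to check the pointwise inequality of indicator functions
$$P_1(1-P_3)\ \le\ P_1(1-P_2)+P_2(1-P_3)$$
on each of the eight possible value triples in $\{0,1\}^3$, and then take $\EE$ of both sides. Only the triple $(1,\cdot,0)$ makes the left side nonzero, and there the right side equals $1$ whatever $P_2$ is.

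There is no real obstacle here. The only point worth stressing is that the argument uses the existence of a joint value $P_2(\o)$ alongside $P_1(\o)$ and $P_3(\o)$ for the same $\o$. This is precisely the classical assumption that the later quantum discussion will call into question.
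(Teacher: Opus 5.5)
Your proof is correct and follows the paper's own argument: the paper splits $\PP[P_1=1,P_3=0]$ according to the value of $P_2$ into two disjoint terms and bounds each by monotonicity. That is exactly your case split on $P_2(\o)$, with your subadditivity step playing the role of the paper's monotonicity bound.
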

\begin{proof}
\begin{eqnarray*}
\PP[P_1=1,P_3=0]&=&\PP[P_1=1,P_2=0,P_3=0]
                                    +\PP[P_1=1,P_2=1,P_3=0]\cr
                   &\le&\PP[P_1=1,P_2=0]+\PP[P_2=1,P_3=0].
                   \qquad\qquad\qquad\qed\cr
\end{eqnarray*}
\end{proof}
In our example, we have
\begin{eqnarray*}
   \PP[P_i=1,P_j=0] \ &=& \ \PP[P_i=1]-\PP[P_i=1,P_j=1]\cr
                        &=& \ \half-\half\cos^2(\a_i-\a_j)
                         = \half\sin^2(\a_i-\a_j).\cr
\end{eqnarray*}
Bell's inequality thus reads
   $$\half\sin^2(\a_1-\a_3)\le\half\sin^2(\a_1-\a_2)+\half\sin^2(\a_2-\a_3),$$
which is clearly violated for the choices $\a_1=0$, $\a_2={1\over6}\pi$
and $\a_3={1\over3}\pi$, where it says that
   $${3\over8}\le{1\over8}+{1\over8}.$$
This example suggests that classical probability cannot even
describe this simple experiment!

\medskip\noindent{\it Remark}

\noindent
The above calculation could be summarized as follows:
we are in fact looking for a family of 0-1-valued random variables
$(P_\a)_{0\le\a<\pi}$
with $\PP[P_\a=1]={1\over2}$, satisfying the requirement that
\begin{equation}\label{BellDistance}
\PP[P_\a\ne P_\b]=\sin^2(\a-\b).
\end{equation}
Now, on the space of 0-1-valued random variables on a probability space
the function $(X,Y)\mapsto\PP[X\ne Y]$ equals the $L^1$-distance of $X$
and $Y$:
   $$\PP[X\ne Y]=\int_\O|X(\o)-Y(\o)|\,\PP(d\o)=\norm{X-Y}_1.$$
On the other hand, the function $(\a,\b)\mapsto\sin^2(\a-\b)$ does
not satisfy the triangle inequality for a distance function on the
interval $[0,\pi)$. Therefore no family $(P_\a)_{0\le\a<\pi}$ exists
which meets the above requirement (\ref{BellDistance}).

\subsection{An improved experiment}
On closer inspection the above example is not very convincing.
Indeed, when two polarizers are arranged on the optical bench, why
should not the random variable for the second polarizer depend on
the angle of the first? The correlation in (\ref{CosSquare}) would
then read
   $$\EE(P_\a P_{\a,\b})\ =\ \PP[P_\a=1\and P_{\a,\b}=1]\ =\
   \half\cos^2(\a-\b),$$
which can easily be satisfied, and the whole refutation
collapses.

\noindent
So we should do a better experiment. We must let the filters act on
the photons without influence on each other. Maybe we can separate
them spatially?

\noindent Here a clever technique from quantum optics comes to our
aid. It is possible to build a device that produces {\it pairs} of
photons, such that the members of each pair move in opposite
directions and show opposite behavior towards parallel polarization
filters: if one passes the filter, then the other is surely
absorbed. The device contains Calcium atoms, which are excited by a
laser to a state they can only leave under emission of such a pair.

\afbeelding{3cm}{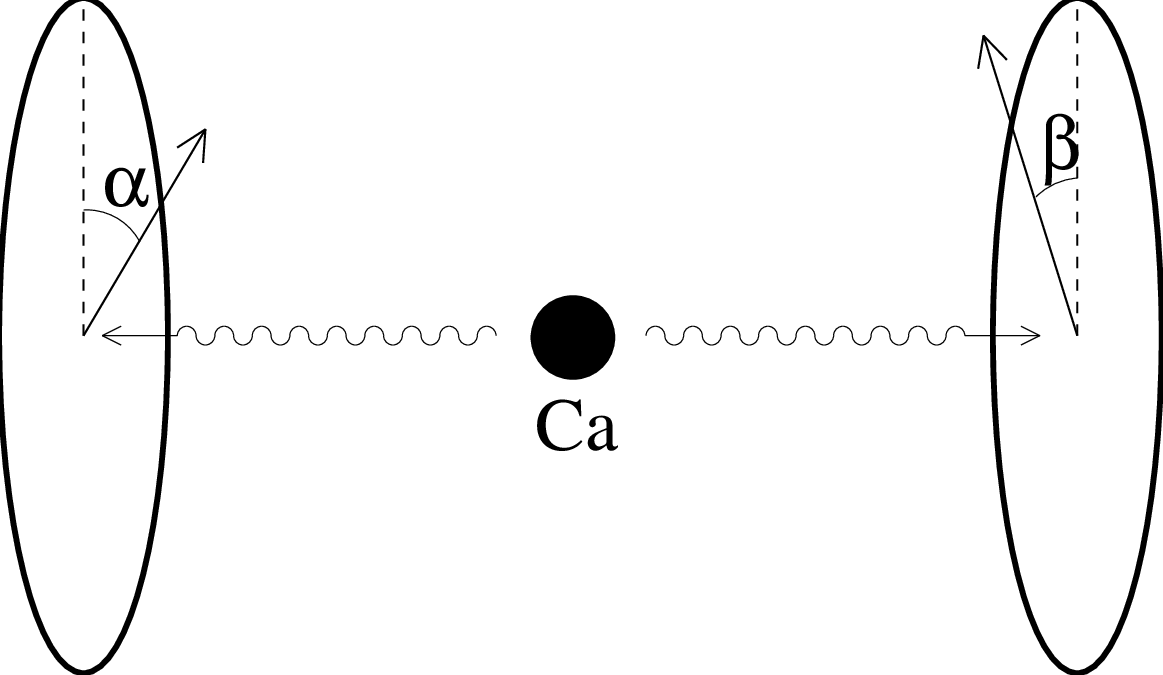}{Photon pair production}{Calcium}

\bigskip\noindent
With these photon pairs, the very same experiment can be performed,
but this time the polarizers are far apart, each one acting on its
own photon. The same correlations are measured, say first between
$P_{\a_1}$ on the left and $P_{\a_2}$ on the right, then between
$P_{\a_1}$ on the left and $P_{\a_3}$ on the right, and finally
between $P_{\a_2}$ on the left and $P_{\a_3}$ on the right. The same
outcomes are found, violating Bell's three variable inequality, thus
strengthening the case against classical probability.

\subsection{The decisive experiment}\label{decisive}
Advocates of classical probability could still find serious fault
with the argument given so far. Indeed, do we really {\it have to}
assume that we are measuring the same random variable $P_{\a_2}$ on
the right as later on the left? Is it really true that the
polarizations in these pairs are exactly opposite? There could exist
a probabilistic explanation of the phenomena without this
assumption.

\smallskip
So the argument has to be tightened still further. This brings us to
the experiment which was actually performed by A. Aspect in Orsay
(near Paris) in 1982 \cite{Aspect}. In this experiment a random
choice out of two different polarization measurements was performed
on each side of the pair-producing device, say in the direction
$\a_1$ or $\a_2$ on the left and in the direction $\b_1$ or $\b_2$
on the right, giving rise to {\it four} random variables
$P_1:=P(\a_1)$, $P_2:=P(\a_2)$ and $Q_1:=Q(\b_1)$, $Q_2:=Q(\b_2)$,
two of which are measured and compared at each trial.

\medskip
\begin{proposition} {\rm(Bell's 4 variable inequality)}
For any quadruple $P_1$, $P_2$, $Q_1$, and $Q_2$ of 0-1-valued
random variables on $(\Omega,\PP)$ the following inequality holds:
\begin{equation}\label{BellIneq}
   \PP[P_1=Q_1]\ \ \le\ \ \PP[P_1=Q_2]+\PP[Q_2=P_2]+\PP[P_2=Q_1].
\end{equation}
\end{proposition}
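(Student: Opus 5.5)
The plan is a pointwise (sample-by-sample) argument, in the spirit of the proof of Bell's 3 variable inequality and of the Remark following it. Since all four variables are 0-1-valued, the event $[P_1=Q_1]$ should be covered by the union of the three events on the right-hand side. The right-hand side is a sum, so it dominates the probability of that union, and the inequality follows.

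First, check the covering claim pointwise. Fix $\o\in\O$ and suppose $P_1(\o)=Q_1(\o)$. If in addition $P_1(\o)=Q_2(\o)$, then $\o\in[P_1=Q_2]$ and we are done. Otherwise $Q_2(\o)\ne P_1(\o)$, and because there are only two values, $Q_2(\o)=1-P_1(\o)=1-Q_1(\o)$. Now $P_2(\o)$ is either equal to $Q_2(\o)$, which puts $\o$ in $[Q_2=P_2]$, or equal to $1-Q_2(\o)=Q_1(\o)$, which puts $\o$ in $[P_2=Q_1]$. Hence
$$[P_1=Q_1]\subseteq[P_1=Q_2]\cup[Q_2=P_2]\cup[P_2=Q_1].$$
Subadditivity of $\PP$ then yields (\ref{BellIneq}).

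Equivalently, one can phrase this as in the Remark via the $L^1$-metric. Write $\PP[X=Y]=\PP[X\ne 1-Y]=\norm{X-(1-Y)}_1$ and apply the triangle inequality along the chain
$$P_1\to 1-Q_2\to P_2\to 1-Q_1.$$
The three steps have $L^1$-lengths $\PP[P_1=Q_2]$, $\PP[Q_2=P_2]$ and $\PP[P_2=Q_1]$, while the endpoints are at distance $\norm{P_1-(1-Q_1)}_1=\PP[P_1=Q_1]$. I would present the set-inclusion version as the proof and mention the metric version as a remark, because it shows why the same $\sin^2$-type violation will appear.

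I expect no real obstacle. The only point needing care is the two-valuedness step: "not equal to $a$" means "equal to $1-a$". This is exactly where classical 0-1-valued random variables, all defined simultaneously on one $\O$, are essential, and it is the hypothesis that Aspect's experiment will be used to refute.
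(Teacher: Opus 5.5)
Your proof is correct and follows the paper's own argument. The paper's proof consists only of the pointwise implication (\ref{BellPointwise}); you verify it with the case analysis and then pass to probabilities by subadditivity. Your $L^1$-chain remark also matches the paper's comment that the inequality is a `quadrangle inequality' for the metric $(X,Y)\mapsto\norm{X-Y}_1$.
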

(In fact, by symmetry, neither of these four probabilities is larger
than the sum of the other three.)

\begin{proof}
It is easy to see that for all $\o$:
\begin{equation}\label{BellPointwise}
     P_1(\o)=Q_1(\o)\Longrightarrow
     P_1(\o)=Q_2(\o)\hbox{ or }Q_2(\o)=P_2(\o)\hbox{ or }P_2(\o)=Q_1(\o)\;.
     \quad\qed
\end{equation}
\end{proof}

\bigskip\noindent
Bell's 4-variable inequality can be viewed as a `quadrangle
inequality' with respect to the metric $(X,Y)\mapsto\norm{X-Y}_1$
on random variables $X$, $Y$.

\smallskip
On the other hand, quantum mechanics predicts (cf. Section \ref{MathAspect} below),
and the experiment of Aspect showed, that one has,
   $$\PP[P(\a)=Q(\b)=1]=\half\sin^2(\a-\b).$$
Similarly, $\PP[P(\a)=Q(\b)=0]=\half\sin^2(\a-\b)$.
Hence
   $$\PP[P(\a)=Q(\b)]=\sin^2(\a-\b).$$
So Bell's 4 variable inequality reads in this example:
   $$\sk11\le\sk12+\sk21+\sk22,$$
which is clearly violated for the choices
$\a_1=0$, $\a_2={\pi\over3}$, $\b_1={\pi\over2}$, and $\b_2={\pi\over6}$,
in which case it reads
   $$1\le{1\over4}+{1\over4}+{1\over4}.$$

\afbeelding{3cm}{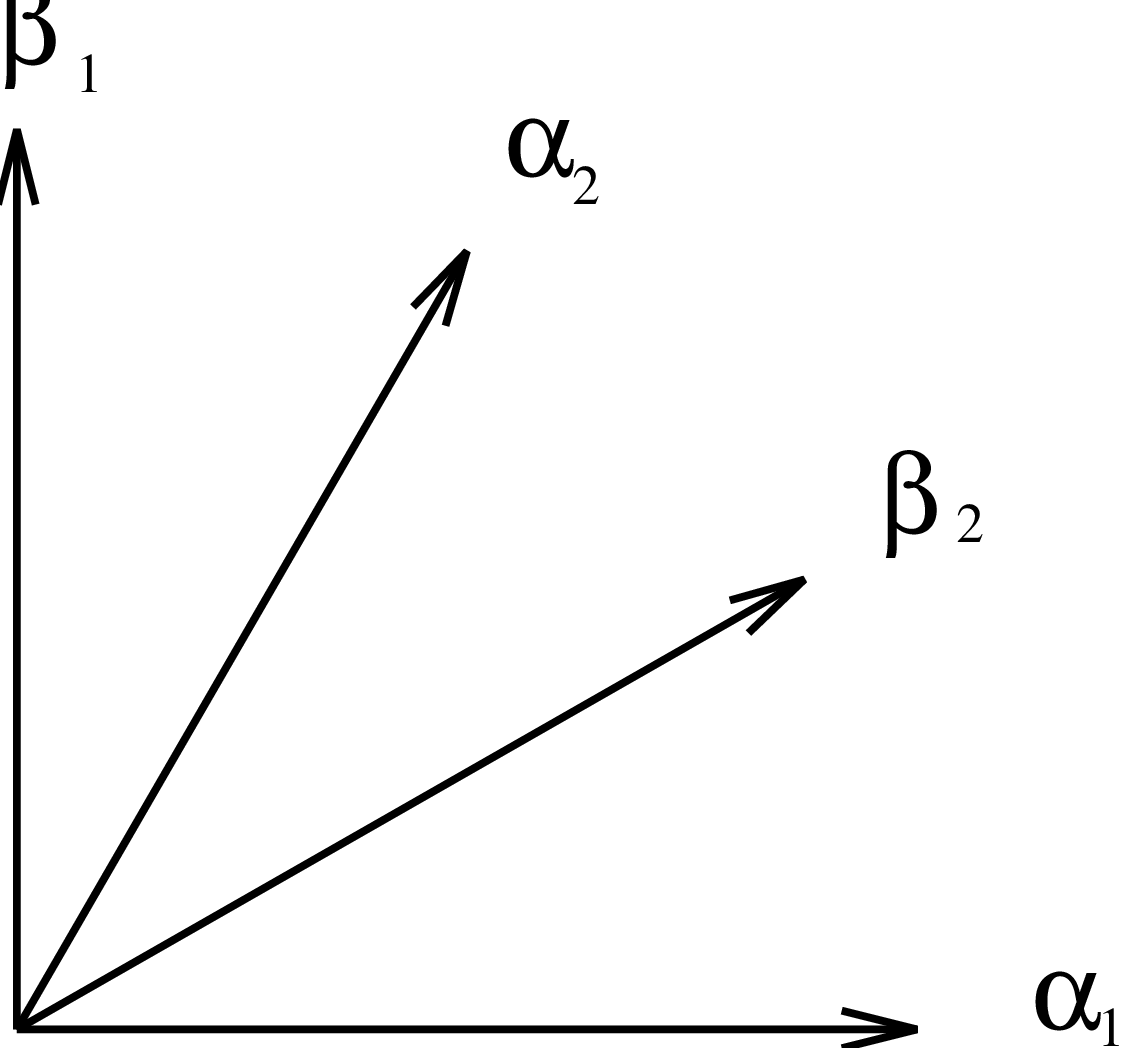}{Directions violating Bell's inequality}{settings}

\noindent Now we are finished: There does not exist, on any
classical probability space, a quadruple $P_1$, $P_2$, $Q_1$, and
$Q_2$ of random variables with the correlations measured in this
experiment.

\medskip\noindent{\it Discussion}
\begin{enumerate}

\item
A crucial assumption that goes into Bell's inequality, is that it
makes sense to compare (Cf. (\ref{BellPointwise})) the (possibly
random) reactions which a given photon {\it would} show to different
filters, including those it does not actually meet. This assumption
is called {\it realism}; it is made in all classical probabilistic
physical theories, but is abandoned in quantum mechanics.

\item
A second important assumption, necessary for the validity of Bell's
inequality, was mentioned before: the {\it outcome} on the right
(described by $Q(\b)$ for some $\b$) should not depend on the {\it
angle} $\a$ of the polarizer on the left. This assumption is called
`locality'. In order to justify this assumption, Aspect has made
considerable efforts. In his (third) experiment \cite{Aspect},
the choice of what
to measure on the left ($\a_1$ or $\a_2$) and on the right ($\b_1$
or $\b_2$) was made {\it during the flight of the photons}, so that
any influence which each of these choices might have on {\it the
outcome} on the opposite end would have to travel faster than light.
By the causality principle of Relativity Theory such influences are
excluded.

\item
The Orsay experiment refutes all imaginable physical theories which
are both {\it local} and {\it realistic}(Cf. 1 and 2 above).
Quantum mechanics is local, but not realistic.
Its great successes lead us to believe
that realism is fails in for the description of nature.
Some prefer to adhere to realism,
and so they must give up locality, and hence Einstein causality
\cite{Bohm, Nelson}.

\item
In our opinion, the phrase `quantum non-locality', which is often
heard in the context of Bell's inequalities, signals a
misconception.
It suggests giving up {\it both} realism {\it and} locality.
This is too much of a defeat, and unnecessary.
Quantum mechanics is local. But it describes
phenomena which {\it in a classical theory} could only be explained
using some action at a distance.

\end{enumerate}

\subsection{The Orsay experiment as a card game}
To illustrate the above refutation of local realism more vividly, we
shall present the experiment in the form of a card game. Nature can
win this game. Can you?

\afbeelding{6cm}{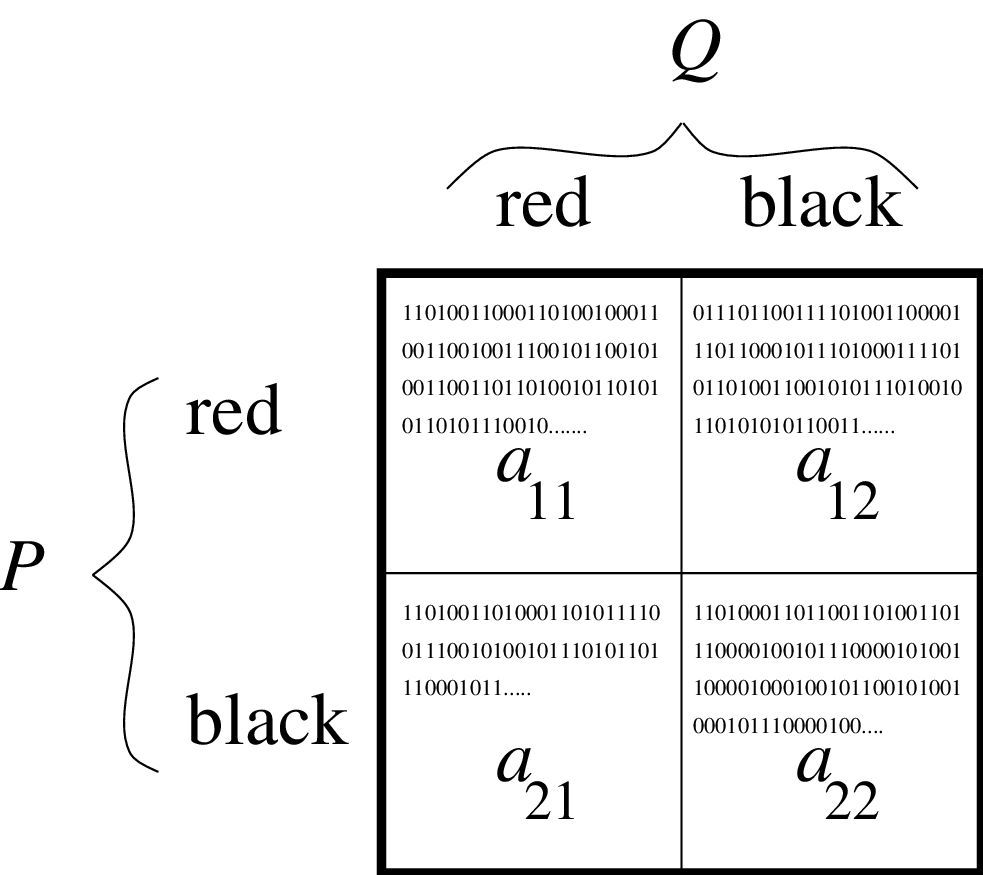}{Board for the Bell game}{game}

\bigskip\noindent
Two players, $P$ and $Q$, are sitting at a table.
They are cooperating to achieve a single goal.
There is an arbiter present to deal cards and to count points.
On the table there is a board consisting of four squares as drawn
in fig. 4. There are dice and an ordinary deck of playing cards.
The deck of cards is shuffled well.
(In fact we shall assume that the deck of cards is an infinite sequence
of independent cards, chosen fully at random.)
First the players are given some time to make agreements on the strategy
they are going to follow.
Then the game starts,
and from this moment on they are no longer allowed to communicate.
The following sequence of actions is then repeated many times.

\begin{enumerate}
\item
The dealer hands a card to $P$ and one to $Q$. Both look at their
own card, but not at the other one's. (The only feature of the card
that matters is its colour: red or black.)

\item
The dice are thrown.

\item
$P$ and $Q$ simultaneously say `yes' or `no',
according to their own choice.
They are free to make their answer depend on any information they possess,
such as the color of their own card, the agreements made in advance,
the numbers shown by the dice, the weather, the time, et cetera.

\item
The cards are laid out on the table. The pair of colors of the cards
determines one of the four squares on the board: these are labeled
(red,red), (red,black), (black,red) and (black,black).

\item
In the square so determined a 0 or a 1 is written:
a 0 when the answers of $P$
and $Q$ have been different, a 1 if they have been the same.
\end{enumerate}

\noindent
In the course of time, the squares on the board get filled with 0's
and 1's. The arbiter keeps track of the percentage of 1's in
proportion to the total number of bits in each square; we shall call
the time limits of these percentages as the game proceeds: $a_{11}$,
$a_{12}$, $a_{21}$, and $a_{22}$. The aim of the game, for both $P$
and $Q$, is to get $a_{11}$ larger than the sum of the other three
limiting percentages. So $P$ and $Q$ must try to give identical
anwers as often as they can when both their cards are red, but
different answers otherwise.

\medskip\noindent`PROPOSITION'.
{\rm (Bell's inequality for the game)}
{\it $P$ and $Q$ cannot win the game by classical means, namely:}
   $$a_{11}\le a_{12}+a_{21}+a_{22}.$$

\smallskip\noindent`{\it Proof\/}'.

\noindent The best $P$ and $Q$ can do, in order to win the game, is
to agree upon some (possibly random) strategy for each turn. For
instance, they may agree that $P$ will always say `yes' (i.e.,
$P_\red=P_\black=$`yes') and that $Q$ will answer the question `Is
my card red?' (i.e., $Q_\red=$ `yes' and $Q_\black=$`no'). This will
lead to a 1 in the (red, red) square or the (black, red) square or to
a 0 in one of the other two. So if the players repeat this strategy
indefinitely , on the long run they would get $a_{11}=a_{12}=1$ and
$a_{21}=a_{22}=0$, disappointingly satisfying Bell's inequality.

The above example is an extremal strategy. There are many (in fact,
sixteen) strategies like this. By the pointwise version
(\ref{BellPointwise}) of Bell's 4-variable inequality, none of these
sixteen extremal strategies wins the game. Inclusion of the
randomness coming from the dice yields a full polytope of random
strategies, having the above sixteen as its extremal points. But
since the inequalities are linear, this averaging procedure does not
help. This `proves' our `proposition'. Disbelievers are challenged
to find a winning strategy. \qed

\smallskip
Strangely enough, however, Nature does provide us with a strategy to win the
game, still essentially based on the $\cos^2$ law (\ref{CosSquare})
for photon absorption!
Instead of the dice, put a Calcium atom on the table.
When the cards have been dealt, $P$ and $Q$ put their
polarizers in the direction indicated by their cards.
If $P$ has a red card, then he chooses the direction $\a_1=0$ (cf. fig. 3).
If his card is black, then he chooses $\a_2={\pi\over3}$.
If $Q$ has a red card, then he chooses $\b_1={\pi\over2}$.
If his card is black, then he chooses $\b_2={\pi\over6}$.
No information on the colours of the cards needs to be exchanged.
When the Calcium atom has produced its photon pair,
each player looks whether his own photon passes his own polarizer,
and then says `yes' if it does, `no' if it does not.
On the long run they will get $a_{11}=1$, $a_{12}=a_{21}=
a_{22}={1\over4}$, and thus they win the game.

\smallskip So the Calcium atom, the quantummechanical die, makes
possible what could not be done with the classical die.

\section{Towards a Mathematical Model}
\label{sec:MathMod}
Coerced by the foregoing considerations, we give up trying
to make a classical probabilistic model in order to explain
polarization experiments.
Instead, we take these experiments as a paradigm for an alternative type
of `quantum' probability, to be developed now.

\subsection{A mathematical description of polarization}
We have discussed (linear) polarization of a light beam. This is
completely characterized by a direction in the plane perpendicular
to the light beam. So we simply describe states of polarization by
different directions in a two-dimensional real plane $\RR^2$, or
equivalently by unit vectors $\psi \in \RR^2$, $\norm{\psi}=1$,
pointing in this direction. Actually, since we cannot distinguish
between two states which differ by a rotation of $\pi$, we shall
describe states of polarization by one-dimensional subspaces of
$\RR^2$. Given two directions of polarization with an angle $\a$
between them, spanned by two unit vectors $\psi,\theta \in \RR^2$,
the probability to find polarization $\th$ when a photon is in the
state $\psi$, can be expressed as
                $$ \cos^2 \a = \inp\psi\theta^2$$
where $\inp\psi\theta$ denotes the scalar product between $\psi$ and $\theta$.

In the mathematical model we should distinguish between the physical
state of polarization of a photon on the one hand and the filter on
the other hand, i.e., the 0-1-valued random variable which asks,
whether a photon is polarized in a certain direction. This can be
done by identifying the random variable with the orthogonal
projection $P$ onto the one-dimensional subspace. We can then write
             $$ \cos^2 \a \ = \inp\psi\theta^2 \ = \inp\psi{P\psi} \ .$$
Since $P$ is 0-1-valued, (a photon passes or is absorbed), this
probability is equal to the expectation of this random variable:
     $$\inp\psi{P\psi} \ = \ \EE(P)\ .$$


\subsection{The full truth about polarization: the qubit}
\label{qubitphoton}
In the foregoing description of polarization things were presented
somewhat simpler than they are: we considered only linear
polarization, thus disregarding circular polarization. The full description
of polarization leads to the quantum mechanics of a 2-level system
or {\it qubit}:

\correspondence{State of polarization of a photon}
               {one-dimensional subspace of $\CC^2$, de\-scri\-bed
                by a unit vector $\psi$ spanning this subspace
                (and determined only up to a phase).}

\correspondence{Polarization filter or generalized {{0-1}}-valued random
                variable}
               {orthogonal projection $P$ onto a complex one-dimensional
                subspace.}

\noindent (Also for left- or right-circular polarization there
exist physical filters.)

\correspondence{Probability for a photon, described by $\psi$, to pass through
                a filter, described by $P$}
               {$\inp\psi{P\psi}$\ .}

\smallskip
The set of all states is conveniently parametrized by the unit vectors
of the form
\begin{equation}\label{unitvectorC2}
(\cos \a, e^{i\phi}\sin \a) \in \CC^2 \ ,\quad
{-\pi \over 2} \le \a \le {\pi \over 2}, \quad 0 \le \phi \le \pi\ .
\end{equation}


\subsection{Finite dimensional models}
The mathematical model that is used by quantum mechanics is the
straightforward generalization of the above description. In order to
keep things simple, in this course we restrict ourselves to the
quantum mechanics in finite dimension. This generalizes the
probability theory of systems with only finitely many states. As in
classical probability, the generalization to systems with a
countably infinite number of states or a continuum of states is
analytically more involved.

The model is as follows:
\smallskip
{\it States} correspond to one-dimensional subspaces of $\CC^n$, where
the dimension $n$ is determined by the model. Again, a state is described
conveniently by some unit vector spanning this subspace.

\smallskip
{\it 0-1-valued random variables} or {\it events}
are described by orthogonal projections onto linear subspaces of $\CC^n$.
Here also projections onto higher dimensional subspaces
make sense.

\smallskip
The {\it probability} that a measurement of a random variable $P$ on a
system in a state $\psi$ gives the value $1$ is given by
$\inp\psi{P\psi}$.

\smallskip
Note that we do not assume that every orthogonal
projection corresponds to a meaningful random variable.
Specification of random variables is part of the description
of the mathematical model for a given system.
In a truly quantum mechanical situation, typically
all projections are used. In contrast to this,
classical probability is obtained by allowing only very few
projections, as follows.

\subsection{Finite classical models}

\noindent
A finite probability space is usually described by a finite set
$\Omega = \{\o_1, \ \dots,\ \o_n\}$ and a probability distribution
$(p_1, \ \dots, \ p_n)$, $0 \le p_i \le 1$, $\sum_i p_i = 1$,
such that the probability for $\o_i$ is $p_i$. A 0-1-valued random
variable is a 0-1-valued function on $\Omega$, i.e., a characteristic
function $\chi_A$ of some subset $A \subseteq \Omega$.
In order to describe such a system in our model, we think of $\CC^n$ as the
space of complex valued functions on $\Omega$, and use the functions
$\delta_i$ with $\delta_i(\o_j) \ = \ \delta_{i,j}$ as basis.
The states of the system, i.e., the points $\o_i$ of $\Omega$, are now
represented by the unit vectors $\delta_i$, $1 \le j \le n$.
The random variable $\chi_A$ is identified with the orthogonal
projection $P_A$ onto the linear span of the vectors
$\{\delta_i: \ \o_i \in A\}$. In our basis $\chi_A$ becomes a diagonal
matrix with a $1$ at the $i$-th place of the diagonal if $\o_i \in A$,
and a $0$ otherwise. It is obvious that $\o_i \in A$ if and only if
$\chi_A(\o_i)=1$ if and only if $\inp{\delta_i}{P_A\delta_i}\ =\ 1$.

Conversely, any set of pairwise commuting
projections on $\CC^n$ can be diagonalized simultaneously and
thus have an interpretation as a set of classical 0-1-valued random
variables. Therefore:

\medbreak
\centerline{{\it  Classical probability corresponds to sets of
pairwise commuting projections.}}
\medbreak

\subsection{Mixed states}

In the above sketch of quantum probability an important point
is still missing:
How can we describe
a situation where a photon has one polarization with some probability
$q$ and in another with probability $1-q$?
Since states must play the role of probability distributions,
this combination should be expressed as a single state of the photon.

In general, if $P$ is any 0-1-valued (quantum) random variable and
$\psi_1, \dots, \psi_k$ are arbitrary quantum states, each occuring with
a probability $p_i$, $1 \le i \le k$, $\sum_i p_i = 1$,
then the probability that a measurement of $P$ gives $1$ is clearly
given by
$$ \sum_i p_i\inp{\psi_i}{P\psi_i}\ \ .$$
A more convenient description of mixed states is obtained as follows.

For a unit vector $\psi \in \CC^n$ denote by $\rho_\psi$ the orthogonal
projection onto the one-dimensional subspace generated by $\psi$.
In the physics literature, $\rho_\psi$ is often denoted by
$\ket\psi\bra\psi$. By $tr$ denote the trace on the
$n \times n$-matrices, summing up the diagonal entries of such a matrix.
Then one obtains
            $$\inp\psi{P\psi}\ = \ tr(\rho_\psi\cdot P) \ .$$
Hence
$$\sum_i p_i\inp{\psi_i}{P\psi_i}\
=\ tr(\sum_i p_i\rho_{\psi_i} \cdot P)\ =\ tr(\rho \cdot P) \ ,$$
where $\rho := \sum_i p_i \rho_{\psi_i}$.

Being a convex combination of 1-dimensional projections, $\rho$ obviously
is a positive (i.e., self-adjoint positive semidefinite) $n\times n$-matrix
with $tr(\rho) = 1$. Conversely, from diagonalizing positive matrices it is
clear that any such positive matrix $\rho$ with $tr(\rho) = 1$ can be
written as a convex combination of 1-dimensional projections. The set of
these matrices forms a closed (even compact) convex set, and its extreme
points are precisely the 1-dimensional projections which in turn
correspond to pure states, represented also by unit vectors.
Therefore it is this class of so-called {\it density matrices}
which represents mixed states. 

Thus, a general mixed state is described by a density matrix $\rho$ and
the probability for an observation of $P$ to yield the value $1$ is
given by $tr(\rho\cdot P)$.

\noindent{\it Remarks}

\begin{enumerate}

\item
The decomposition of a density matrix $\rho$ into a convex
combination of 1-dimensional projections is by no means unique. This
point will be further elaborated in Proposition \ref{staterepr} of
Section \ref{subsec:uniqueness}. So the compact convex set of
density matrices is not a simplex at all. Indeed, on $\CC^2$ it can
be affinely identified with a full ball in $\RR^3$, by taking in
$\RR^3$ the convex hull of the sphere that was described above.

\item
In classical probability the convex set of mixed states is the
simplex of all
probability distributions. In our picture, if we insist on decomposing
a mixed state given by $\rho = \sum_i p_i P_{\delta_i}$ into a convex
combination of pure states (within the convex hull of
$\{P_{\delta_i} : 1 \le i \le n\}$ which is a simplex), then it becomes
unique.

\item
Physically, a state $\rho$ is completely described by all of its
values $tr(\rho \cdot P)$, where $P$ runs through the random variables
of the model. Thus, if we consider only subsets of projections, then two
different density matrices can represent the same physical state of the
system. As a drastic example, consider the classical system
$\Omega = \{\o_1, \ \dots,\ \o_n\}$ with equidistribution, i.e.,
$p_i(\o_i) = {1 \over n }$, leading to the density matrix
$\rho = \sum_i {1\over n} P_{\delta_i} = {1\over n} \cdot \11$.
On the other hand, with the unit vector
$\psi = ({1\over \sqrt n}, \ \dots \ ,{1\over \sqrt n})\in \CC^n$,
we obtain for any subset $A \subseteq \Omega$:
$tr(\rho\cdot P_A) = {1\over n}\cdot \vert A\vert = \ {\inp\psi{P_A\psi}}$.
Therefore, on the random variables $\{P_A:A \subseteq \Omega\}$,
the rank-one-density matrix $P_\psi$ represents the same state as
the densitiy matrix ${1\over n}\cdot \11$.
Note, however, that $P_\psi$ is not in the
convex hull of $\{P_{\delta_i} : 1 \le i \le n\}$.

\end{enumerate}

\subsection{The mathematical model of Aspect's experiment}\label{MathAspect}
As an illustration, we shall now explain the photon correlation in the
Orsay experiment, given by the $\cos^2$-law. Note that here we cannot
simply refer to the basic $\cos^2$-law of quantum probability, since
the filters are acting on two different photons.

The polarization of a pair of photons is described by a unit vector
in the tensor product ${\CC^2 \otimes \CC^2}\allowbreak = \CC^4$,
where we use the basis
\begin{eqnarray*}
(1,0,0,0) &= e_1 \otimes e_1 =: e_{11}, \cr
(0,1,0,0) &= e_1 \otimes e_2 =: e_{12}, \cr
(0,0,1,0) &= e_2 \otimes e_1 =: e_{21}, \cr
(0,0,0,1) &= e_2 \otimes e_2 =: e_{22}, \cr
\end{eqnarray*}
with $e_1 = (1,0)\in \CC^2$ and $e_2 = (0,1) \in \CC^2$.
For example, in the pure state $e_{12}$ the left-hand photon is
vertically polarized and the right-hand photon horizontally. As it
turns out, the state of the pair of photons as produced by the Calcium
atom is described by the state
    $$\psi = {1\over \sqrt 2}(e_{12} - e_{21}).$$
Now, the filters $P(\a)$ on the left and $Q(\b)$ on the right,
introduced in Section \ref{decisive}, are represented by
two-dimensional projection operators on $\CC^4$, which are the
``2-right amplification" and the ``2-left-amplification" of the
polarization matrix
  $$\pmatrix{\cos^2\a &\cos\a\sin\a \cr \cos\a\sin\a &\sin^2\a\cr},$$
namely

\begin{eqnarray*}
P(\a) &= \pmatrix{\cos^2\a &\cos\a\sin\a \cr \cos\a\sin\a &\sin^2\a\cr} \otimes
         \pmatrix{1 & 0 \cr 0 &1\cr} \cr
      &{\ }\cr
      &=\pmatrix{\cos^2\a     &0            &\cos\a\sin\a  &0            \cr
                 0            &\cos^2\a     &0             &\cos\a\sin\a \cr
                 \cos\a\sin\a &0            &\sin^2\a      &0            \cr
                 0            &\cos\a\sin\a &0             &\sin^2\a     \cr}
\cr
\end{eqnarray*}

\begin{eqnarray*}
Q(\b) &= \pmatrix{1 & 0 \cr 0 &1\cr} \otimes
         \pmatrix{\cos^2\b &\cos\b\sin\b \cr \cos\b\sin\b&\sin^2\b\cr}\cr
      &{\ }\cr
      &=\pmatrix{\cos^2\b     &\cos\b\sin\b &0             &0            \cr
                 \cos\b\sin\b &\sin^2\b     &0             &0            \cr
                 0            &0            &\cos^2\b      &\cos\b\sin\b \cr
                 0            &0            &\cos\b\sin\b  &\sin^2\b     \cr}
\ .\cr
\end{eqnarray*}

We note that $P(\a)$ and $Q(\b)$ are commuting projections for fixed
$\a$ and $\b$. It follows that $P(\a)Q(\b)$ is again a projection,
as well as the products ${P(\a)(\11-Q(\b))}$, $(\11-P(\a))Q(\b)$, and
$(\11-P(\a))(\11-Q(\b))$. So we obtain the description of a classical
probability space with four states, to be interpreted as
\begin{eqnarray*}
\hbox{(``left photon passes"}
                   &\hbox{, ``right photon passes"),}\cr
              \hbox{(``left photon passes"}
                   &\hbox{, ``right photon is absorbed")},\cr
              \hbox{(``left photon is absorbed"}
                   &\hbox{, ``right photon passes"),}\cr
       \hbox{(``left photon is absorbed"}
                   &\hbox{, ``right photon is absorbed").}\cr
\end{eqnarray*}
The probabilities of these four events are found by the actions on
$\psi={1\over\sqrt{2}}(e_{12}-e_{21})={1\over2}(0,1,-1,0)$
of the four projections.
In particular, the probability that both photons pass is given by

\noindent
{
\let\dis\displaystyle
\def\displaystyle{\scriptscriptstyle} \def\d{\displaystyle}
\begin{eqnarray*}
\dis
&\dis<\!\psi,P(\a)Q(\b)\psi\!>               \cr
&\dis= {1 \over 2}(0,1,-1,0)\times \cr
&\quad\; \dis\times
\pmatrix{\d\cos^2\a\cos^2\b            &\d\cos^2\a\cos\b\sin\b
                 &\d\cos\a\sin\a\cos^2\b       &\d\cos\a\sin\a\cos\b\sin\b \cr
         \d\cos^2\a\cos\b\sin\b        &\d\cos^2\a\sin^2\b
                 &\d\cos\a\sin\a\cos\b\sin\b   &\d\cos\a\sin\a\sin^2\b \cr
         \d\cos\a\sin\a\cos^2\b        &\d\cos\a\sin\a\cos\b\sin\b
                 &\d\sin^2\a\cos^2\b           &\d\sin^2\a\cos\b\sin\b     \cr
         \d\cos\a\sin\a\cos\b\sin\b    &\d\cos\a\sin\a\sin^2\b
                 &\d\sin^2\a\cos\b\sin\b       &\d\sin^2\a\sin^2\b         \cr
        }
 \pmatrix{\d 0\cr \d 1\cr \d -1\cr \d 0\cr} \cr
&\dis={1\over 2}(\cos^2\a\sin^2\b + \sin^2\a\cos^2\b - 2 \cos\a\sin\a\cos\b\sin\b)\cr
&\dis={1\over 2}(\cos\a\sin\b - \sin\a\cos\b)^2 \cr
&\dis={1\over 2}\sin^2(\a-\b) \ \ .\cr
\end{eqnarray*}
}

\section{Quantum Probability}
\label{sec:QuantProb}
In classical probability a model --- or {\it probability space} ---
is determined by giving a set $\O$ of outcomes $\o$,
by specifying what subsets $S\subset\O$ are to be considered as {\it events},
and by associating a {\it probability} $\PP(S)$ to each of these events.
Requirements:
the events must form a $\s$-algebra, the probability measure $\PP$
must be $\s$-additive, and normalized, i.e. $\PP(\O)=1$.

\pn
In quantum probability we must loosen this scheme somewhat.
We must give up the set $\O$ of sample points:
a point $\o\in\O$ in a classical model decides about
the occurrence or non-occurrence of all events simultaneously,
and this we abandon.
Following our polarization example of Section \ref{sec:Bell}
we take as {\it events}
certain {\it closed subspaces} of a {\it Hilbert space},
or, equivalently, a set of {\it projections}.
To all these projections we associate probabilities.

\sn
Requirements:
\begin{enumerate}
\item
The set of $\E$ of all events of a quantum model must be the set of projections
in some {\it $*$-algebra $\A$} of operators on $\H$.

\item
The probability function $\PP:\E\to[0,1]$ must be $\s$-additive.

\end{enumerate}
According to a theorem of Gleason, for $\dim(\H)\ge3$ this implies
that the probabilities are given by a {\it state} $\ph$ on $\A$:
  $$\PP(E)=\ph(E),\qquad (E\in\A \hbox{ a projection})\;.$$
In this section we shall work out
 the above notions in some detail.

\subsection{$*$-algebras of operators and states}

\noindent
A {\it Hilbert space} is a complex linear space $\H$
with a sesquilinear function
   $$\H\times\H\to\CC:\quad(\psi,\chi)\mapsto\inp\psi\chi\;,$$
the {\it inner product}.
For the defining properties of the inner product and the main facts about
Hilbert spaces
we refer to the contribution of D\'enes Petz to this volume \cite{Petz}.

\noindent
Let $\H$ be a finite-dimensional Hilbert space.
By an {\it operator} on $\H$ we mean a linear map $A:\H\to\H$.
Operators can be added and multiplied in the natural way.
By the {\it adjoint} of an operator $A$
we mean the unique operator $A^*$ on $\H$
satisfying
   $$\forall_{\psi,\th\in\H}:\quad \inp{A^*\psi}{\th}=\inp{\psi}{A\th}\;.$$
The {\it norm} of an operator $A$ is defined by
   $$\norm A:=\sup\set{\norm{A\psi}}{\psi\in\H,\norm\psi=1}\;.$$
It has the property
   $$\norm{A^*A}=\norms A\;.$$

\smallskip\noindent\emph{Exercise:}
Prove this!

\sn
By a {\it (unital) $*$-algebra of operators on $\H$}
we mean a subspace $\A$ of the
space of all linear maps $A:\H\to\H$ such that $\one\in\A$ and
   $$A,B\in\A \Implies \l A,\; A+B,\; A\cdot B,\; A^*\in\A\;.$$
By a {\it state} on $\A$ we mean a linear functional $\ph:\A\to\CC$
satisfying

\begin{enumerate}
\item
$\forall_{A\in\A}:\quad\ph(A^*A)\ge0$,

\item
$\ph(\one)=1$.
\end{enumerate}

We shall call a pair $(\A,\ph)$ of the above kind
a {\it quantum probability space}.

\bigskip\noindent
{\it Examples}

\begin{enumerate}
\item
Let $P_1$, $P_2$, $\ldots$, $P_k$ be mutually orthogonal projections
on $\H$ with sum $\one$.
Then their linear span
   $$\A:=\set{\sok j \l_j P_j}{\l_1,\ldots,\l_k\in\CC}\;.$$
forms a unital $*$-algebra of operators on $\H$.
This is basically the classical model of Section 2.4.:
$\A$ is isomorphic to $\C(\O)$,
the algebra of all complex functions on the finite set $\O=\{1,\ldots,k\}$.
If $\psi$ is some vector in $\H$ of unit length,
it determines a state $\ph$ by:
   $$\ph(A):=\inp{\psi}{A\psi}\;.$$
The probabilities of this classical model are $p_j:=\ph(P_j)=\norms{P_j\psi}$.
Note that there are many $\psi$'s, and even more density matrices $\rho$
(see Section 2.4.) determining the same state $\ph$ on $\A$.

\item
Let $\A$ be the $*$-algebra $M_n$ of all complex
$n\times n$ matrices.
Let $\ph(A):=\tr(\rho A)$ with $\rho\ge0$ and $\tr(\rho)=1$,
as introduced in Section 2.4.

\noindent
The state $\ph$ is called a {\it pure} state if $\rho=\ket\psi\bra\psi$
for some unit vector $\psi\in\H$.

\noindent
The qubit of Section 2.2 corresponds to the case $n=2$.

\noindent
The most general way of representing $M_n$ on a
(finite dimensional) Hilbert space is:
   $$\H=\CC^m\ten\CC^n\quad(m\ge1);\qquad \A=\set{\one\ten A}{A\in M_n}\;.$$

\item
Let $k$, $n_1,\ldots,n_k$, $m_1,\ldots,m_k$ be natural numbers,
and let the Hilbert space $\H$ be given by
   $$\H:=\left(\CC^{m_1}\ten\CC^{n_1}\right)\oplus
         \left(\CC^{m_2}\ten\CC^{n_2}\right)\oplus\cdots\oplus
         \left(\CC^{m_k}\ten\CC^{n_k}\right)\;.$$
Let \A be the $*$-algebra given by
   $$\A:=\set{(\one\ten A_1)\oplus\cdots\oplus(\one\ten A_k)}
             {A_j\in M_{n_j}\;\;\for\;\; j=1,\ldots,k}\;.$$
Let $\psi=\psi_1\oplus\ldots\oplus\psi_k$ be a unit vector in $\H$ and
   $$\ph(A):=\inp{\psi}{A\psi}=\sok j\inp{\psi_j}{A_j\psi_j}\;.$$
If $m_j\ge n_j\forall_j$ then every state on $\A$ is of the above form.
Otherwise, density matrices may be needed.

\end{enumerate}

In finite dimension
Example 1 is the only commutative possibility,
Example 2 is the `purely quantummechanical' situation,
and Example 3 is the most general case.

\begin{theorem}\label{Gelfand}
Every commutative $*$-algebra of operators on a finite-dimensional Hilbert
space is isomorphic to $\C(\O)$ for some finite $\O$.
\end{theorem}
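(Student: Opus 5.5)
Let $\A$ be a commutative unital $*$-algebra of operators on the finite-dimensional Hilbert space $\H$. The plan is to show that $\A$ is spanned by finitely many mutually orthogonal projections $P_1,\ldots,P_k$ summing to $\one$, i.e. that $\A$ is of the form of Example 1. Then the map $\sok j \l_j P_j\mapsto (j\mapsto \l_j)$ is the required $*$-isomorphism onto $\C(\O)$ with $\O=\{1,\ldots,k\}$.

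\emph{Step 1: reduce to self-adjoint elements.} Every $A\in\A$ can be written $A=X+iY$ with $X=\half(A+A^*)$ and $Y=\frac1{2i}(A-A^*)$. Both $X$ and $Y$ are self-adjoint and lie in $\A$, and all self-adjoint elements of $\A$ commute with one another.

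\emph{Step 2: spectral projections lie in $\A$.} Take a self-adjoint $X\in\A$ with distinct eigenvalues $x_1,\ldots,x_r$. Its spectral projection onto the eigenspace of $x_i$ is the Lagrange polynomial
$$E_i=\prod_{l\ne i}\frac{X-x_l\one}{x_i-x_l}.$$
Since $\A$ is a unital algebra, each $E_i$ belongs to $\A$. Hence $\A$ is spanned by its projections, and these projections pairwise commute.

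\emph{Step 3: minimal projections.} Choose a basis $A_1,\ldots,A_N$ of $\A$. Collect the finitely many spectral projections of the real and imaginary parts of all $A_m$. For every choice of one spectral projection from each family, form the product. Since the factors commute, each such product is a projection in $\A$. Discard the zero products and call the remaining ones $P_1,\ldots,P_k$.

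These are mutually orthogonal: two distinct products differ in at least one factor, and there they contain orthogonal spectral projections of the same operator. They sum to $\one$, because expanding $\prod\bigl(\sum_i E_i\bigr)=\one$ gives exactly the sum of all products. Every $A_m$ lies in their span, since each spectral projection used is a sum of certain $P_j$'s. Therefore $\A=\mathrm{span}\{P_1,\ldots,P_k\}$, and this span is already contained in $\A$.

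\emph{Step 4: the isomorphism.} Because the $P_j$ are nonzero and mutually orthogonal, the map $\sok j\l_jP_j\mapsto \l$ is well defined and injective. Orthogonality gives $P_iP_j=\d_{ij}P_j$ and $P_j^*=P_j$, so the map is multiplicative and $*$-preserving.

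The main obstacle is Step 3: one must produce a single orthogonal family of projections that spans all of $\A$, rather than handling one operator at a time. The commuting product construction, which is simultaneous diagonalization phrased algebraically, handles this. The remaining facts needed are the finite-dimensional spectral theorem for self-adjoint operators and the polynomial form of the spectral projections.
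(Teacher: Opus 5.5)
Your proof is correct, but it takes a different route from the paper's.

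\textbf{The paper's route.} The paper assumes simultaneous diagonalization of the commuting family $\A$ without proof and fixes a common orthonormal eigenbasis $e_1,\ldots,e_n$. It observes that the vector states $\o_j(A)=\inp{e_j}{Ae_j}$ are multiplicative, lets $\O$ be the set of distinct such characters, and defines the Gelfand transform $\iota(A)(\o)=\o(A)$. The projections appear only afterwards, as $P_\o=\iota^{-1}(\delta_\o)$. Checking that $\iota$ is bijective is left as an exercise. Surjectivity there needs two further steps: distinct characters must be separated by some element of $\A$, and a product or interpolation argument must then put each $\delta_\o$ in the image.

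\textbf{Your route.} You first construct the minimal projections inside $\A$, using only the single-operator spectral theorem and Lagrange interpolation. The paper uses the same interpolation device later, for the correspondence between self-adjoint operators and random variables. Once you have an orthogonal spanning family, the isomorphism is just the coordinate map, so injectivity, surjectivity and the $*$-homomorphism property are immediate.

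\textbf{What each approach gives.} Your argument is self-contained: simultaneous diagonalization becomes a consequence rather than a premise, and nothing is left to an exercise. The paper's argument identifies the points of $\O$ with the characters of $\A$. That is the form of Gel'fand's theorem that generalizes to commutative C*-algebras, whereas your construction via a finite basis and finite spectra is tied to finite dimension.

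\textbf{Same object at the end.} Each of your $P_j$ determines the character sending $A$ to its coefficient on $P_j$. These characters are exactly the paper's distinct $\o_j$, since each common eigenvector $e_j$ lies in the range of exactly one $P_i$.
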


This is the finite-dimensional version of Gel'fand's theorem on
commutative C*-algebra's.

\begin{proof}
Since the operators in $\A$ all commute,
there exists an orthonormal basis $e_1,\ldots,e_n$ in $\H$
on which they are all represented by diagonal matrices.
Then the states $\o_j:A\mapsto\inp{e_j}{Ae_j}$ are multiplicative:
   $$\o_j(AB)=\inp{e_j}{ABe_j}=\son i\inp{e_j}{Ae_i}\inp{e_i}{Be_j}
             =\inp{e_j}{Ae_j}\inp{e_j}{Be_j}
             =\o_j(A)\o_j(B)\;.$$
These states need not all be different;
let $\O:=(\o_{j_1},\ldots,\o_{j_k})$ be
a maximal set of different ones.
Then the map
   $$\iota:\A\to\C(\O):\iota(A)(\o):=\o(A)$$
is an isomorphism.
The projections of Example 1 are found back as
the operators $P_\o:=\iota^{-1}(\delta_\o)$.
\qed
\end{proof}

\noindent\emph{Exercise:}
Check that the map $\iota$ defined above is indeed an isomorphism
of $*$-algebras.

\begin{definition}
By the {\it commutant} of a set $\S$ of operators on $\H$
we mean the $*$-algebra
   $$\S':=\set{B:\H\to\H\hbox{ linear }}{\forall_{A\in\S}:AB=BA}\;.$$
The algebra generated by $\one$ and $\S$ we denote by $\alg(\S)$.
The {\it center} of a $*$-algebra $\A$ is the (commutative) $*$-algebra $\Z$
given by
   $$\Z:=\A\cap\A'\;.$$
\end{definition}

\noindent\emph{Exercise:} Find the center of $\A$ in each of the
examples 1, 2 and 3 above.

\begin{theorem}
(Double commutant theorem.)
Let $\S$ be a set of operators on a finite dimensional Hilbert space $\H$,
such that $X\in\S\implies X^*\in\S$.
Then
   $$\alg(\S)=\S''\;.$$
\end{theorem}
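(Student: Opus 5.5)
The inclusion $\alg(\S)\subseteq\S''$ is immediate, and the real work is the reverse inclusion. For the easy direction: every element of $\S$ commutes with every element of $\S'$, so $\S\subseteq\S''$. Also $\S''$ is a unital algebra, since a commutant of any set is closed under sums, products, scalar multiples and contains $\one$. So $\alg(\S)\subseteq\S''$.

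Write $\A:=\alg(\S)$. Because $\S$ is closed under adjoints, $\A$ is a $*$-algebra. Moreover $\S'=\A'$, so it suffices to show $\A''\subseteq\A$. The plan is von Neumann's amplification trick, which in finite dimension needs no topology.

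\textbf{Step 1 (invariant subspaces).} For any vector $\xi\in\H$, let $K:=\A\xi$. This is a subspace containing $\xi$, since $\one\in\A$. Let $P$ be the orthogonal projection onto $K$. Since $K$ is $\A$-invariant and $\A$ is closed under adjoints, $K^\perp$ is also invariant: for $\eta\in K^\perp$, $A\in\A$, $k\in K$ we have $\inp{A\eta}{k}=\inp{\eta}{A^*k}=0$. Hence $PA=AP$ for all $A\in\A$, i.e. $P\in\A'$. Consequently any $B\in\A''$ commutes with $P$, so $B\xi=BP\xi=PB\xi\in K$. Thus $B\xi=A\xi$ for some $A\in\A$.

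\textbf{Step 2 (amplification).} Step~1 gives agreement on only one vector, so next I take a basis $e_1,\ldots,e_n$ of $\H$. I work on $\H^n=\H\oplus\cdots\oplus\H\cong\CC^n\ten\H$ with the amplified algebra
$$\A_n:=\set{A\oplus\cdots\oplus A}{A\in\A},$$
again a unital $*$-algebra. Its commutant $\A_n'$ consists of the $n\times n$ block matrices $(T_{ij})$ with all entries $T_{ij}\in\A'$; this is a direct computation comparing block entries. From this description, $B\oplus\cdots\oplus B$ commutes with $\A_n'$ whenever $B\in\A''$, so $B^{(n)}\in\A_n''$.

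\textbf{Step 3 (conclusion).} Apply Step~1 to $\A_n$ and the vector $\xi=e_1\oplus\cdots\oplus e_n$. This yields $A\in\A$ with $Be_j=Ae_j$ for all $j$, hence $B=A\in\A$.

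The main obstacle is the commutant computation in Step~2 together with the verification in Step~1 that the projection onto $\A\xi$ lies in $\A'$. The latter is exactly where the hypothesis $X\in\S\implies X^*\in\S$ is used, and without it the theorem fails.
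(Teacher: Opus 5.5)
Your proof is correct and follows the paper's argument essentially step for step. The projection onto $\A\xi$ lies in the commutant because $\S$ is closed under adjoints, which gives $B\xi\in\A\xi$. Von Neumann's amplification to $\CC^n\ten\H$ with the vector $e_1\oplus\cdots\oplus e_n$ then upgrades this to $B=A$. The only cosmetic differences are two: you first reduce to $\A'=\S'$ instead of working with $X\in\S$ directly, and you check only the needed inclusion $B^{(n)}\in\A_n''$ rather than the paper's full identity $(\At)''=\A''\ten\one$.
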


\begin{proof}
Clearly $\S\subset\S''$,
and since $\S''$ is a $*$-algebra,
we have $\alg(\S)\subset\S''$.
We shall now prove the converse inclusion.
Let $B\in\S''$, and let $\A:=\alg(\S)$.
We must show that $B\in\A$.

\sn\emph{Step 1:}
Choose $\psi\in\H$, and let $P$ be the orthogonal projection onto
$\A\psi$.
Then for all $X\in\S$ and $A\in\A$:
   $$XPA\psi=XA\psi\in\A\psi\Implies XPA\psi=PXA\psi\;.$$
So $XP$ and $PX$ coincide on the space ${\A\psi}$.
But if $\th\perp{\A\psi}$, then $P\th=0$ and for all $A\in\A$:
   $$\inp{X\th}{A\psi}=\inp{\th}{X^*A\psi}=0\;,$$
so $X\th\perp{\A\psi}$ as well.
Hence $PX\th=0=XP\th$, and the operators $XP$ and $PX$ also coincide on the
orthogonal complement of ${\A\psi}$.
We conclude that $XP=PX$, i.e. $P\in\S'$.
But then we also have $BP=PB$, since $B\in\S''$. So
   $$B\psi=BP\psi=PB\psi\in{\A\psi}\;,$$
and $B\psi$ is of the form $A\psi$ for some $A\in\A$.

\sn\emph{Step 2:}
But this is not sufficient:
we must show that $B\psi=A\psi$ for all $\psi$ in a basis for $\H$.

\noindent
So choose a basis $\tup\psi_n$ of $\H$.
We define:
\begin{eqnarray*}
\Ht &:=&\H\oplus\H\oplus\cdots\oplus\H=\CC^n\ten\H\;,\cr
\At&:=&\set{A\oplus A\oplus\cdots\oplus A}{A\in\A}=\A\ten\one\;,\cr
\pst&:=&\psi_1\oplus\psi_2\oplus\cdots\oplus\psi_n\;.\cr
\end{eqnarray*}
Then $(\At)'=(\A\ten\one)'=\A'\ten M_n$ and
$(\At)''=(\A'\ten M_n)'=\A''\ten\one$.
So $B\ten\one\in(\At)''$.
By step 1 we find an element $\widetilde A$ of $\At$,
such that
   $$\widetilde A\pst=(B\ten\one)\pst\;.$$
But $\widetilde A\in\At$ must be of the form $A\ten\one$ with $A\in\A$, so
   $$A\psi_1\oplus\cdots\oplus A\psi_n=B\psi_1\oplus\cdots\oplus B\psi_n\;.$$
This implies that $A=B$, hence $B\in\A$.
\qed
\end{proof}

\sn\emph{Exercise:} Find the algebra generated by $\one$ and the
matrix
  $$\pmatrix{0&1&0\cr 1&0&0\cr 0&0&0\cr}\;.$$

\sn
We give the following proposition without proof.
It characterizes the situation of Example 2.

\begin{proposition}\label{factor}
If the center of $\A$ contains only multiples of $\one$,
then $\H$ and $\A$ must be of the form
   $$\H=\CC^m\ten\CC^n, \with \A=\set{\one\ten A}{A\in M_n}\;.$$
\end{proposition}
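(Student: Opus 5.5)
The plan is to build the tensor product decomposition from a maximal family of matrix units inside $\A$. Only the finite dimensionality of $\H$, the Double commutant theorem, and the spectral theorem for self-adjoint elements of $\A$ will be used.

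\emph{Step 1: a minimal projection $E$ satisfies $E\A E=\CC E$.} Choose a nonzero projection $E\in\A$ that is minimal among nonzero projections of $\A$; it exists because $\H$ is finite dimensional. For self-adjoint $X\in\A$, the operator $EXE$ lies in the $*$-algebra $E\A E$. Its spectral projections are polynomials in $EXE$, so they lie in $\A$ and are dominated by $E$. By minimality $EXE$ is a multiple of $E$, hence $E\A E=\CC E$.

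\emph{Step 2: the central support of $E$ is $\one$.} Let $F$ be the orthogonal projection onto $\A E\H$. The subspace $\A E\H$ is invariant under $\A$, and since $\A$ is a $*$-algebra its orthogonal complement is invariant as well, so $F\in\A'$. On the other hand, $\A E\H$ is invariant under $\A'$, because $B A E\psi=A E B\psi$ for $B\in\A'$. Hence $F\in\A''=\A$ by the Double commutant theorem applied to $\S=\A$. Thus $F\in\A\cap\A'=\Z=\CC\one$, and $F\ne0$, so $F=\one$.

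\emph{Step 3: matrix units.} Choose a maximal family $E_1=E,E_2,\ldots,E_n$ of mutually orthogonal projections in $\A$ together with partial isometries $V_j\in\A$ satisfying $V_j^*V_j=E_1$ and $V_jV_j^*=E_j$. The main obstacle is to show that $\sum_jE_j=\one$.

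Suppose $R:=\one-\sum_jE_j\ne0$. By Step 2, $R\A E_1\ne0$, so there is $A\in\A$ with $W:=RAE_1\ne0$. By Step 1, $W^*W\in E_1\A E_1=\CC E_1$, so $W^*W=cE_1$ with $c>0$. Then $V:=c^{-1/2}W$ is a partial isometry from $E_1$ onto a projection $VV^*\le R$. This projection lies in $\A$, and each of the $E_j$, together with $VV^*$, is equivalent to $E_1$. This contradicts maximality, so $\sum_jE_j=\one$.

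Now put $E_{ij}:=V_iV_j^*$. These satisfy $E_{ij}E_{kl}=\delta_{jk}E_{il}$, $E_{ij}^*=E_{ji}$ and $\sum_iE_{ii}=\one$. Moreover, for $A\in\A$ we have $E_{ii}AE_{jj}=V_i(V_i^*AV_j)V_j^*$, and $V_i^*AV_j\in E_1\A E_1=\CC E_1$ by Step 1. So every $A\in\A$ equals $\sum_{ij}a_{ij}E_{ij}$, and $\A$ is exactly the span of the $E_{ij}$, which is isomorphic to $M_n$.

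\emph{Step 4: the unitary.} Let $m:=\dim E_1\H$, choose an orthonormal basis $f_1,\ldots,f_m$ of $E_1\H$, and let $e_1,\ldots,e_n$ be the standard basis of $\CC^n$. Define $U:\CC^m\ten\CC^n\to\H$ by $U(e_k\ten e_j):=V_jf_k$. This is unitary: the vectors $V_jf_k$ are orthonormal because $V_j^*V_j=E_1$ and the ranges $E_j\H$ are mutually orthogonal, and they span $\sum_jE_j\H=\H$. A direct check gives $U^*E_{ij}U=\one\ten e_{ij}$, where $e_{ij}$ are the matrix units of $M_n$. Hence $U^*\A U=\set{\one\ten A}{A\in M_n}$, which is the form claimed in the proposition.
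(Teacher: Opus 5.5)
The paper states this proposition explicitly \emph{without proof} (it is only invoked in the proof of Proposition~\ref{GenStarAlg}), so there is no argument of the paper's to compare yours with. Your matrix-unit construction is correct and complete, and it is the standard proof of this fact.

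It uses only tools the paper itself provides:
\begin{itemize}
\item the Double commutant theorem, which enters in Step~2 and is the only place where the hypothesis $\A\cap\A'=\CC\one$ is used;
\item the Lagrange-interpolation functional calculus that the paper uses for random variables, which gives you the spectral projections in Step~1.
\end{itemize}
You also make explicit that the conclusion holds only up to the unitary $U$. That is the intended meaning of ``must be of the form''.

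A few small points need tightening, though none affects correctness.

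In Step~1, when $E\ne\one$, the spectral projection of $EXE$ for the eigenvalue $0$ (as an operator on $\H$) contains $\one-E$, so it is not dominated by $E$. You should take the spectral projections of $EXE$ inside the algebra $E\A E$ with unit $E$, i.e. the projections $E\,p_\lambda(EXE)$. These lie in $\A$ and are $\le E$, and minimality then forces a single eigenvalue on $E\H$.

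In Step~2, $\A E\H$ should be read as the linear span of the vectors $AE\psi$. To conclude $F\in\A''$ you also need the orthogonal complement to be $\A'$-invariant. This holds because $\A'$ is $*$-closed, by the same argument you gave for $\A$.

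In Step~4, the symbol $e_k$ is used both for the basis of $\CC^m$ and for that of $\CC^n$.
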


\begin{proposition}
\label{GenStarAlg}
Let $\H$ be a finite-dimensional Hilbert space.
Then every $*$-algebra of operators on $\H$ can be written in the
form of Example 3 above.
\end{proposition}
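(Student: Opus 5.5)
The plan is to decompose $\A$ along its center and then apply Proposition \ref{factor} to each central summand. The center $\Z=\A\cap\A'$ is a commutative $*$-algebra of operators on $\H$. By Theorem \ref{Gelfand} it is isomorphic to $\C(\O)$ for some finite set $\O=\{1,\ldots,k\}$. Hence, as in Example 1, $\Z$ is spanned by mutually orthogonal projections $P_1,\ldots,P_k$ with $\sum_j P_j=\one$. These are exactly the minimal projections of $\Z$.

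Put $\H_j:=P_j\H$, so that $\H=\H_1\oplus\cdots\oplus\H_k$. Each $P_j$ lies in $\A'$, so every $A\in\A$ commutes with $P_j$ and therefore leaves every $\H_j$ invariant. It follows that $A=\bigoplus_j A|_{\H_j}$. Define $\A_j:=\set{A|_{\H_j}}{A\in\A}=\set{AP_j|_{\H_j}}{A\in\A}$, which is a unital $*$-algebra of operators on $\H_j$ (its unit is $P_j|_{\H_j}$).

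We then show $\A=\A_1\oplus\cdots\oplus\A_k$ as sets of block-diagonal operators. The inclusion $\subset$ is clear. For $\supset$, given $A^{(1)},\ldots,A^{(k)}\in\A$, the operator $\sum_j A^{(j)}P_j$ lies in $\A$, because each $P_j\in\Z\subset\A$. Its restriction to $\H_j$ is $A^{(j)}|_{\H_j}$, so every block-diagonal tuple is attained.

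The main step is to show that each $\A_j$ has trivial center, so that Proposition \ref{factor} applies. Suppose $C\in\A_j$ commutes with all of $\A_j$. Write $C=A|_{\H_j}$ and set $\hat C:=AP_j$, extended by $0$ on the other summands. Then $\hat C\in\A$, and by the direct-sum decomposition $\hat C$ commutes with every element of $\A$, so $\hat C\in\Z$. Thus $\hat C=\sum_i\l_iP_i$, and since $\hat C$ vanishes off $\H_j$ we get $\hat C=\l_jP_j$, i.e.\ $C=\l_j\one_{\H_j}$. This is exactly where minimality of the $P_j$ is used. The point needing care is that $\hat C$ commutes with all of $\A$, not only with $\A_j$; this holds because both $\hat C$ and every $A'\in\A$ are block-diagonal and $\hat C$ is zero outside block $j$.

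Proposition \ref{factor}, applied to $\A_j$ on $\H_j$, then gives a unitary identification $\H_j\cong\CC^{m_j}\ten\CC^{n_j}$ under which $\A_j=\set{\one\ten A_j}{A_j\in M_{n_j}}$. Assembling these identifications over $j$ gives
$$\H\cong\bigoplus_j\left(\CC^{m_j}\ten\CC^{n_j}\right),\qquad \A=\set{\textstyle\bigoplus_j(\one\ten A_j)}{A_j\in M_{n_j}},$$
which is the form of Example 3.
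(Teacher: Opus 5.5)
Your proposal is correct and takes the same route as the paper. You decompose along the central projections $P_j$ supplied by Theorem~\ref{Gelfand}, note that your $\A_j$ is the paper's $P_j\A P_j$ acting on $P_j\H$, check that each has trivial center, and apply Proposition~\ref{factor}; you also supply the details the paper calls ``not difficult'', including the identity $\A=\A_1\oplus\cdots\oplus\A_k$ that the paper's ``the statement follows'' leaves implicit.
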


\begin{proof}
The center $\A\cap\A'$ is an abelian $*$-algebra, so Theorem
\ref{Gelfand} applies, giving a set of projections $P_j$,
$j=1,\ldots,k$. Then it is not difficult to show that the unital
$*$-algebras $P_j\A P_j$ on the Hilbert subspaces $P_j\H$ satisfy
the condition of Proposition \ref{factor}. The statement follows.
\qed
\end{proof}


\subsection{The qubit}

\noindent
The simplest non-commutative $*$-algebra is $M_2$,
the algebra of all $2\times2$ matrices with complex entries.
And the simplest state on $M_2$ is $\half\tr$,
the quantum analogue of a fair coin.

\noindent
The events in this probability space are the orthogonal projections
in $M_2$: the complex $2\times2$ matrices $E$ satisfying
   $$E^2=E=E^*\;.$$
Let us see what these projections look like.
Since $E$ is self-adjoint,
it must have two real eigenvalues,
and since $E^2=E$ these must both be 0 or 1.
So we have three possibilities.

\begin{itemize}
\item
Both are 0; i.e. $E=0$.

\item
One of them is 0 and the other is 1.

\item
Both are 1; i.e. $E=\one$.
\end{itemize}

In the second case,
$E$ is a one-dimensional projection satisfying
   $$\tr E=0+1=1\and\det E=0\cdot1=0\;.$$
As $E^*=E$ and $\tr E=1$ we may write
\begin{equation}\label{projunitvectorR3}
   E=E(x,y,z)=\half\left(\matrix{1+z&x-iy\cr x+iy&1-z\cr}\right)\;.
\end{equation}
Then $\det E=0$ implies that
   $$\quarter((1-z^2)-(x^2+y^2))=0\Implies x^2+y^2+z^2=1\;.$$
So the one-dimensional projections in $M_2$ are parametrised by the
unit sphere $S_2$.

\smallskip\noindent\emph{Notation:}
For $a=(a_1,a_2,a_3)\in\RR^3$ let us write
   $$\s(a):=\left(\matrix{a_3&a_1-ia_2\cr a_1+ia_2&-a_3\cr}\right)
           =a_1\s_1+a_2\s_2+a_3\s_3\;,$$
where $\s_1,\s_2$ and $\s_3$ are the {\it Pauli matrices}
   $$\s_1:=\left(\matrix{0&1\cr1&0\cr}\right)\;,\quad
     \s_2:=\left(\matrix{0&-i\cr i&0\cr}\right)\;,\quad
     \s_3:=\left(\matrix{1&0\cr0&-1\cr}\right)\;.$$
We note that for all $a,b\in\RR^3$ we have
\begin{equation}
\s(a)\s(b)=\inp ab\done+i\s(a\times b)\;.
\label{sigmamult}
\end{equation}
We may now write (\ref{projunitvectorR3}) as
\begin{equation}
E(a):=\half(\one+\s(a)),\quad(\norm a=1)\;.
\label{parE}
\end{equation}

\noindent
In the same way the possible states on $M_2$ can be calculated.
We find that
\begin{equation}
\ph(A)=\tr(\r A)\where\r=\r(a):=\half(\one+\s(a)),\quad\norm a\le1\;.
\label{parrho}
\end{equation}

The probability of the event $E(a)$ in the state $\r(b)$ is given by
   $$\tr(\r(b)E(a))=\half(1+\inp ab)\;.$$
The events $E(a)$ and $E(b)$ are compatible if and only if $a=\pm b$.
Moreover we have for all $a\in S_2$:
   $$E(a)+E(-a)=\one\;,\quad E(a)E(-a)=0\;.$$

\noindent\emph{Interpretation:} The state of the qubit is given by a
vector $b$ in the three-dimensional unit ball. For every $a$ on the
unit sphere we can say with probability one that of the two events
$E(a)$ and $E(-a)$ exactly one will occur, $E(a)$ having probability
$\half(1+\inp ab)$. So we have a classical coin toss (with
probability for heads equal to $\half(1+\inp ab)$) for every
direction in $\RR^3$. The coin tosses in different directions are
incompatible. (See Fig. \ref{Bloch})

\sn The quantum coin toss is realised in nature: apart from photon
polarization (Section \ref{qubitphoton}), the spin direction of a
particle with total spin $\half$ behaves in this way.

\subsection{Photons}

\noindent There is a second natural way to parametrize the
one-dimensional projections in $M_2$, which is closer to the
description of polarization of photons, as treated in Section
\ref{qubitphoton}.


The projection onto the one-dimensional subspace spanned by the
unit vector $(\cos\a,e^{i\ph}\sin\a)$ mentioned in equation
(\ref{unitvectorC2}) of that section is given by
\begin{equation}\label{projunitvectorC2}
F(\a,\ph)=\left(\matrix{\cos^2\a&e^{-i\ph}\cos\a\sin\a\cr
                          e^{i\ph}\cos\a\sin\a&\sin^2\a\cr}\right)
\label{Falfa}
\end{equation}
Equating this projection to $E(x,y,z)$ in (\ref{projunitvectorR3})
we obtain the relations
   $$x=\sin2\a \cos\ph\;;\quad y=\sin2\a \sin\ph\;;\quad z=\cos2\a\;,$$
which define a mapping between the polarization states of a photon and
the points of the unit sphere in $\RR^3$, called the {\it Bloch sphere}
in this context.


\afbeelding{6cm}{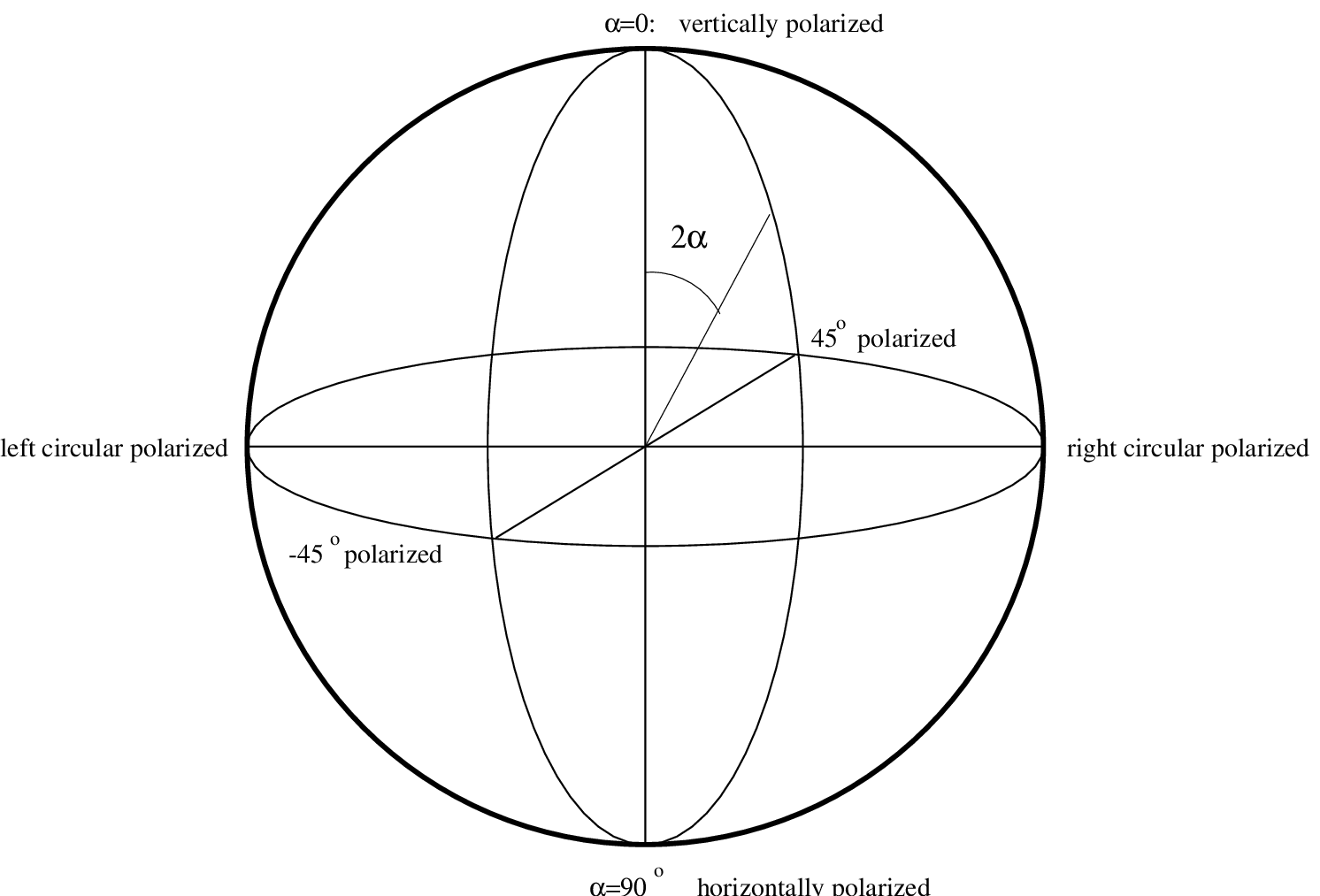}{Bloch sphere of the qubit}{Bloch}

%
%
\sn
In particular,
the projection $F(\a,0)$ onto the line in $\CC^2$ with real slope
$\tan\a$ with $\a\in[-\pi/2,\pi/2)$
is given by
\begin{equation}
F(\a,0)=\left(\matrix{\cos^2\a&\cos\a\sin\a\cr
                          \cos\a\sin\a&\sin^2\a\cr}\right)
       =E(\sin2\a,0,\cos2\a)\;.
\label{Falfa}
\end{equation}

Finally,
any atomic or molecular system,
only two energy levels of which are of importance in the experiment,
can be described by some $(M_2,\ph)$.

\sn\emph{Exercise:}
Let $f:\CC\cup\{\infty\}\to S_2$ be given by
\begin{eqnarray*}
f(0)&:=&(0,0,1)\;;\cr
         f(\infty)&:=&(0,0,-1)\;;\cr
      f(re^{i\ph})&:=&(\sin\th\cos\ph,\sin\th\sin\ph,\cos\th)\cr
      &\quad&\with\th=2\arctan r,\quad r\in(0,\infty),\ph\in[0,\pi)\;.\cr
\end{eqnarray*}
Show that $E(f(z))$ is the one-dimensional projection onto the line in
$\CC^2$ with slope $z\in\CC$.

\section{Operations on Probability Spaces}
\label{sec:Operations}

Our main objects of study will be {\it operations} on probability spaces.
This means that we shall focus attention on the input-output aspect
of probabilistic systems.

\subsection{Operations on classical probability spaces}

It could be maintained that operations are already the core of {\it classical}
probability.
We start with a definition on the level of points.

\begin{definition}
By an {\it operation} from a finite classical probability space $\O$ to a
finite classical probability space $\O'$ we mean an $\O\times\O'$
transition matrix,
i.e. a matrix $(t_{\o\o'})$ of nonnegative numbers satisfying
   $$\forall_{\o\in\O}:\quad\sum_{\o'\in\O'}t_{\o\o'}=1\;.$$
\end{definition}

\noindent\emph{Examples:}
\begin{enumerate}
\item
Let $\t$ be a bijection $\O\to\O'$.
We may think of rearranging a deck of cards, ($\O=\O'=\{\hbox{cards}\}$),
or the time evolution of a mechanical system ($\O=\O'=\hbox{phase space}$),
or the shift on sequences of letters,
or just some relabeling of the outcomes of a statistical experiment.
The associated matrix is
   $$t_{\o\o'}:=\cases{1&if $\o'=\t(\o)$,\cr 0&otherwise.\cr}$$

\item
Let $X:\O\to\O'$ be surjective.
We think of $X$ as an $\O'$-valued {\it random variable},
where $\O'$ is usually some subset of $\RR$ or $\RR^n$ or so.
The associated operation is that of `measuring $X$' or
`forgetting everything about $\o$ except the value of $X$'.
The associated matrix
is again
   $$t_{\o\o'}:=\cases{1&if $\o'=X(\o)$,\cr 0&otherwise.\cr}$$

\item
An inverse to the operation of Example 2 is given by
   $$t_{\o'\o}:=\cases{{{\pi(\{\o\})}\over{\pi(X^{-1}(\{\o'\}))}}
                        &if $\o'=X(\o)$,\cr 0&otherwise.\cr}$$
Here $\pi$ is some probability distribution, which we assume
to be everywhere nonzero.
This operation describes the {\it immersion} of a system $\O'$
into the larger system $\O$.
\end{enumerate}

\mn
It can be shown that every transition matrix can be decomposed
as a product of matrices of the types 3, 1 and 2.
So every operation can be decomposed as an immersion, followed
by a rearrangement and a restriction.
Such a decomposition is called a {\it dilation} of the operation in
question.

\subsection{Quantum Operations}
If $\A$ is a unital $*$-algebra describing a quantum system,
then we denote by $\A^*$ the dual of $\A$,
and by $\Astat$ the positive normalized functionals,
i.e. the {\it states} on $\A$.
By $M_n(\A)$ we denote the unital $*$-algebra of all $n\times n$-matrices
with entries in $\A$. Note that $M_n(\A)$ is isomorphic to $M_n\ten\A$.

\smallskip
Now suppose that we perform a physical operation which takes as input
a state on the system $\A$,
and yields as its output a state on the system $\B$.
Which maps $f:\Astat\to\Bstat$ can occur as descriptions of such an operation?
We formulate three natural requirements.

\begin{enumerate}
\item
$f$ must be an affine map.
This means that for all $\rho,\theta\in\Astat$ and all $\lambda\in[0,1]$:
   $$\l f(\rho)+(1-\l)f(\th)=f\bigl(\l\rho+(1-\l)\th\bigr)\;.$$
This requirement is a consequence of the {\it stochastic equivalence principle}
which states that a system which is in state $\rho$ with probability $\l$
and in state $\th$ with probability $1-\l$ can not be distinguished
from a system in the state $\l\rho+(1-\l)\th$.

\noindent
A map $f$ satisfying this condition can be extended to a unique
linear map $\A^*\to\B^*$,
since every element of $\A^*$ can be written as a linear combination of
(at most four) states on $\A$.
So $f$ must be the adjoint of some linear map $T:\B\to\A$.
We shall henceforth write $T^*$ instead of $f$.

\item
Of course, $f=T^*$ must still map $\Astat$ to $\Bstat$: for all $\rho\in\A^*$,
\begin{eqnarray*}
&\tr(T^*\rho)=\tr(\rho)\;;\\
&T^*\rho\ge0\quad\hbox{if}\quad\rho\ge0\;.
\end{eqnarray*}

\item
It would seem at first sight that nothing more can be said a priori
about $T^*$.
However, it was realised in the early 1980's by Karl Kraus \cite{Kraus}
that the positivity property has to be strengthened in quantum mechanics:
if the system under consideration is in a combined state
with some other system,
then after performing the operation $T^*$ on the former system,
the whole combination must still be in some (positive) state.

\noindent
Surprisingly,
this is not automatic in the quantum situation,
where `entanglement', as treated in Section \ref{sec:Bell},
can occur between the two systems. See Example \ref{PosNonCP} below.

\noindent
Therefore this stronger form of positivity must be added as a requirement:
For all $n\in\NN$:
   $$\id_n\ten T^*\quad\hbox{ maps states on }\quad M_n\ten\A
                  \quad\hbox{to states on }\quad M_n\ten\B\;.$$
\end{enumerate}

\sn
Requirement (3) is called {\it complete positivity} of the map $T^*$
(or $T$ for that matter).

\sn
Summarizing we arrive at the following definition,
which we shall formulate in the contravariant, `Heisenberg' picture.

\begin{definition}
A linear map $T:\B\to\A$ is called an {\it operation} (from $\A$ to $\B$!)
if the following conditions hold:

\begin{enumerate}
\item
$T(\one_\B)=\one_\A$;

\item
$T$ is completely positive,
i.e. $\id_n\ten T$ is positive $M_n(\B)\to M_n(\A)$ for all $n\in\NN$.
\end{enumerate}
\end{definition}

\noindent
Here $M_n(\A)$ stands for the algebra of $n\times n$ matrices with
entries in $\A$. This algebra is isomorphic to $M_n\otimes\A$.

\begin{example}\label{PosNonCP}

{\it A map which is positive, but not completely positive:}

\noindent
Let $\A:=M_2$ and let
   $$T^*:\A^*\to\A^*:\left(\matrix{a&b\cr c&d\cr}\right)\mapsto
               \left(\matrix{a&c\cr b&d\cr}\right)$$
be the transposition map.
Then $T^*$ is linear, positive, and preserves the trace.
However, $T^*$ is not completely positive since
   $$\id_2\ten T^*:{1\over2}
                \left(\matrix{1&0&0&1\cr0&0&0&0\cr0&0&0&0\cr1&0&0&1\cr}\right)
\mapsto{1\over2}\left(\matrix{1&0&0&0\cr0&0&1&0\cr0&1&0&0\cr0&0&0&1\cr}\right)\;.$$
The matrix on the left is a projection
(on the vector $(e_0\ten e_0+e_1\ten e_1)/\sqrt2\in\CC^2\ten\CC^2$;
compare the entangled state of Section 2.5.);
whereas the matrix on the left has eigenvalues $\half, \half, \half$
and $-\half$,
hence is not a valid density matrix.

\sn
However,
if $\A$ or $\B$ is abelian,
then any positive operator $T:\A\to\B$ is automatically completely positive.

\end{example}

\subsection{Examples of quantum operations}
\label{standardCP}
\begin{itemize}
\item
Let $U\in M_n$ be unitary.
Then the automorphism
$T:M_n\to M_n: A\mapsto U^*AU$ is an operation.
(See Lemma 1 below.)

\item
The $*$-homomorphism $j:M_k\to M_l\ten M_k:A\mapsto\one\ten A$
is an operation.
(See Lemma 1 below.)

\item
Let $\ph$ be a state on $M_k$.
Then the map
$E:M_l\ten M_k\to M_k: B\ten A\mapsto\ph(B)A$
is an operation.

\end{itemize}

\sn
The above examples are to be compared with those in Section 4.1.
We shall prove their validity in two Lemmas.

\begin{lemma}
If $\A\subset M_k$ and
$T:\A\to\B\subset M_l$ is a $*$-homomorphism,
i.e. if for all $A$, $B\in\A$ we have
$T(AB)=T(A)T(B)$ and $T(A^*)=T(A)^*$,
then $T$ is completely positive.
\end{lemma}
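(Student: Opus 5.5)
The plan is to reduce complete positivity to ordinary positivity of a single map, and then to verify ordinary positivity directly from the two defining properties of a $*$-homomorphism. The first observation is that $\id_n\ten T:M_n(\A)\to M_n(\B)$ is again a $*$-homomorphism. It acts entrywise, $(A_{ij})\mapsto(T(A_{ij}))$, so for products we get
$$\Bigl(\sum_k T(A_{ik})T(C_{kj})\Bigr)_{ij}=\Bigl(T\bigl(\sum_k A_{ik}C_{kj}\bigr)\Bigr)_{ij}$$
by linearity and multiplicativity of $T$. For the adjoint, $(A^*)_{ij}=A_{ji}^*$ is mapped to $T(A_{ji})^*=\bigl(T(A)^*\bigr)_{ij}$. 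It therefore suffices to prove that every $*$-homomorphism $S:\C\to\D$ between $*$-algebras of operators on finite-dimensional Hilbert spaces is positive. We then apply this with $S=\id_n\ten T$, viewing $M_n(\A)\subset M_n\ten M_k$ as acting on $\CC^n\ten\CC^k$.

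For positivity of $S$, take a positive $X\in\C$. The goal is to write $X=Y^*Y$ with $Y\in\C$, since then
$$S(X)=S(Y^*Y)=S(Y)^*S(Y)\ge0,$$
because $\inp\psi{Z^*Z\psi}=\norms{Z\psi}\ge0$ for every operator $Z$. The natural choice is $Y=X^{1/2}$. The point that needs care is that the positive square root, taken in the full matrix algebra, actually lies in the subalgebra $\C$.

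That membership is the main step. I would argue via the spectral decomposition $X=\sum_j\l_jP_j$ with distinct eigenvalues $\l_j\ge0$. Lagrange interpolation gives polynomials $p_j$ with $p_j(\l_i)=\d_{ij}$, so $P_j=p_j(X)\in\alg(\{X\})\subset\C$; here we use that $\C$ is a unital algebra, so polynomials in $X$ stay in $\C$. Consequently $X^{1/2}=\sum_j\l_j^{1/2}P_j\in\C$. Alternatively one could appeal to Theorem \ref{Gelfand} applied to the commutative $*$-algebra generated by $\one$ and $X$, but the interpolation argument is more direct.

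Combining the two steps, $\id_n\ten T$ is a $*$-homomorphism and hence positive for every $n$, so $T$ is completely positive.
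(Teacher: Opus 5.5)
Your proof is correct and is essentially the paper's argument. The paper takes a positive element of $M_n(\A)$ in the form $X^*X$ and computes entrywise $\inp{\psi}{(\id_n\ten T)(X^*X)\psi}=\sum_j\norms{\sum_i T(X_{ji})\psi_i}$, which is your identity $(\id_n\ten T)(X^*X)=(\id_n\ten T)(X)^*(\id_n\ten T)(X)$ written out in coordinates; your Lagrange-interpolation square-root step justifies what the paper takes for granted, namely that every positive element of $M_n(\A)$ is $Y^*Y$ with $Y\in M_n(\A)$.
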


\begin{proof}
We must show that for all $n\in\NN$ the map
   $$\id_n\ten T:
     \bigl(A_{ij}\bigr)_{i,j=1}^n\mapsto\bigl(T(A_{ij})\bigr)_{i,j=1}^n$$
is positive.
Indeed, for all $\psi=(\psi_1,\cdots,\psi_n)\in(\CC^l)^n$,
putting $A=X^*X$ with $X\in M_n(\A)$:
\begin{eqnarray*}
\inp{\psi}{(\id_n\ten T)(X^*X)\psi}
     &=&\sum_{i,i'=1}^l\inp{\psi_i}{T\bigl((X^*X)_{ii'}\bigr)\psi_{i'}}\cr
     &=&\sum_{i,i'=1}^l\sum_{j=1}^n
            \inp{\psi_i}{T\bigl(X_{ji}^*X_{ji'}\bigr)\psi_{i'}}\cr
     &=&\sum_{i,i'=1}^l\sum_{j=1}^n
            \inp{\psi_i}{T(X_{ji})^*T(X_{ji'})\psi_{i'}}\cr
     &=&\sum_{j=1}^n\norms{\sum_{i=1}^l T(X_{ji})\psi_i}\ge0\;.\cr
\end{eqnarray*}
\qed
\end{proof}

\begin{lemma}
\label{simpleCP}
Let $\A\subset M_k$, $\B\subset M_l$ and let $V$ be a linear map
$\CC^l\to\CC^k$.
Then
   $$T:\A\to\B:A\mapsto V^*AV$$
is completely positive.
\end{lemma}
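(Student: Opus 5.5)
The plan is to verify positivity of $\id_n\ten T$ for every $n$ directly, as in the proof of the previous Lemma. The key observation is that $\id_n\ten T$ is itself of the same form as $T$, namely conjugation by an amplified map. Concretely, for $X=\bigl(X_{ij}\bigr)_{i,j=1}^n\in M_n(\A)$ we have
   $$(\id_n\ten T)(X)=\bigl(V^*X_{ij}V\bigr)_{i,j=1}^n=(\one_n\ten V)^*\,X\,(\one_n\ten V)\;,$$
where $\one_n\ten V:(\CC^l)^n\to(\CC^k)^n$ acts as $\psi_1\oplus\cdots\oplus\psi_n\mapsto V\psi_1\oplus\cdots\oplus V\psi_n$. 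Checking this identity is a one-line block matrix computation.

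It then suffices to show that $W^*AW\ge0$ whenever $A\ge0$ and $W$ is any linear map between finite-dimensional Hilbert spaces. First I would take $A\in M_n(\A)$ positive. I would use positivity either as $A=Y^*Y$, in line with the paper's convention for positive elements, or as $\inp\chi{A\chi}\ge0$ for all $\chi$; in finite dimension the two are equivalent by diagonalization. Then for every $\psi\in(\CC^l)^n$,
   $$\inp{\psi}{(\one_n\ten V)^*A(\one_n\ten V)\psi}=\inp{(\one_n\ten V)\psi}{A(\one_n\ten V)\psi}\ge0\;,$$
or, with $A=Y^*Y$, the same quantity equals $\norms{Y(\one_n\ten V)\psi}\ge0$. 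Hence $(\id_n\ten T)(A)\ge0$ for every $n$, which is complete positivity.

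No real obstacle is expected. The only points needing care are bookkeeping. One must check that $T$ actually maps into $\B$: as stated, the Lemma implicitly assumes $V^*\A V\subset\B$, or else $\B$ may be taken to be $M_l$. One must also check that positivity in $M_n(\A)$ coincides with positivity as an operator on $(\CC^k)^n$, which holds because $M_n(\A)$ is a $*$-subalgebra of $M_n(M_k)=M_{nk}$ and positivity there is positive semidefiniteness. Finally, unitality is not claimed, since $V^*V=\one$ is not assumed; only complete positivity has to be proved.
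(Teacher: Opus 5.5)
Your proposal is correct and is essentially the paper's proof. The paper computes, for $\psi\in(\CC^l)^n$, $\inp{\psi}{(\id_n\ten T)(A)\psi}=\sum_{i,j}\inp{V\psi_i}{A_{ij}V\psi_j}\ge0$, which is exactly your $\inp{(\one_n\ten V)\psi}{A(\one_n\ten V)\psi}\ge0$ written out in block components; your bookkeeping remarks (that $V^*\A V\subset\B$ is implicitly assumed, and that positivity in $M_n(\A)$ means positive semidefiniteness on $(\CC^k)^n$) are accurate.
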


\begin{proof}
If $(A_{ij})_{i,j=1}^n\in M_n(\A)$ is positive,
then for all $(\psi_1,\cdots,\psi_n)\in(\CC^l)^n=\CC^n\ten\CC^l$ we have
\begin{eqnarray*}
\inp{\psi}{(\id_n\ten T)(A)\psi}
     &=&\sonij\inp{\psi_i}{T(A_{ij})\psi_j}\\
     &=&\sonij\inp{\psi_i}{V^*A_{ij}V\psi_j}\\
     &=&\sonij\inp{V\psi_i}{A_{ij}V\psi_j}\ge0\;.
\end{eqnarray*}
\qed
\end{proof}

\noindent
Lemma \ref{simpleCP} covers the third case in Example \ref{standardCP}
above since $\ph$ can be decomposed into
pure states as $\ph=\sum_i\inp\psi{\cdot\psi}$ and
   $$\ph(B)A=\sol i\l_i\inp{\psi_i}{B\psi_i}A
            =\sol i\l_i V_i^*(B\ten A)V_i\;,$$
where $V_i:\CC^k\to\CC^l\ten\CC^k:\th\mapsto \psi_i\ten\th$.

\subsection{Unraveling quantum operations}

The following important theorem, together with Proposition \ref{GenStarAlg},
characterizes all completely positive maps on finite dimensional
matrix algebras.

\begin{theorem}\label{Stinespring}
(Stinespring 1955).
Let $T$ be a linear map $M_k\to M_l$.
Then $T$ is completely positive if and only if
there exist $m\in\NN$ and operators $V_1,\ldots V_m:\CC^l\to\CC^k$
such that for all $A\in M_k$:
\begin{equation}\label{StinespringDecomp}
T(A)=\som i V^*_iAV_i\;.
\end{equation}
\end{theorem}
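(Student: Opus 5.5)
The easy direction (``if'') is already in hand. Each map $A\mapsto V_i^*AV_i$ is completely positive by Lemma \ref{simpleCP}. For every $n$, a sum of maps $\id_n\ten T_i$ that are positive is again positive. Hence $T=\som i V_i^*\cdot V_i$ is completely positive.

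For the ``only if'' direction I would use the Choi matrix. Let $e_1,\ldots,e_k$ be the standard basis of $\CC^k$ and $E_{ij}:=\ket{e_i}\bra{e_j}\in M_k$ the matrix units. The block matrix $(E_{ij})_{i,j=1}^k\in M_k(M_k)$ equals $k$ times the projection onto $\O:=\frac1{\sqrt k}\sum_i e_i\ten e_i$, so it is positive. Applying complete positivity with $n=k$ gives
$$C:=(\id_k\ten T)\bigl((E_{ij})_{i,j}\bigr)=\bigl(T(E_{ij})\bigr)_{i,j=1}^k\ \ge 0\quad\hbox{in } M_k\ten M_l\;.$$
Diagonalizing this positive matrix on $\CC^k\ten\CC^l$, I would write $C=\som r\ket{w_r}\bra{w_r}$ with $m\le kl$, taking $w_r=\sqrt{\l_r}\,(\hbox{eigenvector})$. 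Each $w_r$ decomposes as $w_r=\sum_{i=1}^k e_i\ten w_r^{(i)}$ with $w_r^{(i)}\in\CC^l$.

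Next I would define $V_r:\CC^l\to\CC^k$ by $V_r^*e_i:=w_r^{(i)}$, i.e.\ $V_r:=\sum_i\ket{e_i}\bra{w_r^{(i)}}$. Reading off the $(i,j)$ block of $C$ then gives
$$T(E_{ij})=\som r\ket{w_r^{(i)}}\bra{w_r^{(j)}}=\som r V_r^*\ket{e_i}\bra{e_j}V_r=\som r V_r^*E_{ij}V_r\;.$$
Both sides are linear in the argument, and the $E_{ij}$ span $M_k$, so $T(A)=\som r V_r^*AV_r$ for all $A$, which is (\ref{StinespringDecomp}).

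The only real content is the observation that complete positivity needs to be used only once, at level $n=k$, on the single positive element $(E_{ij})$. After that the proof is bookkeeping: the blocks of a sum of rank-one positive matrices $\ket w\bra w$ on $\CC^k\ten\CC^l$ have exactly the form $V^*E_{ij}V$. The one thing I would check carefully is the index convention, namely that $\bra{e_j}V_r=\bra{w_r^{(j)}}$ is consistent with the definition of the adjoint.
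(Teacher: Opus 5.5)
Your proof is correct and is essentially the paper's argument in coordinates: both use complete positivity once on a maximally entangled state, split the resulting positive object into rank-one pieces, and read each vector of a $kl$-dimensional tensor product as an operator $\CC^l\to\CC^k$ (your adjoint convention is fine, since $V_r^*e_j=w_r^{(j)}$ gives $\bra{e_j}V_r=\bra{w_r^{(j)}}$). The only difference is bookkeeping. You entangle on the input side and use the Choi matrix $\bigl(T(E_{ij})\bigr)_{i,j}$ via $k$-positivity, whereas the paper entangles on the output side, using the state $\o_T$ on $M_k\ten M_l^t$ built from $\O\in\CC^l\ten(\CC^l)'$ via $l$-positivity; its density matrix is $\frac1l$ times the entrywise complex conjugate of your $C$, so your normalization also avoids the factor $1/l$ coming from $\O$, which the last line of the paper's computation drops.
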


We shall give a proof based on a physical argument (cf. \cite{NiC}).
The system is put in an entangled state with a second system,
which for convenience we describe by the {\it opposite algebra} (see below).
Then we act on the main system with our operation $T$,
and by complete positivity we get a new state on the pair.
Surprisingly, this state fully characterizes the operation $T$.
By decomposing the state into vector states we shall obtain the unraveling
we wanted.

Let us first introduce some notation. If $\H$ is a (finite
dimensional) Hilbert space, let $\H'$ denote its {\it dual}, the
space of all linear functionals $\H\to\CC$. The elements of $\H'$
are of the form $\bar\th:\chi\mapsto\inp\th\chi$; in Dirac notation
$\bar\th$ is denoted as $\bra\th$. This dual $\H'$ is actually
isomorphic to $\H$ itself, but it is convenient to maintain the
distinction, as we shall see below. In particular, if $\H=\CC^n$,
then there is a natural action on $\H'$ of the algebra $M_n^t$, the
{\it opposite algebra} of $M_n$, which has the multiplication
reflected: $A^tB^t=(BA)^t$. The operator $A^t$ acts on $\bar\chi$ as
$A^t\bar\chi:=\bar\chi\circ A$.

\noindent
Now consider the tensor product $\H_{kl}:=\CC^k\ten(\CC^l)'$ of the
Hilbert space $\CC^k$ and the dual of $\CC^l$.
By identifying the vector $\psi\otimes\bar\th\in\H_{kl}$
with the operator $\ket\psi\bra\th$: $\chi\mapsto\inp\th\chi\cdot\psi$,
the Hilbert space $\H_{kl}$ can alternatively be viewed as the space
of all operators $\CC^l\to\CC^k$.
On this Hilbert space the
algebra $M_k\ten M_l^t$ acts naturally as follows:
   $$A\ten B^t:\psi\ten\bar\th\mapsto A\psi\ten B^t\bar{\th}
            \quad\biggl[\approx A\ket\psi\bra\th B\biggr]\;.$$
The space $\H_{ll}$ has a rotation invariant vector
(the so-called fully entangled state on $M_l\ten M_l^t$),
given by
   $$\O:={1\over\sqrt l}\sol i e_i\ten\bar{e_i}
      \quad\biggl[\approx{1\over\sqrt l}\sol i\ket{e_i}\bra{e_i}
                  =\one_l/\sqrt l\biggr]\;,$$
for {\it any} orthonormal basis $e_1,\ldots,e_l$ of $\CC^l$.
This vector has the property that
\begin{eqnarray}\label{Omegatrace}
\inp{\O}{(A\ten B^t)\O}
     &=&{1\over l}\sol i\sol j\inp{e_i\ten\bar{e_i}}
           {(A\ten B^t)e_j\ten\bar{e_j}}\cr
     &=&{1\over l}\sol i\sol j\inp{e_i}{Ae_j}\inp{\bar{e_i}}{B^t\bar{e_j}}\cr
     &=&{1\over l}\sol i\sol j\inp{e_i}{Ae_j}\inp{e_j}{B e_i}
     ={1\over l}\tr(AB)\;.
\end{eqnarray}

\sn
{\it Proof of Stinespring's Theorem.}
The `if' part follows immediately from Lemma 4.2.
For the `only if' part, assume that $T:M_k\to M_l$ is completely positive.
Let $\H_{ll}:=\CC^l\ten(\CC^l)'$ as above,
and let $\o$ denote the state
   $$\o(X):=\inp{\O}{X\O}$$
on $\B(\H_{ll})\approx M_l\ten M_l^t$.
Since $T$ is completely positive, the functional $\o_T$ on
$\B(\H_{kl})\approx M_k\otimes M_l^t$,
given by
   $$\o_T(A\otimes B^t):=\o(T(A)\otimes B^t)$$
is also a state. Decompose $\o_T$ into pure states
given by vectors $v_1,v_2,\ldots,v_m\in\H_{kl}$:
   $$\o_T(X)=\som i\inp{v_i}{Xv_i}\;.$$
Now, as noted above,
$v_i\in\H_{kl}$ can be considered as an operator $V_i:\CC^l\to\CC^k$.
We shall show that these operators satisfy the requirement
(\ref{StinespringDecomp}) of the theorem.
Indeed, for all $\psi,\th\in\CC^l$:
\begin{eqnarray*}
\som i\inp\psi{V_i^*AV_i\th}
     &=&\som i\inp{V_i\psi}{AV_i\th}\cr
     &=&\som i\inp{v_i}
         {\bigl(A\ten(\ket{\bar\psi}\bra{\bar\th})\bigr)v_i}_{\H_{kl}}\cr
     &=&\o_T\bigl(A\ten(\ket{\bar\psi}\bra{\bar\th})\bigr)\cr
     &=&\o\bigl(T(A)\ten(\ket{\bar\psi}\bra{\bar\th})\bigr)\cr
     &=&\tr\bigl(T(A)(\ket\th\bra\psi)\bigr)\cr
     &=&\inp\psi{T(A)\th}\;.\qquad\qquad\qquad\qquad\qed
\end{eqnarray*}
The second step is verified by substituting
$V_i=\sum_j\ket{\a_j^i}\bra{\b_j^i}$ with $\a_j^i\in\CC^k$,
$\b_j^i\in\CC^l$, and realizing that
$v_i=\sum_j\a_j^i\ten\bar{\b_j^i}$.

\subsection{Uniqueness of unravelings}\label{subsec:uniqueness}
(This section elaborates on a remark by Mark Fannes during the
Summer School. It can be skipped in a first reading.) The unraveling
(\ref{StinespringDecomp}) is not unique. If the matrices
$V_1,\ldots,V_m$ are linearly independent, then they are determined
by the completely positive map $T$ up to a transformation of the
form
\begin{equation}\label{RotationUnravel}
V_i':=\som j u_{ij}V_j\;,
\end{equation}
where $u$ is a unitary $m\times m$-matrix of complex numbers.
In this independent case the number $m$ of terms in the unraveling
takes its minimal value,
which we shall call the {\it rank} of the operation $T$.

\noindent In general, any number $m$ of terms, also larger than the
rank, can occur in the unraveling of $T$. But in that case the
operators $V_i$ are not linearly independent. In fact, the space
$\D$ of {\it dependencies}, given by
   $$\D:=\set{\l\in\CC^m}{\som i \bar\l_i V_i=0}\;,$$
has dimension $m-\rank(T)$,
and the matrix $u$ of (\ref{RotationUnravel}) is a partial isometry
with initial space $\D^\perp$ and final space $(\D')^\perp$,
where $\D'$ denotes the space of dependencies of the $V_i'$.

We shall now prove these statements in the context of the decomposition
of states.
From the proof of Theorem \ref{Stinespring} it is clear that they carry
over to operations.

\begin{proposition}\label{staterepr}
Let $\ph$ be a state on $\A:=M_k$, and let two
decompositions of $\ph$ into pure states be given:
\begin{equation}
   \ph(A)=\som i\inp{\psi_i}{A\psi_i}=\son j\inp{\th_j}{A\th_j}\;.
\label{decomp}
\end{equation}
Let $\D\subset\CC^m$ and $\D'\subset\CC^n$ denote the dependency spaces of
$\psi=(\psi_1,\ldots,\psi_m)$ and $\th=(\th_1,\ldots,\th_n)$ respectively.
Then $\psi$ and $\th$  are connected by a transformation of the form
   $$\th_j=\som i u_{ji}\psi_i\;.$$
where the $n\times m$ matrix $u$ describes
a partial isometry $\CC^m\to\CC^n$ with initial space $\D^\perp$
and final space $(\D')^\perp$.
In particular, if the $m$-tuple $(\psi_1,\ldots,\psi_m)$
and the $n$-tuple $(\th_1,\ldots,\th_n)$ are both sequences of independent
vectors, then $n=m$ and $u$ is unitary.
\end{proposition}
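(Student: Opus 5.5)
The plan is to encode each decomposition by a single linear map and compare the two maps through the density matrix. Define $\Psi:\CC^m\to\CC^k$ by $\Psi\l:=\som i\l_i\psi_i$ and $\Theta:\CC^n\to\CC^k$ by $\Theta\mu:=\son j\mu_j\th_j$. Then $\Psi\Psi^*=\som i\ket{\psi_i}\bra{\psi_i}$ and $\Theta\Theta^*=\son j\ket{\th_j}\bra{\th_j}$. Testing (\ref{decomp}) on all $A\in M_k$, for instance on the matrix units $A=\ket{e_p}\bra{e_q}$, shows that both equal the density matrix $\r$ of $\ph$. So the hypothesis becomes $\Psi\Psi^*=\Theta\Theta^*=\r$.

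Next I relate the kernels to the dependency spaces. By definition $\l\in\D$ iff $\som i\bar\l_i\psi_i=0$, i.e. iff $\Psi\bar\l=0$. Hence $\D$ is the complex conjugate of $\ker\Psi$, and likewise $\D'$ is the conjugate of $\ker\Theta$. The relation $\th_j=\som i u_{ji}\psi_i$ means $\Theta e_j=\Psi u^t e_j$, i.e. $\Theta=\Psi u^t$. So it is natural to construct a partial isometry $w:\CC^n\to\CC^m$ with $\Theta=\Psi w$ and then put $u:=w^t$. Transposition turns initial and final spaces into the conjugates of the corresponding spaces of $w$, and conjugation carries $(\ker\Psi)^\perp$ to $\D^\perp$. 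So it suffices to find $w$ with initial space $(\ker\Theta)^\perp$ and final space $(\ker\Psi)^\perp$. I will pin down this transpose/conjugate bookkeeping carefully, since this is where sign and conjugation conventions can slip.

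The construction of $w$ is the polar-decomposition argument. For $\mu\in\CC^n$ we have $\norms{\Theta^*\chi}=\inp\chi{\r\chi}=\norms{\Psi^*\chi}$ for every $\chi\in\CC^k$. Therefore the map $w_0:\Psi^*\chi\mapsto\Theta^*\chi$ is well defined and isometric from $\mathrm{Ran}\,\Psi^*=(\ker\Psi)^\perp$ onto $\mathrm{Ran}\,\Theta^*=(\ker\Theta)^\perp$; well-definedness follows from the isometry property. Extend $w_0$ by zero on $\ker\Psi$. This gives a partial isometry $v:\CC^m\to\CC^n$ with $v\Psi^*=\Theta^*$, and taking adjoints gives $\Theta=\Psi v^*$. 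Set $w:=v^*$. Its initial space is $(\ker\Theta)^\perp$ and its final space is $(\ker\Psi)^\perp$, as required.

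For the independent case, independence of the $\psi_i$ means $\ker\Psi=0$, so $\D=0$; likewise $\D'=0$. Then $u$ is an isometry from all of $\CC^m$ onto all of $\CC^n$, which forces $n=m$ and makes $u$ unitary. The main obstacle is only the conjugation bookkeeping between $\ker\Psi$ and $\D$. The analytic core, equal norms from $\Psi\Psi^*=\Theta\Theta^*$, is straightforward.
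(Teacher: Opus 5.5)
Your proof is correct, but it takes a different route from the paper's.

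\textbf{The paper's route.} The paper works in $\CC^m\ten\CC^k$. It regards $\psi=\sum_i e_i\ten\psi_i$ and $\th=\sum_j e_j\ten\th_j$ as vectors acted on by $\one\ten M_k$, and defines the isometry $U:(\one_m\ten A)\psi\mapsto(\one_n\ten A)\th$ between the cyclic subspaces $\L$ and $\L'$ (a GNS-type uniqueness argument). It then uses the fact that $U$ intertwines $\one\ten M_k$ to get $U=u\ten\one_k$. Identifying the initial space requires a separate step: $\L=\E\ten\CC^k$ with $\E^\perp\subset\D$, which gives $\L^\perp=\D\ten\CC^k$.

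\textbf{Your route.} You reduce everything to $\rho=\Psi\Psi^*=\Theta\Theta^*$ and the elementary fact that this forces $\Theta=\Psi w$ for a partial isometry $w$ from $(\ker\Theta)^\perp$ onto $(\ker\Psi)^\perp$. No commutant computation is needed, and the initial and final spaces come for free from ${\rm Ran}\,\Psi^*=(\ker\Psi)^\perp$.

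\textbf{The conjugation bookkeeping.} The price is the bookkeeping you flagged. Since $\D$ is defined through $\sum_i\bar\l_i\psi_i$, your matrix is $u=w^t=\bar{v}$, the entrywise conjugate of $v:\Psi^*\chi\mapsto\Theta^*\chi$. Your reduction is right, because transposition both swaps and conjugates the initial and final spaces. In the paper this conjugation is absorbed by the conjugate-linear slot of the inner product on $\CC^m\ten\CC^k$. Indeed, $(\one\ten\ket{\chi'}\bra{\chi})\psi=\overline{\Psi^*\chi}\ten\chi'$, so $\L=\overline{{\rm Ran}\,\Psi^*}\ten\CC^k=\D^\perp\ten\CC^k$. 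The paper's $U$ is therefore exactly $\bar{v}\ten\one_k$, and both proofs produce the same $u$.

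\textbf{What each buys.} The paper's version presents $u$ as an intertwiner of representations. This is the same mechanism behind uniqueness of GNS and minimal Stinespring dilations, and it explains why the result transfers to unravelings via $\o_T$. Yours is shorter and purely matrix-theoretic. It transfers equally well, since $\o_T$ is again a state on a full matrix algebra.
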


\begin{proof}
Consider $\psi$ and $\th$
as vectors in $\H:=(\CC^k)^m=\CC^m\ten\CC^k$ and
$\H':=(\CC^k)^n=\CC^n\ten\CC^k$ respectively.
Then (\ref{decomp}) can be written in the form
   $$\ph(A)=\inp{\psi}{(\one_m\ten A)\psi}=\inp{\th}{(\one_n\ten A)\th}\;.$$
Let $\L\subset\H$ and $\L'\subset\H'$ be the subspaces consisting of
the vectors $(\one_m\ten A)\psi$ and $(\one_n\ten A)\th$
respectively, where $A$ runs through the matrix algebra $\A=M_k$.
Let $U:\L\to\L'$ be given by
   $$U(\one_m\ten A)\psi:=(\one_n\ten A)\th\;.$$
Then $U$ is well-defined, isometric, and onto since
\begin{eqnarray*}
\norms{(\one_n\ten A)\th}
        &=&\inp{(\one_n\ten A)\th}{(\one_n\ten A)\th}
         =\inp{\th}{(\one_n\ten A^*A)\th}\cr
        &=&\ph(A^*A)
         =\norms{(\one_m\ten A)\psi}\;.
\end{eqnarray*}
We extend $U$ to a map $\H\to\H'$ by putting $U\chi=0$ for all
$\chi\in\H$ which are orthogonal to $\L$.

\noindent
Next, let us show that $U$ is actually of the form $u\ten\one_k$
for some partial isometry $u:\CC^m\to\CC^n$.
This is equivalent to the statement that for all $A\in M_k$:
   $$U(\one_m\ten A)=(\one_n\ten A)U\;,$$
which is true since
$(\one_m\ten A)$ leaves $\L^\perp$ invariant,
so that both sides vanish on $\L^\perp$.
And for $\chi\in\L$, i.e. for
$\chi=(\one_m\ten X)\psi$ with $X\in M_k$, we have
\begin{eqnarray*}
U(\one_m\ten A)\chi&=&U(\one_m\ten A)(\one_m\ten X)\psi
         =U(\one_m\ten AX)\psi=(\one_n\ten AX)\th\cr
        &=&(\one_n\ten A)(\one_n\ten X)\th
         =(\one_n\ten A)U(\one_m\ten X)\psi
         =(\one_n\ten A)U\chi\;.
\end{eqnarray*}
It remains to be shown that
   $$\L^\perp=\D\ten\CC^k$$
(and analogously $(\L')^\perp=\D'\ten\CC^k$).
Clearly, for all $\l\in\CC^m$ and $\mu\in\CC^k$,
\begin{equation}\label{Lperp}
   \inp{\l\ten\mu}{(\one\ten A)\psi}
   =\som i\bar{\l_i}\inp{\mu}{A\psi_i}
   =\Inp{A^*\mu}{\left(\som i\bar{\l_i}\psi_i\right)}\;.
\end{equation}
It follows that for $\l\in\D$ the vector $\l\ten\mu$ is orthogonal
to $\L$, so we have $\D\ten\CC^k\subset \L^\perp$. To prove the
converse inclusion, we first note that the orthogonal projection
onto $\L$ is $U^*U=u^*u\ten\one_k$, hence $\L=\E\ten\CC^k$ for some
subspace $\E$ of $\CC^m$. We must show that $\E^\perp\subset\D$. So
suppose that $\l\perp\E$, so that $\l\ten\mu\perp\L$ for all
$\mu\in\CC^k$. Putting $A=\one$ in (\ref{Lperp}) we find that the
left hand side, and hence the right hand side, is 0 for all $\mu$,
so $\som i\bar{\l_i}\psi_i=0$ and $\l\in\D$. \qed
\end{proof}


\subsection{Properties of quantum operations}

\noindent
When $A$ and $B$ are operators on a Hilbert space,
we mean by $A\ge B$ that the difference $A-B$ is a positive operator.
The following is an extremely useful inequality for operations.

\begin{proposition}\label{CauchySchwartz}
(Cauchy-Schwarz for operations)
Let $\A$ and $\B$ be *-algebras of operators on Hilbert spaces $\H$ and $\K$,
and let $T:\A\to\B$ be an operation.
Then we have for all $A\in\A$:
   $$T(A^*A)\ge T(A)^*T(A)\;.$$
\end{proposition}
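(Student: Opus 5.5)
The plan is to use complete positivity at level $n=2$, applied to one well-chosen positive $2\times2$ block matrix, and then compress the image with a suitable vector.

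\emph{Step 1: a positive block matrix.} For $A\in\A$ consider the element of $M_2(\A)$
$$X:=\pmatrix{A^*A & A^*\cr A & \one\cr}=\pmatrix{A & \one\cr 0&0\cr}^*\pmatrix{A&\one\cr 0&0\cr}\;.$$
It is of the form $Y^*Y$ with $Y\in M_2(\A)$, so it is positive.

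\emph{Step 2: apply $\id_2\ten T$.} Complete positivity of $T$, together with $T(\one)=\one$, gives
$$(\id_2\ten T)(X)=\pmatrix{T(A^*A) & T(A^*)\cr T(A) & \one\cr}\ge0\;.$$
We also need $T(A^*)=T(A)^*$. This holds because a positive map is $*$-preserving: write $A$ as a combination of self-adjoint parts, and each self-adjoint element as a difference of positive ones, on which $T$ gives self-adjoint images. Alternatively, the self-adjointness of the positive block matrix above forces its off-diagonal blocks to be adjoints of each other.

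\emph{Step 3: compress.} For $\psi\in\K$, put $\chi:=\psi\oplus(-T(A)\psi)\in\K\oplus\K$. Evaluating the positive operator of Step 2 on $\chi$ gives
$$0\le\inp\chi{(\id_2\ten T)(X)\chi}=\inp\psi{T(A^*A)\psi}-2\norms{T(A)\psi}+\norms{T(A)\psi}\;.$$
This equals $\inp\psi{\bigl(T(A^*A)-T(A)^*T(A)\bigr)\psi}$. Since $\psi$ is arbitrary, the claim follows.

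\emph{Main obstacle.} There is little difficulty here. The only point needing care is Step 2's identification of the off-diagonal block with $T(A)^*$. The alternative I would fall back on is the Stinespring form (Theorem \ref{Stinespring}) when $\A,\B$ are full matrix algebras. There $T(A)=V^*(\one\ten A)V$ with $V^*V=\one$, and
$$T(A^*A)-T(A)^*T(A)=V^*(\one\ten A^*)(\one-VV^*)(\one\ten A)V\ge0\;,$$
because $VV^*$ is a projection. The block-matrix route avoids having to reduce general $*$-algebras to that case, so I would present it first.
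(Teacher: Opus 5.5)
Your proof is correct and is essentially the paper's own argument. The paper applies $\id\ten T$ to the same kind of positive block matrix, $\left(\matrix{A^*A&-A^*\cr-A&\one\cr}\right)$, and compresses with $\psi\oplus T(A)\psi$, which differs from your choice only by a sign convention; your explicit check that $T(A^*)=T(A)^*$ fills in a step the paper uses without comment.
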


\begin{proof}
The operator $X\in M_2\ten\A$ given by
  $$X:=\left(\matrix{A^*A&-A^*\cr-A&\one\cr}\right)
      =\left(\matrix{A&-\one\cr0&0\cr}\right)^*
       \left(\matrix{A&-\one\cr0&0\cr}\right)$$
is positive.
Since $T$ is completely positive and $T(\one)=\one$,
it follows that also
  $$(\id\ten T)(X)=\left(\matrix{T(A^*A)&-T(A)^*\cr-T(A)&\one\cr}\right)$$
is a positive operator.
Putting $\xi:=\psi\oplus T(A)\psi$ we find that
  $$\inp\xi{(\id\ten T)X\xi}=\inp\psi{\bigl(T(A^*A)-T(A)^*T(A)\bigr)\psi}$$
is positive for all $\psi\in\H$.\qed
\end{proof}

\begin{theorem}\label{MultThm}
(Multiplication Theorem)
If $T:\A\to\B$ is  an operation and $T(A^*A)=T(A)^*T(A)$ for some $A\in\A$,
then $T(A^*B)=T(A)^*T(B)$ and $T(B^*A)=T(B)^*T(A)$ for all $B\in\A$.
\end{theorem}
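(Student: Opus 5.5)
The plan is to apply the Cauchy--Schwarz inequality for operations (Proposition \ref{CauchySchwartz}) not to $A$ alone, but to a perturbed element $A+\l B$, and then to exploit the hypothesis $T(A^*A)=T(A)^*T(A)$ to kill the leading term. Concretely, for $B\in\A$ and $\l\in\CC$, Proposition \ref{CauchySchwartz} gives
$$T\bigl((A+\l B)^*(A+\l B)\bigr)\ge T(A+\l B)^*T(A+\l B)\;.$$
Expanding both sides by linearity of $T$, and using that $T$ commutes with the adjoint (positive maps are $*$-preserving, since every element is a combination of positive ones), the terms $T(A^*A)$ and $T(A)^*T(A)$ cancel by hypothesis. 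Writing $D:=T(A^*B)-T(A)^*T(B)$ and $C:=T(B^*B)-T(B)^*T(B)\ge0$, what remains is the operator inequality
$$\l D+\bar\l D^*+|\l|^2C\ge0\qquad\hbox{for all }\l\in\CC\;.$$

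Next I will take expectations in an arbitrary vector $\psi\in\K$. Put $d:=\inp\psi{D\psi}$ and $c:=\inp\psi{C\psi}\ge0$. Then $2{\rm Re}(\l d)+|\l|^2c\ge0$ for all $\l$. Choosing $\l=-t\bar d$ with $t>0$ gives $-2t|d|^2+t^2|d|^2c\ge0$. Dividing by $t$ and letting $t\downarrow0$ forces $d=0$. So $\inp\psi{D\psi}=0$ for every $\psi$, and by polarization on a complex Hilbert space $D=0$. This is exactly $T(A^*B)=T(A)^*T(B)$.

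The second identity follows by taking adjoints: $T(B^*A)=T\bigl((A^*B)^*\bigr)=T(A^*B)^*=\bigl(T(A)^*T(B)\bigr)^*=T(B)^*T(A)$.

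The only point needing care is the $*$-preservation $T(X^*)=T(X)^*$, which I use both in the expansion and in the final adjoint step. I would justify it as follows. Self-adjoint $X$ is a difference of positive elements, and $T$ maps positive elements to positive, hence self-adjoint, ones. So $T$ maps self-adjoint elements to self-adjoint ones, and by splitting a general $X$ into real and imaginary parts, $T(X^*)=T(X)^*$. Beyond that, the argument is the standard quadratic-form trick and I expect no real obstacle.
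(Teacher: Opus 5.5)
Your proof is correct and is essentially the paper's argument. Both apply the Cauchy--Schwarz inequality for operations to $A+\l B$, cancel the zeroth-order terms using the hypothesis, and conclude that the term linear in $\l$ must vanish.

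The only difference is bookkeeping. The paper takes $\l$ real and gets $T(A^*B+B^*A)\ge T(A)^*T(B)+T(B)^*T(A)$; it then obtains equality and the two separate identities through the substitutions $A\mapsto iA$, $B\mapsto -iB$ and $B\mapsto iB$. You instead take $\l$ complex, test against vectors and use polarization, and get the second identity by taking adjoints via the $*$-preservation of $T$, which you justify correctly.
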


\begin{proof}
Take any $B\in\A$ and $\l\in\RR$. Then
   $$T\bigl((A^*+\l B^*)(A+\l B)\bigr)=
     T(A)^*T(A)+\l T(A^*B+B^*A)+\l^2 T(B^*B)\;,$$
while by Cauchy-Schwartz
\begin{eqnarray*}
T\bigl((A^*+\l B^*)&(A+\l B)\bigr)\cr
     &\ge T(A)^*T(A)+\l(T(A)^*T(B)+T(B)^*T(A))+\l^2 T(B)^*T(B))\;.
\end{eqnarray*}
This inequality holds for all $\l\in\R$ which implies
   $$T(A^*B+B^*A)\ge T(A)^*T(B)+T(B)^*T(A)\;.$$
Replacing $A$ by $iA$ and $B$ by $-iB$ shows that the opposite inequality
also holds, so we have equality.
Finally replacing only $B$ by $iB$ shows that $T(A^*B)=T(A)^*T(B)$
and $T(B^*A)=T(B)^*T(A)$.
\qed
\end{proof}

\noindent
In particular, if a Cauchy-Schwartz {\it equality} holds for an operation
$T$ then $T$ is a *-homomorphism.

\begin{theorem}\label{EmbThm}
(Embedding theorem)
Let $(\A,\ph)$ and $(\B,\psi)$ be nondegenerate quantum probabality spaces,
and let $j:\A\to\B$, $E:\B\to\A$ be operations which preserve the states.
If
   $$E\circ j=\id_\A\;,$$
then $j$ is an injective *-homomorphism and $P:=j\circ E$ is a conditional
expectation,
i.e.,
\begin{equation}\label{condexp}
   P(C_1BC_2)=C_1P(B)C_2
\end{equation}
for all $C_1,C_2\in j(\A)$ and all $B\in\B$.
\end{theorem}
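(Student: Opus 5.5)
The plan is to show that $j$ satisfies Cauchy--Schwarz with equality, using faithfulness of the states. By the Multiplication Theorem (Theorem \ref{MultThm}) this makes $j$ a $*$-homomorphism and also lets multiplicative elements be pulled through the operation $E$. "Nondegenerate" I read as meaning the states $\ph$ and $\psi$ are faithful, i.e.\ $\ph(A^*A)=0$ implies $A=0$, and likewise for $\psi$.

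\emph{Step 1: $j$ is a $*$-homomorphism.} Fix $A\in\A$. By Proposition \ref{CauchySchwartz} applied to $j$,
\[
D:=j(A^*A)-j(A)^*j(A)\ge0 .
\]
Applying $E$, using $E\circ j=\id_\A$, and then Cauchy--Schwarz for $E$ gives
\[
E(D)=A^*A-E\bigl(j(A)^*j(A)\bigr)\le A^*A-E(j(A))^*E(j(A))=A^*A-A^*A=0 .
\]
Since $E$ is positive, $E(D)\ge0$, and hence $E(D)=0$. Because $E$ preserves the states, $\psi(D)=\ph(E(D))=0$. Faithfulness of $\psi$ applied to $D^{1/2}$ then yields $D=0$. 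So Cauchy--Schwarz holds with equality for every $A$, and by Theorem \ref{MultThm} (with the remark following it) $j$ is a $*$-homomorphism. Injectivity is immediate, since $E\circ j=\id_\A$.

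\emph{Step 2: the module property (\ref{condexp}).} Let $C=j(A)$. By Step 1 we have $E(C^*C)=E(j(A^*A))=A^*A=E(C)^*E(C)$, so Theorem \ref{MultThm} applies to $E$ at the element $C$. This gives
\[
E(C^*B)=E(C)^*E(B)=A^*E(B)\quad\text{and}\quad E(B^*C)=E(B)^*A
\]
for all $B\in\B$. Taking adjoints (operations preserve $*$, being positive) gives $E(BC)=E(B)A$. Since $j(\A)$ is closed under $*$, every $C_1\in j(\A)$ has the form $C^*$ for some $C\in j(\A)$. Combining the two identities yields $E(C_1BC_2)=E(C_1)E(B)E(C_2)$ for $C_1,C_2\in j(\A)$. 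Applying the homomorphism $j$ and using $j(E(C_i))=C_i$ (valid because $C_i=j(A_i)$) gives $P(C_1BC_2)=C_1P(B)C_2$.

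The main obstacle is Step 1, specifically where nondegeneracy is needed. Without faithfulness, $E(D)=0$ would not force $D=0$, so the precise meaning of ``nondegenerate'' must be fixed as faithfulness of $\psi$; this is the essential input. Faithfulness of $\ph$ is then not even needed. The remaining steps are routine applications of Theorem \ref{MultThm}.
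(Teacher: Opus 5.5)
Your proof is correct and follows essentially the same route as the paper. Both squeeze $E\bigl(j(A^*A)-j(A)^*j(A)\bigr)$ to zero via Cauchy--Schwarz for $j$ and $E$ together with $E\circ j=\id_\A$, then use state preservation and nondegeneracy (faithfulness) of $\psi$ to get $j(A^*A)=j(A)^*j(A)$, and finally apply the Multiplication Theorem to $E$ at the elements $j(A_1)$, $j(A_2)$ before applying $j$; as you observe, faithfulness of $\ph$ is not used in the paper's argument either.
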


\noindent
Following the language used in Section 4.1. we shall call $j$ a
{\it random variable} and $P$ the {\it conditional expectation
with respect to $\psi$, given $j$}.
Compare the following proof with that of Theorem 4.1.

\begin{proof}
For any $A\in\A$ we have by Cauchy-Schwartz
\begin{equation}\label{CSjA}
   A^*A=E\circ j(A^*A)\ge E(j(A)^*j(A))\ge E\circ j(A)^*E\circ j(A)
         =A^*A\;,
\end{equation}
so we have equalities here.
In particular
         $$\psi\bigl(j(A^*A)-j(A)^*j(A)\bigr)
    =\ph\circ E\bigl(j(A^*A)-j(A)^*j(A)\bigr)
    =0\;,$$
and as $(\B,\psi)$ is non-degenerate, $j(A^*A)=j(A)^*j(A)$,
i.e. $j$ is a *-homomorphism.
$j$ is injective since it has the left-inverse $E$.

\noindent But also from (\ref{CSjA}) we have
$E\bigl(j(A)^*j(A)\bigr)=E\circ j(A)^* E\circ j(A)$. The
Multiplication Theorem \ref{MultThm} then implies that for all
$B\in\B$ and $A_1\in\A$,
   $$E(j(A_1)^*B)=E\circ j(A_1)^*E(B)=A_1^*E(B)\;,$$
and similarly, with $A_2\in\A$:
   $$E\bigl(j(A_1)^*Bj(A_2)\bigr)=E\bigl(j(A_1)^*B\bigr)E\circ j(A_2)
                                 =A_1^*E(B)A_2\;.$$
Applying $j$ to both sides we find (\ref{condexp}).\qed
\end{proof}

\section{Quantum impossibilities}
\label{sec:QuantImp}
The result of any physical operation applied on a probabilistic
system (quantum or not) is described by a completely positive identity
preserving map from the state space of that system to the state space
of the resulting system.
This imposes strong restrictions on what can be done.
Some of these are well-known quantum principles, such as the Heisenberg
principle (`no measurement without disturbance'),
some are surprising and relatively recent discoveries (`no cloning'),
but all of them obtain quite neat formulations in the language
of quantum probability.

\subsection{`No cloning'}\label{nocloning}
In its original formulation \cite{Die,WoZ} the `No Cloning Theorem'
dealt with the reproduction of nonorthogonal vector states. Here we
give an algebraic version, which distinguishes clearly between the
classical and the quantum case.

 {\it
`Cloning'}, or
--- more mundanely
--- {\it copying} a stochastic object is an operation which takes as
input an object in some state $\rho$ and yields as its output a pair
of objects with identical state spaces, such that, if we throw away
one of them, we are left with a single object in the state $\rho$.
(Cf. Fig. \ref{copier}, which is actually not complete: the same
equality should hold with the other output line blocked.)

\begin{figure}
\centering
\includegraphics[height=2cm]{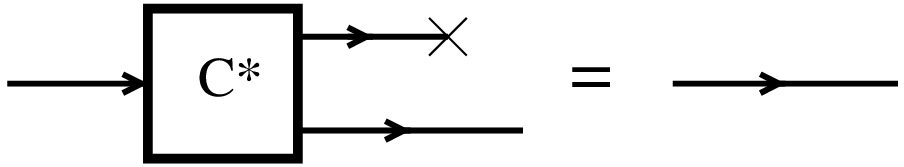}
\caption{Definition of a copier}
\label{copier}
\end{figure}

\noindent
In a formula: for all $\rho\in\Astat$:
\begin{equation}\label{cloningS}
   (\tr\ten\id)\circ C^*(\rho)=(\id\ten\tr)\circ C^*(\rho)=\rho\;.
\end{equation}
Reformulated in the Heisenberg picture:
We call an operation $C:\A\ten\A\to\A$ a {\it copying opration}
or {\it copier} if for all $A\in\A$:
\begin{equation}\label{cloning}
   C(\one\ten A)=C(A\ten\one)=A\;.
\end{equation}

\noindent
As is well known, copying presents no problem in classical physics,
or classical probability.
Here is an example of a classical copying operation.
For simplicity, let us think of the operation of copying $n$ bits.
Let $\O$ denote the space $\{0,1\}^n$ of all strings of $n$ bits,
and let $\g$ be the `copying' map $\O\to\O\times\O:\o\mapsto(\o,\o)$.
This map induces an operation
   $$C:\quad \C(\O)\times \C(\O)\to \C(\O):
       \quad Cf(\o):=f\circ\g(\o)=f(\o,\o)\;.$$
Clearly, for all $f\in\C(\O)$:
   $$C(\one\ten f)(\o)=(\one\ten f)(\o,\o)=f(\o)\;,$$
and the same holds for $C(f\ten\one)$, so (\ref{cloning}) is satisfied.
In the Schr\"odinger picture our operation looks as follows:
for any probability distribution $\pi$ on $\O$,
   $$(C^*\pi)(\nu,\o)=\d_{\nu\o}\pi(\o)\;,$$
and we see that (\ref{cloningS}) is satisfied:
   $$(\tr\ten\id)\circ C^*(\pi)(\o)
         =\sum_{\nu\in\O}\d_{\nu\o}\pi(\o)=\pi(\o)\;.$$
The following theorem says that this construction
is only possible in the abelian (i.e. commutative) case.

\begin{theorem}
(`No cloning')
Let $\A$ be a $*$-algebra of operators on a (finite dimensional) Hilbert space.
Then $\A$ admits a copying operation if and only if $\A$ is abelian.
\end{theorem}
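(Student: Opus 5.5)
The plan is to prove the two directions separately, using the Multiplication Theorem \ref{MultThm} for the hard direction.

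\emph{If $\A$ is abelian, a copier exists.} By Theorem \ref{Gelfand}, $\A\cong\C(\O)$ for some finite set $\O$, and then $\A\ten\A\cong\C(\O\times\O)$. We take the classical copier constructed just above the theorem, $Cf(\o):=f(\o,\o)$. This map is a $*$-homomorphism, so it is completely positive by the first Lemma of Section \ref{standardCP}. It is unital, and it satisfies (\ref{cloning}). Transporting $C$ back through the isomorphism gives a copying operation on $\A$.

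\emph{If a copier exists, $\A$ is abelian.} Let $C:\A\ten\A\to\A$ be a copying operation. The idea is to show that $C$ is multiplicative on the two tensor factors, and then use the fact that these factors commute in $\A\ten\A$.

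First, take $X:=A\ten\one$. Then
$$C(X^*X)=C(A^*A\ten\one)=A^*A=C(X)^*C(X),$$
so the Cauchy--Schwarz inequality (Proposition \ref{CauchySchwartz}) holds with equality at $X$.

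Second, apply the Multiplication Theorem \ref{MultThm} with this $X$ and with $Y:=\one\ten B$. This gives
$$C\bigl((A^*\ten\one)(\one\ten B)\bigr)=C(A\ten\one)^*C(\one\ten B),$$
that is,
$$C(A^*\ten B)=A^*B .$$
The theorem also gives the product in the other order,
$$C\bigl((\one\ten B)(A^*\ten\one)\bigr)=C(\one\ten B)\,C(A^*\ten\one)=BA^* .$$
To get this, apply the same argument with the roles of the factors swapped: equality also holds for $Y'=\one\ten B^*$, and we use the second identity in Theorem \ref{MultThm}.

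Third, in $\A\ten\A$ the elements $A^*\ten\one$ and $\one\ten B$ commute, since both products equal $A^*\ten B$. Hence
$$A^*B=C(A^*\ten B)=BA^* .$$
Because $A^*$ ranges over all of $\A$ as $A$ does, $\A$ is abelian.

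The main point needing care is the second step. We must check that the hypotheses of Theorem \ref{MultThm} really yield both orders of the product. Concretely: from equality at $A\ten\one$ we get $C((A^*\ten\one)Z)=A^*C(Z)$ for all $Z$. Taking adjoints, or using equality at $A^*\ten\one$ as well, we also get $C(Z(A\ten\one))=C(Z)A$. With $Z=\one\ten B$ these give $A^*B$ and $BA$ for the same element $A^*\ten B$, and $A\ten B$ respectively. Everything else is routine.
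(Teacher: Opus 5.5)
Your proof is correct and is essentially the paper's. For the abelian direction it uses Gel'fand's theorem plus the classical copier; for the converse it uses the Cauchy--Schwarz equality on $A\ten\one$ (the paper uses $\one\ten A$ instead), the Multiplication Theorem, and the fact that $A\ten\one$ and $\one\ten B$ commute in $\A\ten\A$. Your explicit remark that the classical copier is a $*$-homomorphism, and hence completely positive, fills in a detail the paper leaves implicit. One harmless slip: the identity that gives $BA^*$ is the first identity of Theorem \ref{MultThm} applied at $\one\ten B^*$, not the `second' one as you say.
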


\begin{proof}
If $\A$ is abelian, by Gel'fands Theorem (Theorem \ref{Gelfand}),
$\A$ is isomorphic to $\C(\O)$ for some finite set $\O$,
and the above construction of a copier applies.
Conversely, suppose that $C:\A\ten\A\to\A$ is a copying operation.
Then (\ref{cloning}) implies that for all $A\in\A$:
   $$C\bigl((\one\ten A)^*(\one\ten A)\bigr)=C(\one\ten A^*A)=A^*A
        =C(\one\ten A)^*C(\one\ten A)$$
Then it follows from the Multiplication Theorem \ref{MultThm} that
for all $A,B\in\A$:
\begin{eqnarray*}
AB&=&C(A\ten\one)C(\one\ten B)
               =C\bigl((A\ten\one)(\one\ten B)\bigr)\cr
              &=&C\bigl((\one\ten B)(A\ten\one)\bigr)
               =C(\one\ten B)C(A\ten\one)=BA\;.\qquad\qed
\end{eqnarray*}
\end{proof}

\subsection{`No classical coding'}\label{noclassclon}
Closely related to the above is the rule that
`quantum information cannot be classically coded':
It is not possible to operate on a quantum system,
extracting some information from it,
and then from this information reconstruct the quantum system in its original
state:
   $$\rho\in\A^*\mathop{\longmapsto}\limits^{C^*}\pi\in\B^*
                \mathop{\longmapsto}\limits^{D^*}\r\in\A^*\;.$$
We formulate this theorem in the contravariant (`Heisenberg') picture:

\begin{theorem}
Let $\A$ and $\B$ be *-algebras,
and let $C:\B\to\A$ and $D:\A\to\B$ be operations,
(`Coding' and `Decoding'), such that $C\circ D=\id_\A$.
Then if $\B$ is abelian, so is $\A$.
\end{theorem}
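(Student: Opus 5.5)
We follow the proof of the No Cloning theorem, with the Multiplication Theorem~\ref{MultThm} doing the work. Since $C\circ D=\id_\A$, Cauchy--Schwarz (Proposition~\ref{CauchySchwartz}) applied twice gives for every $A\in\A$:
$$A^*A=C\bigl(D(A^*A)\bigr)\ge C\bigl(D(A)^*D(A)\bigr)\ge C(D(A))^*C(D(A))=A^*A\;,$$
so all these inequalities are equalities. In particular
$$C\bigl(D(A)^*D(A)\bigr)=C(D(A))^*C(D(A))\;,$$
which says that $T:=C$ satisfies the Cauchy--Schwarz equality at the element $D(A)\in\B$.

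By the Multiplication Theorem~\ref{MultThm}, it follows that for all $A\in\A$ and all $X\in\B$:
$$C\bigl(D(A)^*X\bigr)=A^*C(X)\And C\bigl(XD(A)\bigr)=C(X)A\;.$$
The second identity comes from the version $T(B^*A)=T(B)^*T(A)$ with $B=X^*$. Now take $A_1,A_2\in\A$. Since $\A$ is a $*$-algebra, $A_1^*\in\A$, and we may apply the above with $X=D(A_2)$ and with $A$ replaced by $A_1^*$:
$$A_1A_2=C\bigl(D(A_1^*)^*D(A_2)\bigr)\;.$$
Here $D(A_1^*)^*=D(A_1)$, because a positive map is $*$-preserving. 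Hence
$$A_1A_2=C\bigl(D(A_1)D(A_2)\bigr)\;.$$
Because $\B$ is abelian, $D(A_1)D(A_2)=D(A_2)D(A_1)$, and applying the same identity with the roles of $A_1,A_2$ exchanged gives
$$A_1A_2=C\bigl(D(A_1)D(A_2)\bigr)=C\bigl(D(A_2)D(A_1)\bigr)=A_2A_1\;.$$
So $\A$ is abelian.

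The only point needing care is whether Cauchy--Schwarz may be applied to $D$ when $\B$ is abelian and $\A$ need not be. It may: Proposition~\ref{CauchySchwartz} holds for any operation between $*$-algebras, and $D$ is assumed to be an operation, hence completely positive. We also need $D(A^*)=D(A)^*$, which holds because a positive map sends self-adjoint elements to self-adjoint elements, and every element is a combination $H+iK$ of self-adjoint $H,K$. Note that nondegeneracy of states is not needed here, unlike in the Embedding Theorem~\ref{EmbThm}: the equality is obtained directly inside $\A$ because $C\circ D$ is exactly the identity.
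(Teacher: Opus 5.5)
Your proof is correct, but it finishes differently from the paper.

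The paper runs the Cauchy--Schwarz chain twice, once for $A^*A$ and once for $AA^*$. This gives $A^*A=C\bigl(D(A)^*D(A)\bigr)$ and $AA^*=C\bigl(D(A)D(A)^*\bigr)$. Since $\B$ is abelian, $D(A)$ is normal, so every $A\in\A$ is normal. The final step, that an algebra in which every element is normal is abelian, is left as an exercise: for self-adjoint $H,K\in\A$, normality of $H+iK$ forces $HK=KH$.

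You run the chain only once and feed the resulting equality into the Multiplication Theorem~\ref{MultThm} for $C$. This gives the stronger identity $C\bigl(D(A_1)D(A_2)\bigr)=A_1A_2$, from which commutativity passes directly from $\B$ to $\A$.

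What each route buys:
\begin{itemize}
\item The paper's argument uses less machinery (only Cauchy--Schwarz), but delegates a small algebraic step to the reader.
\item Yours is self-contained given Theorem~\ref{MultThm} and mirrors the No-Cloning proof.
\item Yours also yields structural information. Your bimodule identity $C\bigl(D(A_1)^*X\bigr)=A_1^*C(X)$ is exactly the mechanism in the proof of the Embedding Theorem~\ref{EmbThm}.
\item Nondegeneracy is needed there only to make $j$ a homomorphism, which you never require. So your remark that no nondegeneracy assumption is needed is right.
\end{itemize}
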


\begin{proof}
We have for all $A\in\A$:
\begin{eqnarray*}
A^*A&=&C\circ D(A^*A)\ge C\bigl(D(A)^*D(A)\bigr)\ge A^*A\cr
                    \hbox{and}\quad AA^*
     &=&C\circ D(AA^*)\ge C\bigl(D(A)D(A)^*\bigr)\ge AA^*\;,
\end{eqnarray*}
so that we again have equality everywhere.
If $\B$ is abelian, we have $D(A)^*D(A)=D(A)D(A)^*$,
so that $A^*A=AA^*$.
\qed
\end{proof}

\smallskip\noindent\emph{Exercise:}
Prove that, if $A^*A=AA^*$ for all $A\in\A$, then $\A$ is abelian.

\subsection{The Heisenberg Principle}
The {\it Heisenberg principle}
states --- roughly speaking --- that no information on a quantum
system can be obtained without changing its state.

\sn
In this form, the statement is not so interesting:
if we realise that the {\it state} of the system expresses the expectations
of its observables,
given the information we have on it,
it is no wonder that this state changes once we gain information!

\noindent
A more precise formulation is the following:

{\smallskip\noindent \narrower{\sl If we extract information from a
system whose algebra $\A$ is a factor (i.e. $\A\cap\A'=\CC\one$),
and if we {\it throw away} (disregard) this information, then still
it can not be avoided that some initial states are altered.}\par}

\sn
Let us work towards a mathematical formulation.

\noindent
A {\it measurement} is an operation performed on a physical system
which results in the extraction of information from that system,
while possibly changing its state.

\noindent
So a measurement is an operation
   $$M^*:\A^*\to\A^*\ten\B^*\;,$$
where $\A$ describes the physical system,
and $\B$ the output part of a measurement apparatus
which we couple to it.
$\A^*$ consists of states and $\B^*$ of probability distributions on the
outcomes.
So $\B$ will be commutative, but we do not need this property here.

\noindent
Now suppose that no initial state is altered by the measurement:
   $$(\id\ten\tr)M^*(\rho)=\rho\qquad\forall_{\rho\in\A^*}\;.$$
Suppose also that $\A$ is a factor.
We claim that no information can be obtained on $\rho$:
\begin{equation}\label{NoInform}
  (\tr\ten\id)M^*(\rho)=\th\;,
\end{equation}
where $\th$ does not depend on $\rho$.

\sn The diagram of Fig. \ref{HeisenbergPrinciple} symbolically
expresses this fact.
\begin{figure}
\centering
\includegraphics[height=5cm]{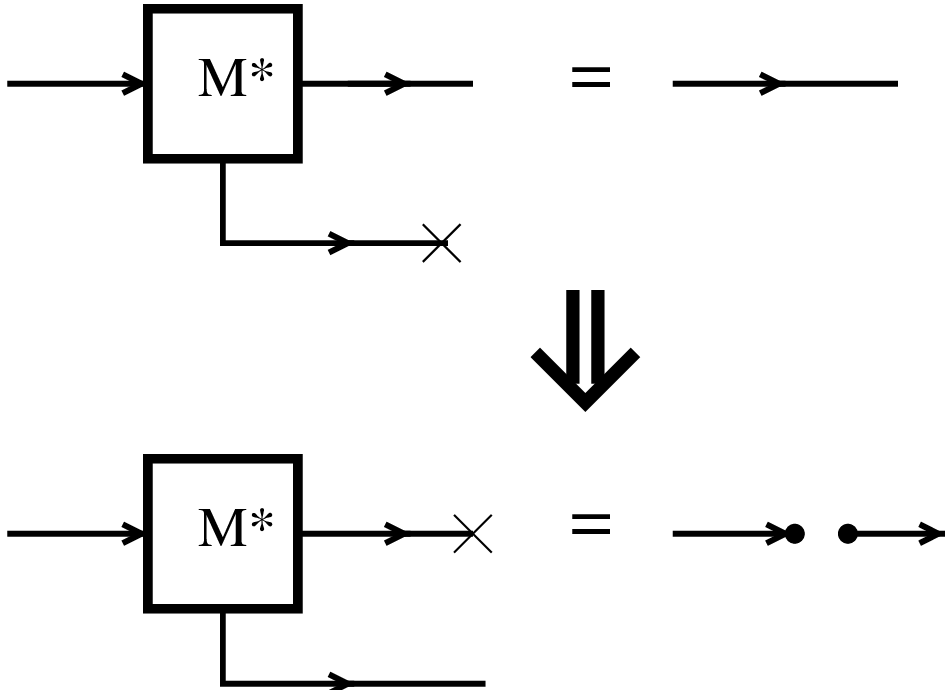}
\caption{The Heisenberg Principle}
\label{HeisenbergPrinciple}       
\end{figure}

\noindent
We again formulate and prove the theorem in the contravariant picture:

\begin{theorem}       
(Heisenberg's Principle)
Let $M$ be an operation $\A\ten\B\to\A$
such that for all $A\in\A$,
   $$M(A\ten\one)=A\;,$$
then
   $$M(\one\ten B)\in\A\cap\A'\;.$$
In particular,
if $\A$ is a factor, then for some fixed state $\th$ on $\B$:
\begin{equation}\label{NoInformHeis}
M(\one\ten B)=\th(B)\cdot\one_\A.
\end{equation}
\end{theorem}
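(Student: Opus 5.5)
The plan is to reuse the argument from the `No cloning' theorem, with the Multiplication Theorem \ref{MultThm} as the key tool. The hypothesis $M(A\ten\one)=A$ makes $M$ multiplicative on the subalgebra $\A\ten\one$. Multiplicativity then lets us move $M$ through products of $A\ten\one$ with $\one\ten B$. Since these two elements commute in $\A\ten\B$, their images under $M$ must commute in $\A$.

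\emph{Step 1: multiplicativity on $\A\ten\one$.} For $A\in\A$ we have $(A\ten\one)^*(A\ten\one)=A^*A\ten\one$. By hypothesis,
   $$M\bigl((A\ten\one)^*(A\ten\one)\bigr)=A^*A=M(A\ten\one)^*M(A\ten\one)\;.$$
So the Cauchy-Schwarz inequality of Proposition \ref{CauchySchwartz} holds with equality for the element $A\ten\one$. The Multiplication Theorem \ref{MultThm} then gives, for every $X\in\A\ten\B$,
   $$M\bigl((A\ten\one)^*X\bigr)=A^*M(X)\And M\bigl(X^*(A\ten\one)\bigr)=M(X)^*A\;.$$

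\emph{Step 2: commutation.} Take $X=\one\ten B$, and replace $A$ by $A^*$ in the first identity. This gives $M(A\ten B)=A\,M(\one\ten B)$. Apply the second identity with $X=\one\ten B^*$ and take adjoints. This gives $M(A\ten B)=M(\one\ten B)\,A$, using that $M$ is $*$-preserving, since it is positive. Comparing the two expressions, $AM(\one\ten B)=M(\one\ten B)A$ for all $A\in\A$. Since $M$ maps into $\A$, it follows that $M(\one\ten B)\in\A\cap\A'$.

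\emph{Step 3: the factor case.} If $\A\cap\A'=\CC\one$, write $M(\one\ten B)=\th(B)\one_\A$. The map $\th:\B\to\CC$ is linear because $M$ is. It is positive because $M$ is positive, and $\th(\one)=1$ because $M(\one\ten\one)=\one$. Hence $\th$ is a state on $\B$ and (\ref{NoInformHeis}) follows.

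The only point needing care is Step 2: we must check that $M$ preserves adjoints, so that the two one-sided module identities can be combined. Otherwise the argument is a direct transcription of the no-cloning proof.
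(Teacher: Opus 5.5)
Your proof is correct and follows the paper's argument. The Multiplication Theorem applied to $A\ten\one$ gives $M(A\ten B)=A\,M(\one\ten B)=M(\one\ten B)\,A$, which yields centrality. In the factor case, $B\mapsto M(\one\ten B)$ is a unital positive functional times $\one_\A$. You are in fact slightly more careful than the paper: you check the Cauchy--Schwarz equality for $A\ten\one$ explicitly, and you note that identifying $M(\one\ten B^*)^*$ with $M(\one\ten B)$ requires $M$ to be $*$-preserving, which the paper uses silently.
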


\noindent
We note that (\ref{NoInformHeis}) implies (\ref{NoInform}),
since for all $\rho$ on $\A$ and all $B\in\B$:
   $$\bigl((\tr\ten\id)M^*\rho\bigr)(B)=\rho\bigl(M(\one\ten B)\bigr)
                          =\rho\bigl(\th(B)\one_\A\bigr)
                          =\th(B)\;.$$

\begin{proof}
As in the proof of the `no cloning' theorem we have by the multiplication
theorem for all $A\in\A$, $B\in\B$:
   $$M(\one\ten B)\cdot A=M(\one\ten B)M(A\ten\one)=M(A\ten B)\;.$$
But also,
   $$A\cdot M(\one\ten B)=M(A\ten\one)M(\one\ten B)=M(A\ten B)\;.$$
So $M(\one\ten B)$ lies in the center of $\A$.
If $\A$ is a factor, then $B\mapsto M(\one\ten B)$ is an operation
from $B$ to $\CC\cdot\one_\A$, i.e. a state on $B$ times $\one_\A$.
\qed
\end{proof}

\subsection{Random variables and von Neumann measurements}
Following the suggestion made in Section 4.2. (in particular case 2),
we define a {\it random variable} to be a *-homomorphism from
one algebra $\B$ to a (larger) algebra $\A$:
   $$\Heis j from \B to \A \;.$$
In the covariant (`Schr\"odinger') picture this describes the
operation $j^*$ of {\it restriction to} the subsystem $\B$:
   $$\Schr j from \A to \B \;.$$
An important case is when $\B=\C(\O)$ for some finite set $\O$:
then $j$ is to be viewed as an {\it $\O$-valued random variable}.
Let $\O=\{x_1,\ldots,x_n\}$.
Then $j(1_{\{x_i\}})$ is a projection, $P_i$ say, in $\A$,
with the properties that
   $$\son i P_i=\son i j(1_{\{x_i\}})=j(\one_\B)=\one_\A$$
and for $i\ne j$,
   $$P_iP_k=j(1_{\{x_i\}})j(1_{\{x_k\}})=j(1_{\{x_i\}}\cdot 1_{\{x_k\}})=0\;.$$
We interpret $P_i$ as the event `the random variable described by $j$
takes the value $x_i$'.
Note that $j$ can be written as
   $$j(f)=j\left(\son i f(x_i)1_{\{x_i\}}\right)=\son i f(x_i)P_i\;.$$
In particular, if $\O\subset\RR$, then $j$ defines a hermitian operator
   $$j(\id)=\son i x_i P_i=:X\;,$$
which completely determines $j$.

\begin{proposition}  
Let $\A$ be a finite-dimensional *-algebra with unit.
Then there is a one-to-one correnspondence between
injective *-homomorphisms
$j:\C(\O)\to\A$ for some finite $\O\subset\RR$ and self-adjoint
operators $X\in\A$, given by
   $$j(\id)=X\;.$$
\end{proposition}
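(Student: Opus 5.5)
We construct the correspondence in both directions and check that the two constructions are mutually inverse. From $j$ to $X$ it is the construction just described. Given an injective $*$-homomorphism $j:\C(\O)\to\A$ with $\O=\{x_1,\ldots,x_n\}\subset\RR$, put $P_i:=j(1_{\{x_i\}})$ and $X:=j(\id)=\son i x_iP_i$. Since $\id$ is a real function, $\id^*=\id$ in $\C(\O)$, and as $j$ is a $*$-homomorphism, $X^*=X$. Moreover $j$ is determined by $X$: every $f\in\C(\O)$ is a polynomial in $\id$ (Lagrange interpolation on the finite set $\O$), so $j(f)=f(X)$. Injectivity of $j$ gives $P_i=j(1_{\{x_i\}})\ne0$ for every $i$. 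Consequently $\O$ is exactly the spectrum of $X$: for each $i$ there is $\psi\ne0$ with $P_i\psi=\psi$, hence $X\psi=x_i\psi$, and conversely the spectrum is contained in $\{x_i\}$ because $\son i P_i=\one$ with the $P_i$ mutually orthogonal. Thus the pair $(\O,j)$ is recovered from $X$: $\O=\sp(X)$ and $j(f)=f(X)$. This gives injectivity of the correspondence.

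For the inverse direction, let $X\in\A$ be self-adjoint. By the spectral theorem in finite dimension, $X=\sum_{i=1}^n x_iP_i$ with distinct real eigenvalues $x_i$ and nonzero mutually orthogonal spectral projections $P_i$ summing to $\one$. The key point is that $P_i\in\A$, and not merely in the full operator algebra. We obtain this from the Lagrange interpolation formula $P_i=\prod_{k\ne i}(X-x_k\one)/(x_i-x_k)$, a polynomial in $X$ and $\one$, hence an element of $\A$ since $\A$ is a unital algebra. (Alternatively one could use the double commutant theorem, since $P_i\in\{X\}''$, but the polynomial argument is more direct.) Now set $\O:=\{x_1,\ldots,x_n\}$ and $j(f):=\son i f(x_i)P_i$. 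Linearity of $j$ and $j(\one)=\one$ are clear; multiplicativity follows from $P_iP_k=\d_{ik}P_i$; and $j(\bar f)=j(f)^*$ holds because each $P_i$ is self-adjoint. Injectivity holds because $j(f)=0$ forces $f(x_i)P_i=P_ij(f)=0$, hence $f(x_i)=0$ for every $i$, as $P_i\ne0$. Finally $j(\id)=X$ by construction.

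It remains to check that the two constructions are inverse to each other, which follows from the uniqueness established in the first paragraph. The main point requiring care is the precise meaning of ``one-to-one correspondence'', since $\O$ is not fixed in advance. Distinctness of the $x_i$ is automatic because they are points of a set, and injectivity of $j$ is exactly what forces $\O=\sp(X)$. Without injectivity, $\O$ could be enlarged by points $x$ with $j(1_{\{x\}})=0$, and bijectivity would fail. I would therefore state explicitly that the correspondence is between pairs $(\O,j)$ and operators $X$. The only non-formal ingredient is that the spectral projections lie in $\A$, and that is settled by the interpolation polynomial.
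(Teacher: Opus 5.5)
Your proof is correct and follows the paper's route: the Lagrange interpolation polynomials in $X$ place the spectral projections $P_i$ inside $\A$, and $j(f):=\sum_i f(x_i)P_i$ inverts the map $j\mapsto j(\mathrm{id})$. You are more careful than the paper, which only remarks that different $X$'s give different $j$'s; you also check that injectivity of $j$ forces $\Omega$ to equal the spectrum of $X$ and that $j(f)=f(X)$, which is what makes this a genuine bijection of pairs $(\Omega,j)$ (and your interpolation formula has the correct normalization $\prod_{k\ne i}(x_i-x_k)$, where the paper's $p(x_i)$ should read $p'(x_i)$).
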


\begin{proof}
If $j$ is a *-homomorphism $\C(\{x_1,\ldots,x_n\})\to\A$
with $x_1,\ldots,x_n$ real, then
   $$X:=j(\id)=\son i x_i j(1_{\{x_i\}})=:\son i x_iP_i$$
is a hermitian element of $\A$.
Conversely, if $X\in\A$ is hermitian, then let $x_1,\ldots,x_n$
be its eigenvalues.
Let $p:\CC\to\CC$ denote the polynomial
   $$p(x):=(x-x_1)\cdots(x-x_n)\;.$$
and let, for $i=1,\ldots,n$, the (Lagrange interpolation)
polynomial $p_i$ be given by
   $$p_i(x):={{p(x)}\over{(x-x_i)p(x_i)}}\;.$$
Then $p_i(x_k)=\d_{ik}p_k$, so we have
on the spectrum $\{x_1,\ldots,x_n\}$ of $X$:
   $$\son i p_i=1\And
      p_i\cdot p_k=\d_{ik}p_k\;.$$
It follows that the projections $P_i:=p_i(X)$, with $i=1,\ldots,n$, lie
in the algebra $\A$ and satisfy
   $$\son i P_i=\one\And P_iP_k=\d_{ik}P_k\;.$$
Hence, if we define
   $$j(f):=\son i f(x_i)P_i\;,$$
then $j$ is a *-homomorphism with the property that $j(\id)=X$.
Clearly, different $X$'s correspond to different $j$'s.
\qed
\end{proof}

\subsection{The joint measurement apparatus}
Let $X$ and $Y$ be self-adjoint elements of the *-algebra $\A$.
We consider $X$ and $Y$ as random variables taking values in the
spectra $\sp(X)$ and $\sp(Y)$.

\noindent
By a {\it joint measurement} $M^*$ of these random variables we mean an
operation that takes a state $\r$ on $\A$ as input,
and yields a probability distribution $\pi$ on $\sp(X)\times\sp(Y)$
as output,
in such a way that for all functions $f$ on $\sp(X)$, $g$ on $\sp(Y)$:
\begin{eqnarray*}
\r(f(X))&=&\sum_{x\in\sp(X)}\sum_{y\in\sp(Y)}\pi(x,y)f(x)\;;\cr
           \r(g(Y))&=&\sum_{x\in\sp(X)}\sum_{y\in\sp(Y)}\pi(x,y)g(y)\;.
\end{eqnarray*}
A contravariant formulation of these requirements is
\begin{eqnarray*}
               M(f\ten\one)&=&f(X)\;;\cr
              M(\one\ten g)&=&g(Y)\;.
\end{eqnarray*}

\begin{theorem}
If two random variables $X$ and $Y$ allow
a joint measurement operation, then they commute.
\end{theorem}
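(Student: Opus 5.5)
The plan is to imitate the proof of the `no cloning' theorem, using the Multiplication Theorem \ref{MultThm}. Let $M:\C(\sp(X))\ten\C(\sp(Y))\to\A$ be the joint measurement operation. By hypothesis $M(f\ten\one)=f(X)$ and $M(\one\ten g)=g(Y)$ for all functions $f$ on $\sp(X)$ and $g$ on $\sp(Y)$. The first step is to show that $M$ is multiplicative on the two subalgebras $\C(\sp(X))\ten\one$ and $\one\ten\C(\sp(Y))$.

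To do this, take any $f$ and note that $f\mapsto f(X)$ is a $*$-homomorphism; this is the functional calculus of the preceding Proposition, namely $f(X)=\son i f(x_i)P_i$. Hence
$$M\bigl((f\ten\one)^*(f\ten\one)\bigr)=M(|f|^2\ten\one)=|f|^2(X)=f(X)^*f(X)=M(f\ten\one)^*M(f\ten\one)\;.$$
So the Cauchy--Schwarz inequality of Proposition \ref{CauchySchwartz} holds with equality at $A=f\ten\one$. The Multiplication Theorem then gives, for every $B$ in the domain, $M\bigl((f\ten\one)^*B\bigr)=M(f\ten\one)^*M(B)$ and $M\bigl(B^*(f\ten\one)\bigr)=M(B)^*M(f\ten\one)$. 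Replacing $f$ by $\bar f$ in the first identity removes the adjoints, and applying the second identity to $B^*$ instead of $B$ does the same, so $M\bigl((f\ten\one)B\bigr)=f(X)M(B)$ and $M\bigl(B(f\ten\one)\bigr)=M(B)f(X)$.

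Now choose $f=\id$ and $B=\one\ten\id$, so that $M(f\ten\one)=X$ and $M(B)=Y$. The domain algebra $\C(\sp(X)\times\sp(Y))$ is commutative, so $(f\ten\one)B=B(f\ten\one)$. Therefore
$$XY=M\bigl((\id\ten\one)(\one\ten\id)\bigr)=M\bigl((\one\ten\id)(\id\ten\one)\bigr)=YX\;,$$
exactly as in the `no cloning' computation.

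The only point needing care is that the Multiplication Theorem is applied to the operation $M$, which is completely positive and unital by definition. Its hypothesis $T(A^*A)=T(A)^*T(A)$ must be checked using the homomorphism property of the functional calculus, which is where the Proposition on random variables is used. I expect no real obstacle beyond keeping track of the adjoints.
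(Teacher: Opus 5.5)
Your proof is correct and is essentially the paper's argument. The paper also checks $M\bigl((x\ten\one)^*(x\ten\one)\bigr)=X^2=M(x\ten\one)^*M(x\ten\one)$, applies the Multiplication Theorem to obtain $M(x\ten y)=XY$ and $M(x\ten y)=YX$, and concludes $XY=YX$ because the domain algebra is commutative; your extra step of proving the module property for general $f$ before specializing to $f=\id$ is harmless but not needed.
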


\begin{proof}
Let us denote by $x$ the identity function on $\sp(X)$,
and by $y$ that on $\sp(Y)$.
We apply the multiplication theorem on the measurement operation $M$,
which is supposed to exist.
Since
   $$M\bigl((x\ten\one)^*(x\ten\one)\bigr)
       =M(x^2\ten\one)=X^2=M(x\ten\one)^*M(x\ten\one)\;,$$
we have
   $$M\bigl((x\ten\one)^*(\one\ten y)\bigr)
       =M(x\ten\one)^*M(\one\ten y)=XY$$
and
   $$M\bigl((\one\ten y)^*(x\ten\one)\bigr)
       =M(\one\ten y)^*M(x\ten\one)=YX\;.$$
As $(x\ten\one)^*(\one\ten y)=x\ten y=(\one\ten y)^*(x\ten\one)$,
we have $XY=YX$.
\qed
\end{proof}

\section{Quantum novelties}
\label{sec:QuantNov}
In the previous section we saw certain strange limitations that
quantum operations are subject to.
Let us now look at the other side of the coin: some surprising possibilities.

\noindent
We leave treatment of the really sensational features
to other contributions in this volume,
such as very fast computation and secure cryptography.
Here we shall treat `teleportation' of quantum states
and `superdense coding'.

\subsection{Teleportation of quantum states}
Suppose that Alice wishes to send to Bob the quantum state $\rho$ of a qubit
over a (classical) telephone line.

\noindent
In Section \ref{noclassclon}\ (`No classical coding') we have seen that,
without any further tools, this is impossible.
If Alice were to perform measurements on the qubit, and tell the
results to Bob over the telephone, these would not enable Bob to
reconstruct the state $\rho$.

\noindent
However, suppose that Alice and Bob have been together in
the past, and that at that time they have created an entangled pair of qubits,
as introduced in Section \ref{decisive},
each taking one qubit with them.
It was discovered in 1993 by Bennett, Wootters, Peres and others,
that by making use of this shared entanglement,
Alice is indeed able to transfer her qubit to Bob.
Of course, she cannot avoid destroying the original state $\rho$
in the process;
otherwise Alice and Bob would have copied the state
$\rho$, which is impossible by Theorem \ref{nocloning} (`no cloning').
It is for this reason that the procedure is called `teleportation'.

\sn
We illustrate the procedure in a picture.


\begin{figure}
\centering
\includegraphics[height=3cm]{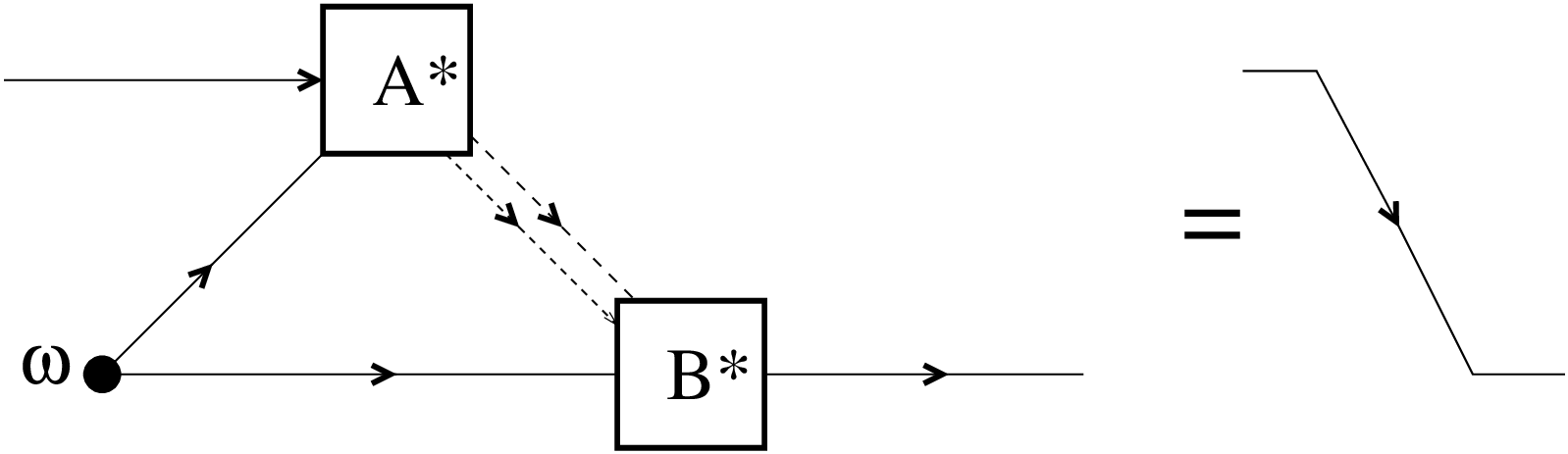}
\caption{Teleportation based on shared entanglement}
\label{TeleportS}
\end{figure}

\noindent
Here $\o$ is the fully entangled state $X\mapsto\inp\O{X\O}$ on $M_2\ten M_2$
(see the proof of Theorem \ref{Stinespring}, Stinespring's theorem.)

\sn
The procedure runs as follows.
Alice possesses two qubits, one from the entangled pair,
and one which she wishes to send to Bob.
She performs a von Neumann measurement on these two qubits along the
four {\it Bell projections}
\begin{eqnarray*}
Q_{00}&:=&
{1\over2}\left(\matrix{1&0&0&1\cr0&0&0&0\cr0&0&0&0\cr1&0&0&1\cr}\right)
\;;\qquad
Q_{01}:=
{1\over2}\left(\matrix{1&0&0&-1\cr0&0&0&0\cr0&0&0&0\cr-1&0&0&1\cr}\right)
\;;\cr
Q_{10}&:=&
{1\over2}\left(\matrix{0&0&0&0\cr0&1&1&0\cr0&1&1&0\cr0&0&0&0\cr}\right)
\;;\qquad
Q_{11}:=
{1\over2}\left(\matrix{0&0&0&0\cr0&1&-1&0\cr0&-1&1&0\cr0&0&0&0\cr}\right)\;.
\end{eqnarray*}
The operation performed by Alice has the contravariant description:
   $$A:\C_2\ten\C_2\to M_2\ten M_2:\quad A(e_i\ten e_j):=Q_{ij}\;,$$
The two bits Alice obtains in this way
--- $(i,j)$ say ---
she sends to Bob over the telephone.
He then takes his own qubit from the entangled pair,
and if $j=1$ performs the `phase flip' operation
   $$Z:\left(\matrix{\rho_{00}&\rho_{01}\cr\rho_{10}&\rho_{11}\cr}\right)
\mapsto\left(\matrix{\rho_{00}&-\rho_{01}\cr-\rho_{10}&\rho_{11}\cr}\right)
=\left(\matrix{1&0\cr0&-1\cr}\right)
\left(\matrix{\rho_{00}&\rho_{01}\cr\rho_{10}&\rho_{11}\cr}\right)
\left(\matrix{1&0\cr0&-1\cr}\right)\;,$$
and if $j=0$ he does nothing.
Then, if $i=1$ he performs the `quantum {\caps not}' operation
   $$X:\left(\matrix{\rho_{00}&\rho_{01}\cr\rho_{10}&\rho_{11}\cr}\right)
\mapsto\left(\matrix{\rho_{11}&\rho_{10}\cr\rho_{01}&\rho_{00}\cr}\right)
=\left(\matrix{0&1\cr1&0\cr}\right)
\left(\matrix{\rho_{00}&\rho_{01}\cr\rho_{10}&\rho_{11}\cr}\right)
\left(\matrix{0&1\cr1&0\cr}\right)\;,$$
and if $i=0$ he does nothing.
In the Heisenberg picture,
the result of Bob's actions is the operation
   $$B:M_2\to\C_2\ten\C_2\ten M_2:\quad
       M\mapsto M\oplus\s_3M\s_3\oplus\s_1M\s_1\oplus\s_2M\s_2\;,$$
where $\s_1:=\left(\matrix{0&1\cr1&0\cr}\right)$,
$\s_2:=\left(\matrix{0&-i\cr i&0\cr}\right)$, and
$\s_3:=\left(\matrix{1&0\cr0&-1\cr}\right)$,
are Pauli's spin matrices.
Bob ends up with a qubit in exactly the same state as Alice wanted
to send.

\noindent
We formulate this result in the Heisenberg picture.

\begin{proposition}\label{PropTeleport}
The state $\o$ and the operations $A$ and $B$ described above satisfy
$$\displaystyle (\id_{M_2}\ten\o)\circ(A\ten\id_{M_2})\circ B=\id_{M_2}\;.$$
\end{proposition}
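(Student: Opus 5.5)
The plan is a direct computation in the Heisenberg picture: apply the composite map to an arbitrary $M\in M_2$ and check that it returns $M$. Since all maps are linear, it suffices to track a general $M$. By definition, $B(M)=\sum_{i,j}e_i\ten e_j\ten \s_{ij}M\s_{ij}$, where $\s_{00}=\one$, $\s_{01}=\s_3$, $\s_{10}=\s_1$, $\s_{11}=\s_2$. I will take $\s_{11}=\s_2$, which matches the order in the definition of $B$; since $\s_2=i\s_1\s_3$, the phase is irrelevant under conjugation, so this is exactly Bob's ``flip then \textsc{not}'' correction. Applying $A\ten\id_{M_2}$ replaces each $e_i\ten e_j$ by the Bell projection $Q_{ij}$. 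This gives the element
$$\sum_{i,j}Q_{ij}\ten \s_{ij}M\s_{ij}\in M_2\ten M_2\ten M_2,$$
where the first two factors are Alice's qubits (the one to be sent, then her half of the pair) and the third is Bob's qubit. Finally $\id\ten\o$ evaluates the fully entangled state $\o$ on the last two factors, Alice's half of the pair and Bob's qubit, and leaves an element of $M_2$ acting on the first factor.

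So the claim reduces to the operator identity
$$\sum_{i,j}(\id\ten\o)\bigl(Q_{ij}\ten\s_{ij}M\s_{ij}\bigr)=M.$$
I will verify it by testing against vectors $\psi,\chi\in\CC^2$ in the first factor. The left side's matrix element is then $\sum_{ij}\inp{\psi\ten\O}{(Q_{ij}\ten \s_{ij}M\s_{ij})(\chi\ten\O)}$, with the appropriate ordering of tensor factors. Each Bell projection is $Q_{ij}=\ket{\beta_{ij}}\bra{\beta_{ij}}$, where $\beta_{ij}=(\one\ten\s_{ij}^{\,\prime})\O$ for a suitable Pauli matrix $\s_{ij}^{\,\prime}$ (possibly transposed). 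The key step is the standard teleportation identity
$$(\bra{\beta_{ij}}\ten\one)(\chi\ten\O)=\tfrac12\,\s_{ij}^{\,\prime\prime}\chi,$$
which follows from the transpose trick $(\one\ten A)\O=(A^t\ten\one)\O$, in the spirit of (\ref{Omegatrace}). With it, each term becomes $\frac14\inp{\s''\psi}{\s M\s\,\s''\chi}$. Matching the Pauli labels should give $\s\s''\propto\one$, so every one of the four terms equals $\frac14\inp\psi{M\chi}$, and the four terms sum to $\inp\psi{M\chi}$.

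The main obstacle is the bookkeeping. One has to determine which Pauli matrix corresponds to which $Q_{ij}$, with the basis ordering $e_{11},e_{12},e_{21},e_{22}$ and the index conventions $0,1$, and then confirm that Bob's correction $\s_{ij}$ cancels it up to phase. Should the labels not match up neatly, I will fall back on brute force. By linearity it suffices to check $M\in\{\one,\s_1,\s_2,\s_3\}$, and $M=\one$ is immediate since $\sum Q_{ij}=\one$ and $\o(\one)=1$. For $M=\s_k$, the relations (\ref{sigmamult}) give $\s_{ij}\s_k\s_{ij}=\pm\s_k$. One then only needs $\sum_{ij}\pm(\id\ten\o)(Q_{ij}\ten\s_k)=\s_k$, and this is a $4\times4$ computation: express $Q_{ij}$ in the Pauli basis, $Q_{ij}=\frac14\sum_{a}c^{ij}_a\,\s_a\ten\s_a$ (up to transpose), and use $\o(\s_a\ten\s_b)=\frac12\tr(\s_a\s_b^t)$ together with the trace formula (\ref{Omegatrace}).
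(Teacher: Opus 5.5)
Your proposal is correct. Up to the element $\sum_{i,j}Q_{ij}\ten\s_{ij}M\s_{ij}$ it coincides with the paper's computation; the two routes differ only in how $\id\ten\o$ is evaluated.

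The paper writes this element as an explicit $8\times8$ matrix, reads it as a $2\times2$ array of $4\times4$ blocks, and applies $\o(Y)=\frac12\bigl(Y_{00,00}+Y_{00,11}+Y_{11,00}+Y_{11,11}\bigr)$ to each block. You instead pass to vectors, using the Bell-vector form of the $Q_{ij}$ and the transpose trick.

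The label matching you left as ``should'' does work. The projections $Q_{00},Q_{01},Q_{10},Q_{11}$ project onto $\O$, $(\one\ten\s_3)\O$, $(\one\ten\s_1)\O$ and $(\one\ten\s_1\s_3)\O=-i(\one\ten\s_2)\O$. Since $\one,\s_3,\s_1,\s_1\s_3$ are real, your $\s''_{ij}$ equals $\s_{ij}$ up to a phase. Because the Pauli matrices are self-adjoint involutions, each of the four terms is then exactly $\frac14\inp\psi{M\chi}$.

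The paper's route needs nothing beyond matrix multiplication, but it is opaque and error-prone. In its displayed $8\times8$ matrix, the entry in the third row of the upper right $4\times4$ block should be $m_{10}$, not $m_{01}$. This is harmless only because $\o$ never reads that entry.

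Your route gains three things:
\begin{itemize}
\item It explains the mechanism: outcome $(i,j)$ leaves Bob's qubit in the state $\s_{ij}\rho\s_{ij}$, which his correction undoes.
\item It proves the stronger branchwise identity $(\id\ten\o)(Q_{ij}\ten\s_{ij}M\s_{ij})=\frac14M$ for each $(i,j)$. So each outcome has probability $\frac14$ whatever the input, and Bob recovers the input in every branch.
\item It carries over directly to $d$-level teleportation with a unitary error basis.
\end{itemize}
Your Pauli-basis fallback would also work, but it is essentially the paper's brute-force check done in another basis.
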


\begin{proof}
We just calculate for $M\in M_2$:
\begin{eqnarray*}
             M &\mapsby{B}& M\oplus\s_3M\s_3\oplus\s_1M\s_1\oplus\s_2M\s_2\cr
               &\mapsby{A\ten\id}&
  (Q_{00}\ten M)+(Q_{01}\ten\s_3M\s_3)+(Q_{10}\ten\s_1M\s_1)
 +(Q_{11}\ten\s_2M\s_2)\cr
  &=&{1\over2}\left(\matrix{M+\s_3M\s_3&0&0&M-\s_3M\s_3\cr
                           0&\s_1M\s_1+\s_2M\s_2&\s_1M\s_1-\s_2M\s_2&0\cr
                           0&\s_1M\s_1-\s_2M\s_2&\s_1M\s_1+\s_2M\s_2&0\cr
                           M-\s_3M\s_3&0&0&M+\s_3M\s_3\cr}\right)\cr
  &=&\left(\matrix{m_{00}&0&0&0&|&0&0&0&m_{01}\cr
                  0&m_{11}&0&0&|&0&0&m_{10}&0\cr
                  0&0&m_{11}&0&|&0&m_{01}&0&0\cr
                  0&0&0&m_{00}&|&m_{01}&0&0&0\cr
                  -&-&-&-&|&-&-&-&-\cr
                  0&0&0&m_{10}&|&m_{11}&0&0&0\cr
                  0&0&m_{01}&0&|&0&m_{00}&0&0\cr
                  0&m_{01}&0&0&|&0&0&m_{00}&0\cr
                  m_{10}&0&0&0&|&0&0&0&m_{11}\cr}\right)\cr
  &\mapsby{\id\ten\o}&\left(\matrix{m_{00}&m_{01}\cr m_{10}&m_{11}\cr}\right)
  =M\;.\qquad\qquad\qquad\qquad\qquad\qquad\qquad\qquad\qquad\qed
\end{eqnarray*}

\end{proof}

\goodbreak\noindent
Teleportation has been carried out succesfully in the lab
by Zeilinger et al. in Vienna in 1997 using polarized photons,
and by other experimenters using different techniques later.

\noindent
For the sake of such experiments explicit operations have been developed that
form the `building blocks' of the diversity of quantum operations
needed.
For example the operation performed by Alice to prepare the teleportation
of a qubit can be decomposed into an interaction and a measurement.
Let $j$ be the ordinary measurement operation of a qubit:
 $$j:\C_2\to M_2:
     \qquad (f_0,f_1)\mapsto\left(\matrix{f_0&0\cr0&f_1}\right)\;.$$
Let $H$ denote the {\it Hadamard gate},
which acts on states or observables by multiplication on the left and on the
right by the {\it Hadamard matrix}
${1\over\sqrt2}\left(\matrix{1&1\cr1&-1}\right)$,
and let $C$ denote the {\it controlled \caps not} gate
$M_2\ten M_2\to M_2\ten M_2$ which sandwiches a matrix with
   $$\left(\matrix{1&0&0&0\cr
                   0&0&0&1\cr
                   0&0&1&0\cr
                   0&1&0&0\cr}\right)\;.$$
The operation $C$ performs a {\caps not} operation on the first qubit
provided that the second is a 1.
In diagrams:

\begin{figure}
\centering
\includegraphics[height=2.5cm]{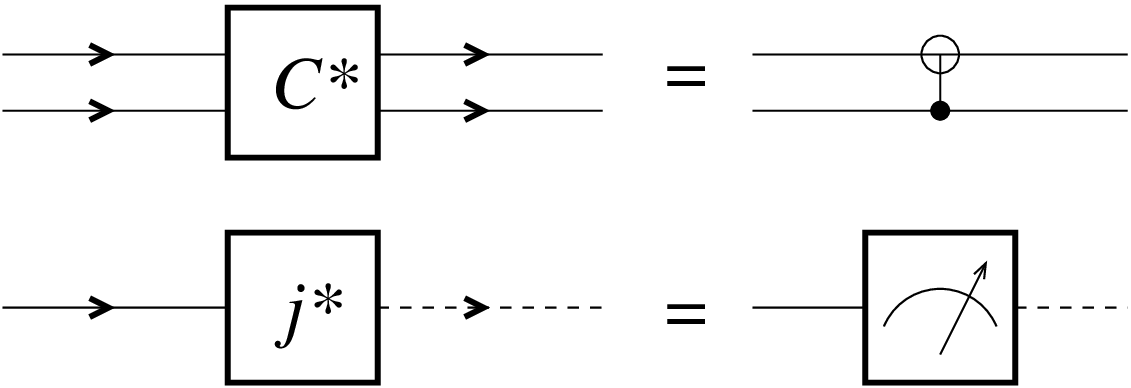}
\caption{Conventional signs used for the $C$ and $j$ operations}
\end{figure}

\noindent
Check that, using the above building blocks,
the procedure of quantum teleportation can be charted as in Figure
\ref{FigTeleport}.

\begin{figure}
\centering
\includegraphics[height=5cm]{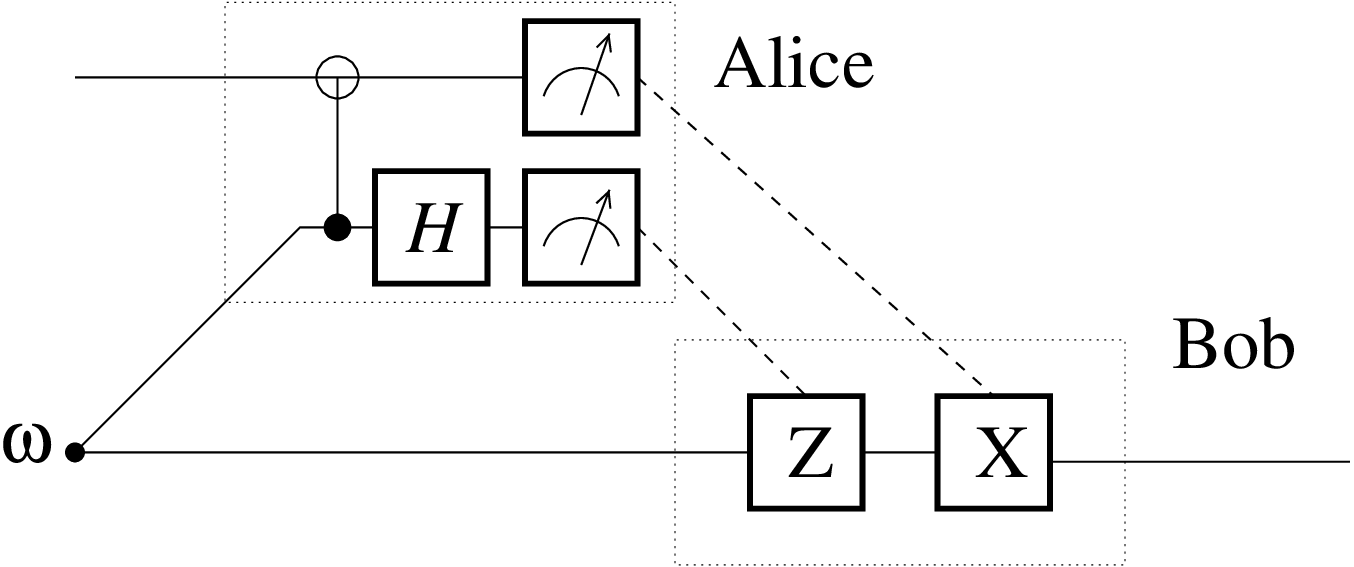}
\caption{More detailed scheme of teleportation}
\label{FigTeleport}
\end{figure}

\subsection{Superdense coding}
We have seen that Alice can `teleport' a qubit using two classical bits,
given a pre-entangled qubit pair.
A kind of converse is also possible:
Bob can communicate two classical bits to Alice by sending her a single
qubit,
again given a shared pre-entangled qubit pair.
(We have interchanged the roles of Alice and Bob here because it turns out
that in that case they can continue using exactly the same equipment
as they used for teleportation!)

\begin{figure}
\centering
\includegraphics[height=3cm]{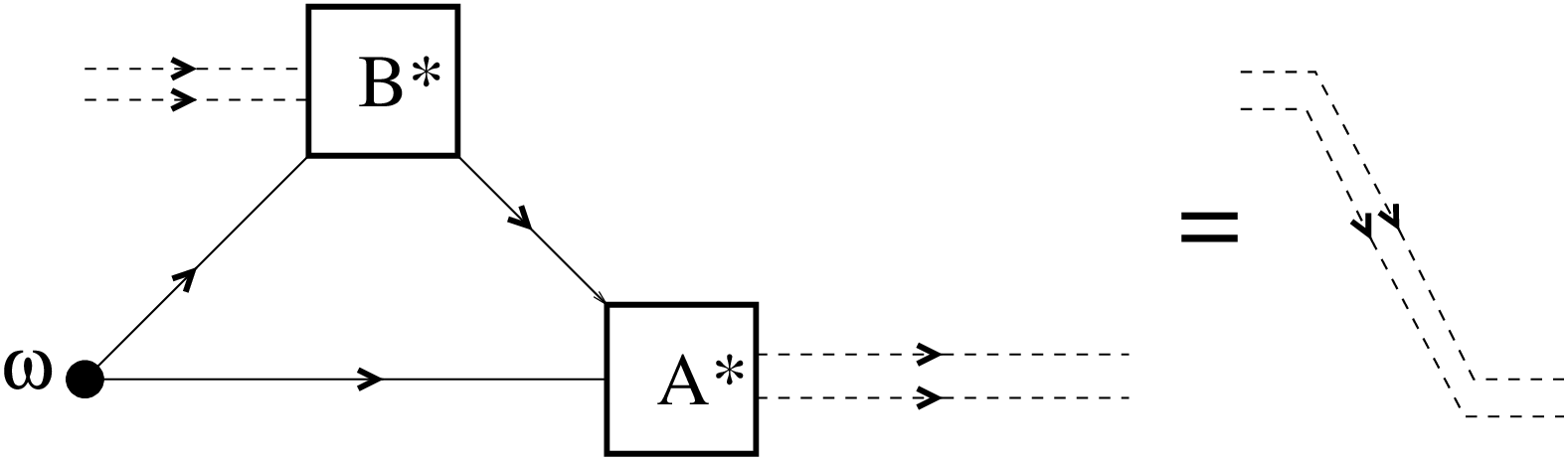}
\caption{Superdense coding: two bits in a single photon}
\label{FigSuperdense}
\end{figure}

\begin{proposition}\label{Superdense}
Taking $\o$, $A$ and $B$ as in Proposition \ref{PropTeleport},
we have
   $$(\id_{\C_2\ten\C_2}\ten\o)\circ(B\ten\id_{M_2})\circ A
            =\id_{\C_2\ten\C_2}\;.$$
\end{proposition}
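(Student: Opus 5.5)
Since $\C_2\ten\C_2$ is four-dimensional and spanned by the minimal projections $e_k\ten e_l$ ($k,l\in\{0,1\}$), and all maps involved are linear, the plan is to compute the composite on each $e_k\ten e_l$ and show it returns $e_k\ten e_l$. The domain ordering must be tracked with care. $A$ maps $\C_2\ten\C_2$ into $M_2\ten M_2$ (Bob's qubit and Alice's half of the entangled pair). Then $B\ten\id_{M_2}$ maps the first tensor factor $M_2$ into $\C_2\ten\C_2\ten M_2$, giving an element of $\C_2\ten\C_2\ten M_2\ten M_2$. Finally $\id\ten\o$ contracts the two $M_2$ factors with the fully entangled state $\o(X)=\inp\O{X\O}$, where $\O=(e_0\ten e_0+e_1\ten e_1)/\sqrt2$.

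The first step is to write the Bell projections as $Q_{ij}=(\s\ten\one)Q_{00}(\s\ten\one)$ with $\s=\one,\s_3,\s_1,\s_2$ for $(i,j)=(0,0),(0,1),(1,0),(1,1)$. A direct check on the matrices suffices: $(\s_3\ten\one)\O$, $(\s_1\ten\one)\O$ and $(\s_2\ten\one)\O$ are, up to phase, the vectors $(e_{00}-e_{11})/\sqrt2$, $(e_{10}+e_{01})/\sqrt2$ and $(e_{10}-e_{01})/\sqrt2$. Writing $\O_{ij}$ for these vectors, $Q_{ij}=\ket{\O_{ij}}\bra{\O_{ij}}$ and $Q_{00}=\ket\O\bra\O$. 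Then $A(e_k\ten e_l)=Q_{kl}$, and applying $B$ in the first factor gives
\[
(B\ten\id)(Q_{kl})=\sum_{i,j} e_i\ten e_j\ten \o_{ij}\!\left(\,\cdot\,\right),
\]
where the $(i,j)$ component is $\s^{(ij)}\ten\one$ sandwiching, that is, $(\s^{(ij)}\ten\one)Q_{kl}(\s^{(ij)}\ten\one)$. This uses that $B(M)=\bigoplus_{ij}\s^{(ij)}M\s^{(ij)}$ acts only on the first qubit and is linear, so it extends to $M_2\ten M_2$ slotwise.

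Applying $\o$ to the $(i,j)$ component gives $\inp\O{(\s^{(ij)}\ten\one)Q_{kl}(\s^{(ij)}\ten\one)\O}=|\inp{\O_{kl}}{\O_{ij}}|^2$. This uses self-adjointness of the Pauli matrices and $(\s^{(ij)}\ten\one)\O=$ phase$\cdot\O_{ij}$. Since the four Bell vectors are orthonormal, this equals $\d_{ik}\d_{jl}$, so the composite sends $e_k\ten e_l\mapsto e_k\ten e_l$, which is the claim.

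The main obstacle is bookkeeping rather than ideas. One must confirm that $Q_{01}$ corresponds to $\s_3$, $Q_{10}$ to $\s_1$ and $Q_{11}$ to $\s_2$ with the same labelling $B$ uses, and that the first tensor leg of $A$'s output is the one fed into $B$ while the second is contracted with Bob's half via $\o$. I would verify the four vectors $(\s\ten\one)\O$ explicitly against the given matrices; phases are harmless because they cancel in the rank-one projections.
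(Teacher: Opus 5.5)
Your proof is correct. With the paper's conventions one does have $Q_{ij}=(\s^{(ij)}\ten\one)Q_{00}(\s^{(ij)}\ten\one)$ for $\s^{(00)}=\one$, $\s^{(01)}=\s_3$, $\s^{(10)}=\s_1$, $\s^{(11)}=\s_2$. This is exactly the order in which the Pauli conjugations appear in $B(M)=M\oplus\s_3M\s_3\oplus\s_1M\s_1\oplus\s_2M\s_2$. Hence the composite sends $e_k\ten e_l$ to $\sum_{i,j}\inp{\O_{ij}}{Q_{kl}\O_{ij}}\,e_i\ten e_j=e_k\ten e_l$.

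The paper does not prove this proposition; it is left as an exercise. The only model it offers is the proof of Proposition~\ref{PropTeleport}, which writes out $(A\ten\id)\circ B(M)$ as an explicit $8\times 8$ block matrix and then applies $\id\ten\o$.

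Your route replaces such an entrywise computation with two structural facts. First, the Bell projections are the Pauli conjugates of $Q_{00}=\ket\O\bra\O$, indexed consistently with Bob's operations. Second, the four Bell vectors are orthonormal. What this buys:
\begin{itemize}
\item It shows that the composite is the classical operation whose transition probability from Bob's bits $(i,j)$ to Alice's outcome $(k,l)$ is $\inp{\O_{ij}}{Q_{kl}\O_{ij}}=\delta_{ik}\delta_{jl}$. That is, Bob's encoding prepares the four orthogonal Bell states and Alice's measurement distinguishes them perfectly.
\item It explains why the labellings of $A$ and $B$ must match: a mismatch would yield a permutation rather than the identity.
\item The same identity gives a short proof of Proposition~\ref{PropTeleport} too. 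Each of the four terms $Q_{ij}\ten\s^{(ij)}M\s^{(ij)}$ contributes $\frac14 M$ after applying $\id\ten\o$, because $(\id\ten\o)(Q_{00}\ten N)=\frac14 N$ and $(\s^{(ij)})^2=\one$.
\end{itemize}
The brute-force computation needs no identification of the Bell vectors, but it hides all of this.

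One cosmetic point: in your display for $(B\ten\id)(Q_{kl})$, the symbol $\o_{ij}(\cdot)$ is a leftover placeholder and clashes with the state $\o$. It should read $(\s^{(ij)}\ten\one)Q_{kl}(\s^{(ij)}\ten\one)$, as your next sentence already says.
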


\noindent
We leave the proof as an exercise.


%
%

%

\begin{thebibliography}{99.}



\bibitem{Aspect}
Alain Aspect, Jean Dalibard, G\'erard Roger: Experimental test of
Bell's inequalities using time-varying analysers. Phys. Rev. Lett.
\textbf{49} (1982), 1804--1807.

\bibitem{Bell}
John S. Bell: On the Einstein-Podolsky-Rosen paradox. Physics,
\textbf{1}, 195--200, 1964.

\bibitem{Bohm}
David Bohm: A suggested interpretation of quantum theory in terms of
`hidden variables'. Phys. Rev. \textbf{85} (1952), 180--193.

\bibitem{Bruss}
Dagmar Bruss: Quantum cryptography.
This Volume.

\bibitem{Die}
Dennis Diecks: Communication by EPR devices. Phys. Lett. A
\textbf{92} (1982), 271--272.

\bibitem{Kempe}
Julia Kempe: Quantum computation.
This Volume.

\bibitem{Kraus}
Karl Kraus: States, Effects and Operations. Lecture Notes in
Physics, Vol. \textbf{190} Springer-Verlag, Berlin 1983.

\bibitem{KuMa}
Burkhard~K\"ummerer, Hans~Maassen: Elements of Quantum Probability.
In: Quantum Probability Communications X, pp. 73--100.

\bibitem{Nelson}
Edward Nelson:
Dynamical theories of Brownian motion.
Princeton University Press,
Princeton, New Jersey, 1967.


\bibitem{NiC} Michael A. Nielsen, Isaac L. Chuang:
\textit{Quantum Computation and Quantum Information}, Cambridge
University Press, 2000.

\bibitem{Petz} Denez Petz:
Algebraic Methods for Quantum Mechanics, This Volume.


\bibitem{WoZ}
William K. Wootters, Wojciech H. Zurek: A single quantum cannot be
cloned. Nature \textbf{299} (1982), 802--803.

\end{thebibliography}
%

\end{document}